\documentclass[11pt]{article} 
 
\title{Modelling Extremal Dependence for Operational Risk by a Bipartite Graph} 
 
\author{Oliver Kley\thanks{Center for Mathematical Sciences, Technische Universit\"at M\"unchen,  Boltzmannstr. 3, 85748 Garching, Germany, 
email: oliver.kley@tum.de,\,cklu@tum.de}
\and Claudia Kl\"uppelberg\footnotemark[1]
\and Sandra Paterlini\thanks{Department of Management and Economics, University of Trento, Italy, 
email: sandra.paterlini@ebs.edu, sandra.paterlini@unitn.it}}

\usepackage{rotating}
\usepackage{amssymb}
\usepackage{amsfonts}
\usepackage{amsmath}
\usepackage{amsthm}
\usepackage[authoryear]{natbib}
\usepackage{graphicx}
\usepackage{caption}
\usepackage{subfigure}
\usepackage[usenames, dvipsnames]{xcolor}
\usepackage{verbatim}
\usepackage{dsfont}
\usepackage{xcolor}
\usepackage[utf8]{inputenc}
\usepackage{hyperref}
\usepackage{comment}
 \usepackage{tikz}
\usepackage{chngcntr}
\usepackage{csvsimple}
 \usepackage{graphicx}
 \usepackage{mathptmx}      

\usetikzlibrary{positioning,chains,fit,shapes,calc}

\textwidth16cm
\textheight24cm
\topmargin-2cm
\oddsidemargin0cm
\evensidemargin0cm
\frenchspacing


\numberwithin{equation}{section}

\newtheorem{theorem}{Theorem}[section]
\newtheorem{lemma}[theorem]{Lemma}
\newtheorem{remark}[theorem]{Remark}
\newtheorem{example}[theorem]{Example}
\newtheorem{proposition}[theorem]{Proposition}
\newtheorem{definition}[theorem]{Definition}

\newtheorem{corollary}[theorem]{Corollary}
\newtheorem{fig}[theorem]{Figure}

\newcommand{\bthe}{\begin{theorem}}
\newcommand{\ethe}{\end{theorem}}

\newcommand{\ben}{\begin{enumerate}}
\newcommand{\een}{\end{enumerate}}

\newcommand{\bit}{\begin{itemize}}
\newcommand{\eit}{\end{itemize}}

\newcommand{\beq}{\begin{equation}}
\newcommand{\eeq}{\end{equation}}

\newcommand{\ble}{\begin{lemma}}
\newcommand{\ele}{\end{lemma}}

\newcommand{\bde}{\begin{definition}\rm}
\newcommand{\ede}{\halmos\end{definition}}

\newcommand{\bco}{\begin{corollary}}
\newcommand{\eco}{\end{corollary}}

\newcommand{\bpr}{\begin{proposition}}
\newcommand{\epr}{\end{proposition}}

\newcommand{\brem}{\begin{remark}\rm}
\newcommand{\erem}{\end{remark}}

\newcommand{\bproof}{\begin{proof}}
\newcommand{\eproof}{\end{proof}}

\newcommand{\bexam}{\begin{example}\rm}
\newcommand{\eexam}{\end{example}}

\newcommand{\bfi}{\begin{fig}}
\newcommand{\efi}{\end{fig}}

\newcommand{\btab}{\begin{tab}}
\newcommand{\etab}{\end{tab}}

\newcommand{\beao}{\begin{eqnarray*}}
\newcommand{\eeao}{\end{eqnarray*}\noindent}

\newcommand{\beam}{\begin{eqnarray}}
\newcommand{\eeam}{\end{eqnarray}\noindent}

\newcommand{\barr}{\begin{array}}
\newcommand{\earr}{\end{array}}

\newcommand{\bdis}{\begin{displaymath}}
\newcommand{\edis}{\end{displaymath}\noindent}


\def\S{{\mathbb S}}
\def\P{{\mathbb P}}
\def\E{{\mathbb E}}
\def\R{{\mathbb R}}

\newcommand{\pr}[1]{\P\left[#1\right]}

\def\1{\mathds{1}}

\newcommand{\stv}{\stackrel{v}{\rightarrow}}

\newcommand{\al}{{\alpha}}

\newcommand{\ga}{{\gamma}}

\newcommand{\MRV}{{\mbox{\rm MRV}}}

\newcommand{\VaR}{{\rm VaR}}

\newcommand{\ov}{\overline}
\newcommand{\wh}{\widehat}

\newcommand{\CoTE}{{\rm CoTE}}

\newcommand{\ET}{{$ET$}}
\newcommand{\BL}{{$BL$}}

\newcommand{\halmos}{\quad\hfill\mbox{$\Box$}}  

\newcommand{\red}{\textcolor{red}}

\allowdisplaybreaks[4]

\hyphenation{}

\begin{document}


\maketitle

\begin{abstract}
We introduce a  statistical model for operational losses based on heavy-tailed distributions and bipartite graphs, which  captures the event type and business line structure of operational risk data. The model explicitly takes into account the Pareto tails of losses and the heterogeneous dependence structures between them. 
We then derive estimators for individual as well as aggregated tail risk, measured in terms of  Value-at-Risk and Conditional-Tail-Expectation for very high confidence levels, and provide also an asymptotically full capital allocation method.
Estimation methods for such  tail risk measures and capital allocations are also proposed and tested on simulated data. 
Finally, by having access to real-world operational risk losses from the Italian banking system,  we show that even with a small number of observations, the proposed estimation methods produce reliable estimates, and that quantifying dependence by means of the empirical network  has a big  impact on estimates at both individual and aggregate level, as well as for capital allocations.
\end{abstract}

\noindent
{\em Keywords:}
Bipartite graph; Estimation; Expected Shortfall; Extremal dependence; Operational Risk; Quantile risk measure; Value-at-Risk


\section{Introduction}\label{introduction}

Operational risk is one of the core risks for the banking system with potentially dangerous spillover effects for the entire financial and economic sector. The LIBOR scandal, the rogue trader from Soci\'{e}t\'{e} G\'{e}n\'{e}rale, the 9/11 terrorist attack, the Madoff fraud are all by now classical examples of operational risk losses of extremely large magnitude. 
In 2017, the ten largest operational losses worldwide exceeded \$11.6 billion as reported in \citet{topten}. 
The largest loss was caused by fraudulent transactions at the Brazilian development bank totalling \$2.52 billion. 
Secondly, employees at the Shoko Chukin Bank in Japan improperly granted \$2.39 billion of loans by falsifying approval
documents. In the third place, the U.S. Securities and Exchange Commission brought charges against the Woodbridge Group of Companies with running a \$1.22 billion Ponzi scheme.
The awareness of their potential threat to the financial system and the expectation of potentially even larger ones, such as from IT failures or cyber attacks, has prompted regulators to introduce further regulations as well as to improve the existing ones in an attempt to set up  tools to better monitor and control losses, while setting aside adequate capital provisions. 
For example, in 2015 the European Banking Authority (EBA)  released the  final draft regulatory standard for the institutions that are allowed to use Advanced Measurement Approaches (AMA) for operational risk in \citet{EBA2015}, which typically are the largest banks. 
The document \citet{EBA2014}  on common procedures and methodologies for the Supervisory Review and Evaluation Process (SREP) 
better clarifies many aspects related to the definition and measurements of operational losses, while setting up common principles for  regulation. There is a clear intent to promote further comparability across banks, avoid spurious differences due to modelling and measurement choices as well as to balance-off the need to have  models that are understandable, while still capable of providing a realistic picture of the operational risk within the banking system, thereby discarding too simplistic assumptions (e.g. Gaussian copulas to capture dependence).

In March 2016,  the Basel Commitee  provided  guidelines for the new Standardized Approach (SA) \citet{BCBS2013} and \citet{BCBS2014},  with the final intent to find a better balance between simplicity, comparability and sensitivity across banks. Such an approach moves away from sophisticated statistical modelling, and will become the regulatory standard from January 2022 onwards; cf. \citet{Basel2017} and \citet[Section 4]{EBA2017}.

In spite of the regulatory discussion in recent years, the largest banks have invested resources in setting up operational risk departments to better analyze the sources of risks, define their risk appetite, and improve their risk management systems as well as developing possible hedging or insurance strategies for the often rare but  extremely large losses that could destabilize the banks and the entire system. This strategy is also supported by \citet{KPMG}, who criticise the new SA approach as it is ``likely to reduce the incentive for banks to strengthen their operational risk management''. They continue with ``Banks should ... focus on models that support decision-making and the running of the business.''

Moreover,  some institutions,  such as  Credit Suisse, trade securities related to the extreme tail-risk of their operational risk losses, requiring them to be able to adequately price such instruments. Most likely, even if the regulatory framework will change, banks will keep investing resources into adequate operational risk quantification, as such losses might be a thread for the survival of the institutions with potential catastrophic effects on the entire system. Furthermore, there is a clear understanding of the need of setting up appropriate pricing, hedging and insurance strategies to adequately sell and share such risks.    

This motivates, in spite of the simple SMA required for regulatory purposes, the development of realistic models for operational risk for better understanding and quantitative assessment. Although each company may now develop its own operational risk model, independent of principles of the former AMA approach, we remain with its basic structure. This is for comparison with previous studies, but also since the model has been developed for good reasons. The obvious features of operational losses data are the following.
They are typically characterized by high frequency/low severity and   low frequency/high severity losses\red{;}Extreme Value Theory and Pareto-like tail distributions are possibly the most widely used tools  to model the marginal distribution of  operational risk losses and to provide reliable estimates for tail-related risk measures, such as Value-at-Risk and Conditional-Tail-Expectation at very high confidence level; a (non-exhaustive) list of references includes books like \citet{Shev2, EKM1997, McNeilFreyEmbrechts2005, Shev}, and the articles \citet{BockerKluppelberg2010, ChavezDemoulinEmbrechtsNeslehova2006, ChavezDemoulinetal2015}). 
 Since the beginning of operational risk research, the topic of measuring dependence among operational losses and explictly taking it into account for risk-capital estimates has received large attention, also because the regulatory assumption of perfect correlation is considered too stringent  (\citet{BrechmannCzadoPaterlini2014, CopeAntonini2008, DallaValleFantazziniGiudici2008, Frachotetal2004b, MittnikPaterliniYener2011, RachediFantazzini2009}). 
 Finally, being capable of  setting up monitoring tools to detect, which business lines are typically most affected by operational risk losses, is high on the agenda, especially for the industry in order to develop effective \red{risk} management strategies within an institution.

Operational losses are typically classified in a matrix of 56 risk classes (seven event types ($ET$) $\times$ eight business lines ($BL$)\footnote{Business lines: 1. Corporate Finance, 2. Trading and Sales, 3. Retail Banking, 4. Commercial Banking, 5. Payment and Settlement, 6. Agency and Custody, 7. Asset Management, 8. Retail Brokerage. Event types: 1. Internal Fraud, 2. External Fraud, 3. Employment Practices \& Workplace Safety, 4. Clients, Products \& Business Practices, 5. Damage to Physical Assets, 6. Business Disruption \& System Failures, 7. Execution, Delivery \& Process Management.}; see \citet{Basel2006}).  

Data are typically scarce, with large numbers of zero losses for some cells 
and short time series,  characterized by few heavy or extreme tails. 
Losses in different $BL$s or $ET$s can also show heterogeneous tail dependence (i.e. large losses occur jointly and bivariate extreme dependencies occur with different probabilities). 
According to \citet{Basel2009}, before 2010 most banks have focused on using bivariate copulas to capture dependencies among losses, with a special focus on Gaussian copulas. 
However, due to the regulation \citet{EBA2015} in 2015, and in recognition of the inability of Gaussian copulas to capture tail dependence,  which  might have a strong direct impact on the capital estimates, different copula families, such as Archimedian and Student-$t$, have become popular. 
For instance, \citet{BrechmannCzadoPaterlini2014} have introduced a 7- and 8-dimensional flexible multivariate model that relies on different copula families, pointing out empirically  the diversification benefit in terms of risk capital reduction, when explicitly modelling dependence. 
Such empirical work  provides insights along the lines of  the 2015 introduced regulation for AMA banks (\citet{EBA2015}), which explicitly requires to replace  
Gaussian copulas in favor of alternative dependence modelling tools such as Student-$t$ copulas, which can detect tail dependence.
 
In fact, \citet{BrechmannCzadoPaterlini2014}  have shown that a flexible model of vine copulas with individual Student-$t$ copulas, which imply different tail dependencies in upper and lower tails, is often preferred to alternative elliptical or Archimedian specification. 
However, their approach does not explictly take into account the possible presence of extremes in the marginal distributions of $ET$s and $BL$s. 
Furthermore, \citet{BockerKluppelberg2010} present a structural approach to multivariate operational risk modelling using L\'{e}vy copulas building on a previously developed single cell model for heavy tailed operational risk losses, see \citet{BKrisk}. 
Work like \citet{BernardVanDuffel2016} and \citet{Embrechtsetal2015}  (as well as references therein) also show how attractive the possibility of deriving asymptotic and ideally ``tight" bounds for risk-capital under different dependence specifications as well as measuring model risk and uncertainty is. 
Still, most of the work so far builds on the use of correlation or copulas to capture the dependence structure between operational risk classes.

In this work, we  introduce a new tractable model, which relies on extreme value theory and bipartite graphs to model the  dependence structure of the 56-dimensional distribution of the $ET-BL-$matrix. 
Such an approach has the advantage of allowing us to derive asymptotic closed-form formulas, as well as bounds for the tail-related risk measures. The model  not only quantifies the risk for the system of risk sources, as captured by the $ET-BL-$matrix, but also at individual business line/event type level, providing an important tool to gain insights into the risk allocation within the financial institution itself. Moreover, as the $ET-BL-$matrix is modelled as a network, the presence of dependence between the different cells can immediately be depicted, pointing out the relevant linkages between the different $BL$s and $ET$s and capturing their dependence, while naturally interpreting the lack of observations as missing links in the bipartite graph. Here, we also provide a practical implementation of the estimator by introducing two semi-parametric estimation procedures and show their effectiveness on simulated data under different scenarios. 
Finally, by using real-world operational risk data, provided by the DIPO consortium (Database Italiano Perdite Operative---Italian Banking Association Consortium), we test our new model and can provide relevant information on the impact that dependence modelling plays with respect to the risk-capital estimates at both system and individual level.  Capturing dependence explicitly will then allow to set up more accurate risk-management tools to better monitor and hedge potential extreme risks.     

{The paper is structured as follow.} Section~\ref{intuitive} defines the model and provides an intuitive explanation.
Here, also a detailed procedure can be found of how the estimation of the tail risk measures and the risk contributions is performed. This section also reports the main theoretical results based on multivariate regular variation for Pareto-like tail distributions and derives the risk-measures. Some mathematical details and all proofs are postponed to the Appendix.
 Section~\ref{opRiskData}  describes the DIPO operational risk data and provides the results of the estimation procedure.
Section~\ref{Sec:SimulationStudy}  reports results on simulated data from two different scenarios and compares them with the real data. Finally, Section~\ref{sect:conclusion} concludes.


\section{The Model: Bipartite Graph and Distributional Assumptions}\label{intuitive}

 According to the Basel regulation \citet{Basel2006}, each operational risk loss is typically classified according to both the event type of the loss and the business line to which the loss belongs. Here, we assume data are operational risk losses aggregated over a fixed period of time (which could be a week or a month). 
These  operational losses are partitioned into  seven event types $ET=(ET_{1},\dots,ET_{7})^\top$ and eight business lines $BL=(BL_1,\dots,BL_8)^\top$ categories. 
For example, $ET_{1}$ is the sum of all losses, no matter which business line they belong to, due to Internal Fraud during this period, while $BL_{1}$ is the sum of all losses in the business line Corporate Finance during the same period, no matter which event type they belong to. 
Consequently,  after classifying losses, we need to take into account 56 loss variables $L_{ij}$  $i=1,\dots,8$ (corresponding to business lines) and $j=1,\dots,7$ (corresponding to event types), which are the components of the $ET-BL-$matrix:
\begin{center}
\begin{tabular}{c|cccc}
  
                                     & $ET_{1}$     &   $\ldots$       & $ET_{7}$ \\\hline
                    &                  & \\
                 $BL_{1}$      &  $L_{11}$   &  $\ldots$        & $L_{17}$       \\
                                       &                    &                  & \\
                     $\vdots$          &  $\vdots$          &   $\ddots$               &  $\vdots$    \\
                                       &                    &                  & \\
                 $BL_{8}$   & $L_{81}$    &  $\ldots$        &$L_{87}$ 
\label{tabular}        
\end{tabular}
\vspace{0.3cm}
\end{center}
 
$L_{11}$ are then all losses due to Internal Fraud in the business unit Corporate Finance. 
Here, we think of the loss event types $ET=(ET_{1},\dots,ET_{7})^\top$ as being observed random variables, which are as a rule heavy tailed, possessing a particular extremal depedence structure to be explored later on. 
These event type losses are distributed to the relevant business lines, where
we observe that each business line $BL_i$ has to carry a certain \textit{fraction} $A_{ij}$ of the loss of an event type $ET_j$ with $\sum_{i=1}^{8} A_{ij}=1$.
The total loss can then be computed  as 
\beam\label{Sum}
\sum_{i=1}^8\sum_{j=1}^{7}L_{ij}=BL_{1}+\dots+BL_{8}=\sum_{i=1}^{8}\sum_{j=1}^{7}A_{ij}ET_{j}=ET_{1}+\dots+ET_{7}.
\eeam

Each single cell loss $L_{ij}$  can therefore be expressed as $L_{ij}=A_{ij}ET_{j}$ for some $i,j$ with $A_{ij}$ being the fraction of the loss in  $ET_j$ that is assigned to  $BL_i$; i.e., $A_{ij} = L_{ij}/(L_{1j} + \dots + L_{8j}).$ Therefore, after defining the so-called \textit{network fraction matrix} $A=(A_{ij})$, the business line losses are the result of the matrix-vector-multiplication
\begin{gather}\label{BPhiT}
BL=A \times ET.
\end{gather}
The matrix $A$ can be interpreted as a weighted adjacency matrix of a {\em bipartite random graph} (see e.g. \citet[Section~6.6]{Newman}).
It consists of two groups of nodes: the event types and the business lines, where a zero at position $(i,j)$  in the matrix $A$ means that there is no edge in the  graph between $BL_{i}$ and $ET_{j}$, implying that there is no loss in the single cell of the $ET-BL-$matrix. 
Figure~\ref{bipartite} shows a possible realization of a bipartite random graph, where for example the losses of $ET_4$ have occurred for a proportion equal to $A_{34}$ in $BL_3$ and for  $A_{74}$ in $BL_7$ with $A_{34}+A_{74}=1$. 
The representation in \eqref{BPhiT}  then incorporates the full information originally given by the $ET-BL-$matrix. 

{\small
\begin{figure}[t]
\begin{center}
\begin{tikzpicture}[thin,
  fsnode/.style={},
  ssnode/.style={},
  every fit/.style={ellipse,draw,inner sep=2pt,text width=5cm},
]
 \tikzstyle{every node}=[font=\small]
\begin{scope}[start chain=going right,node distance=5mm]
\foreach \i /\xcoord in {1/1,2/2,3/3,4/4,5/5,6/6,7/7,8/8}
  \node[fsnode,on chain, fill=orange,draw=none, circle,text=white] (f\i) {$BL_{\xcoord}$};
\end{scope}
\begin{scope}[xshift=1.5 cm,yshift=-2.5cm,start chain=going right,node distance=5mm]
\foreach \i/\xcoord in {9/1,10/2,11/3,12/4,13/5,14/6,15/7}
  \node[ssnode,on chain,fill=blue,draw=none, circle,text=white] (s\i) {$ET_{\xcoord}$};
\end{scope}
\path
    (f1) edge node [left] {} (s9);
              
\path    (s9) edge node [right] {} (f2);
        
\path    (f3) edge node [right] {} (s10);
    
\path		(f3) edge node [right] {$A_{34}$} (s12)	;	
				
\path    (s12) edge node [left] {$A_{74}$} (f7);
		
 \path   (s13) edge node [left] {} (f4);
		
	\path	(s13) edge node [left] {} (f5);
		
\path		(s13) edge node [left] {} (f8);
		
	\path	  (f6) edge node [left] {} (s15);
\end{tikzpicture}
\end{center}
\caption{\label{bipartite} Realization of the bipartite random graph, where a link  between business line $BL_{i}$ and event type $ET_j$ indicates a loss in the cell $(i,j)$ of size $A_{ij} ET_j$, here shown with exemplary weights $A_{34}$ and $A_{74}$. } 
\end{figure}
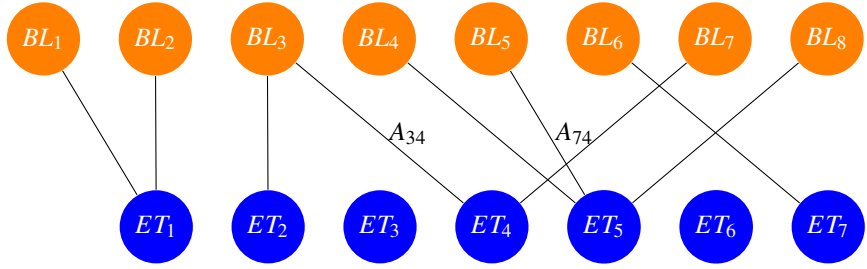 
}

The model relies on two assumptions. First, we consider the distribution of the random vector $ET=(ET_1,\dots, ET_7)$ to be {\em multivariate regularly varying}. 
The following definition (cf. \citet[Theorem~6.1]{Resnick2007}) gives an intuition of the features of the distribution.
An equivalent definition and more details are reported in  Appendix~\ref{Appendixmodel}. 

We start with a random vector\footnote{In the operational risk context $X$ corresponds to the vector of event types and $d=7$.} $X=(X_1,\dots,X_d)^\top$, whose one-dimensional marginals are all positive and heavy-tailed in the sense of regular variation, which means that for all $j=1,\dots,d$,
\begin{align}\label{marginrv}
\lim_{x\to\infty} \frac{\P(X>tx)}{\P(X>x)} = t^{-\al}
\end{align}
for all $t>0$ and some {\em tail index} $\al>0$.

Multivariate regular variation of $X$ models the marginals $X_j$ for $j=1,\dots,d$ as heavy-tailed random variables as in \eqref{marginrv}, and the dependence structure between the components separately by means of a random vector $\Theta$ with values in the positive unit sphere $\S_+^{d-1}$ of $\R^d$. 
The sphere can be taken with respect to any norm, examples are 
$\S_+^{d-1} = \{x\in\R^d_+ : \max_{1\le j\le d} x_j =1\}$,
$\S_+^{d-1} = \{x\in\R^d_+ : \sum_{j=1}^d x_j =1\}$, or $\S_+^{d-1} = \{x\in\R_+^d : \sqrt{\sum_{j=1}^d x_j^2}=1\}$.
More precisely, for every $t>0$ and every  set $S\subset\S_+^{d-1}$ satisfying the weak regularity condition that it is a Borel set and its boundary has probability zero (e.g. an interval on the sphere),
the norm $\|X\|$ and the normalized vector $X/\|X\|$ (which has values on the sphere) have the property that 
\begin{align}\label{multrv}
\lim_{x\to\infty}\frac{\P(\|X\|>tx, X/\|X\|\in S)}{\P(\|X\|>x)} = t^{-\al} \P(\Theta\in S) =: t^{-\al} \Gamma(S).
\end{align}
We want to emphasize that in the limit the norm of the vector $\|X\|$, which is one-dimensional regularly varying as in \eqref{marginrv}, and the dependence structure given by the distribution of $\Theta=X/\|X\|$ with values on the sphere, become independent.
The probability measure $\Gamma$ is called {\em spectral measure} or {\em angular measure}.

As a statistically senseful model we choose the event types to have Pareto-like tails, which satisfy \eqref{marginrv}, more precisely, we assume that 
$\lim_{t\to\infty} t^\al \pr{ET_j>t } = K_j>0$ for $j=1,\dots,d$, which we write as
\begin{align}\label{Pareto}
\pr{ET_j>t } & \sim K_j t^{-\alpha}, \quad  t\to \infty,
\end{align}
with {\em scaling parameters} $K_j>0$ and a common {\em tail index} $\alpha>0$.
{Before we discuss the implications and limitations of this assumption for the operational risk application, we explain its basic properties.}

For this regularly varying operational risk model none of the event types can be regarded  unimportant---or vice versa---none of the event types will ultimatively dominate the others. 
This is expressed in \eqref{Pareto} by having the same tail index  $\alpha$ for every event type.
However, the model still allows for variability between the event types through different scaling parameters $K_j$. 

In additition, and in contrast to the classical concept of correlation, multivariate regular variation models the dependence structure between the event types $ET_1,\dots, ET_7$ conditional on the occurrence of large losses. 

As a second assumption, the random network fraction matrix $A$ and the multivariate regularly varying vector $ET$ are stochastically independent. 

In Section~\ref{opRiskData} we discuss and validate these assumptions by testing them for a real-world database of operational losses, provided by DIPO from the Italian  Banking Association.  
We perform the following steps:
\begin{enumerate}
\item[(1)]
We fit a generalized Pareto distribution (gpd) with shape parameter $\xi$ and scaling parameter $\beta$ to each event type, which we use for estimating the tail index $\alpha_j$ and the scaling parameter $K_j$ for $ET_j$ for $j=1,\dots,d$.
As expected, the estimated tail indices differ, but all lie in the interval $[1.29,2.27]$; i.e., all have finite first moment, and one event type has even finite variance.
\item[(2)]
Aiming for a common value $\alpha$, we take the mean over all seven event type estimates $\wh\xi_j$ of the shape parameters  for $j=1,\dots,7$, which is supported by computing the confidence intervals. The common estimator $\wh\alpha$ will be the reciprocal of that value. 
Note that the estimators for the shape and scale parameters of a GPD are dependent, hence we account for having taken the mean over the shape parameters by re-estimating the scale parameters given that mean shape parameter.
In a second step, given this common tail index $\wh\al$, we re-estimate the scale parameters $K_j$.
\item[(3)]
By relying on statistical tests of independence, we also show that the assumption of independence between the network fraction matrix $A$ and the event type vector $ET$ is justified.
\item[(4)]
Our goal is to assess risk quantities for the business lines for large values; i.e., of the vector $BL$ resulting from large values of $ET$ and the network fraction matrix $A$.
Here we use the fact that multivariate regular variation of $ET$ with index $\alpha$ entails multivariate regular variation of $BL = A\times ET$ with the same index $\alpha$ for every network fraction matrix $A$ (Proposition A.1 in \citet{Basrak200295}).
Indeed, the linear transformation of \ET\ by $A$ transforms the marginals in an obvious way, and also the transformation of the spectral measure is well understood. 
Figure~\ref{simplePic}  provides a simple bivariate example on how the original dependence structure is changed under a linear matrix transformation. 
\item[(5)]
As we are interested in the dependence structure of the vector $BL$ of business lines, we first apply the transformation $A$ to $ET$ and estimate the spectral measure of $BL$.
\item[(6)]
Finally, we estimate the quantities of interest like Value-at-Risk, Conditional-Tail-Expectation, as well as risk contributions.
\end{enumerate}

We describe Figure~\ref{simplePic} in detail.
The left hand plot shows 5\,000 heavy-tailed simulated bivariate losses where the first and the second component are independent standard Pareto(2)-distributed. We observe that the large losses are present either in one or the other of the two components $(X_1,X_2)$, but never simultaneously. This is a special property of multivariate heavy-tailed distributions with (asymptotically) independent components. The spectral measure is then concentrated on the coordinate axes. {The plot} changes if a linear matrix transformation {(by a deterministic fraction matrix $A$)} is applied as shown in the right plot of bivariate losses $( \frac13 X_1+\frac23 X_2, \frac12 X_1 +\frac12 X_2 )$, where the first and second component are obviously dependent. The corresponding spectral measure is then concentrated on the lines from the origin crossing the points $(\frac13,\frac12)$ and $(\frac23,\frac12)$.
When starting with $(X_1,X_2)$ having a spectral measure with a density on the positive unit sphere of $\R^2$, then also the linearly transformed vector has, by the density transformation theorem, a density.

\vspace{-0.3cm}
\begin{figure}[ht!]
\begin{center}
\includegraphics[scale=0.7]{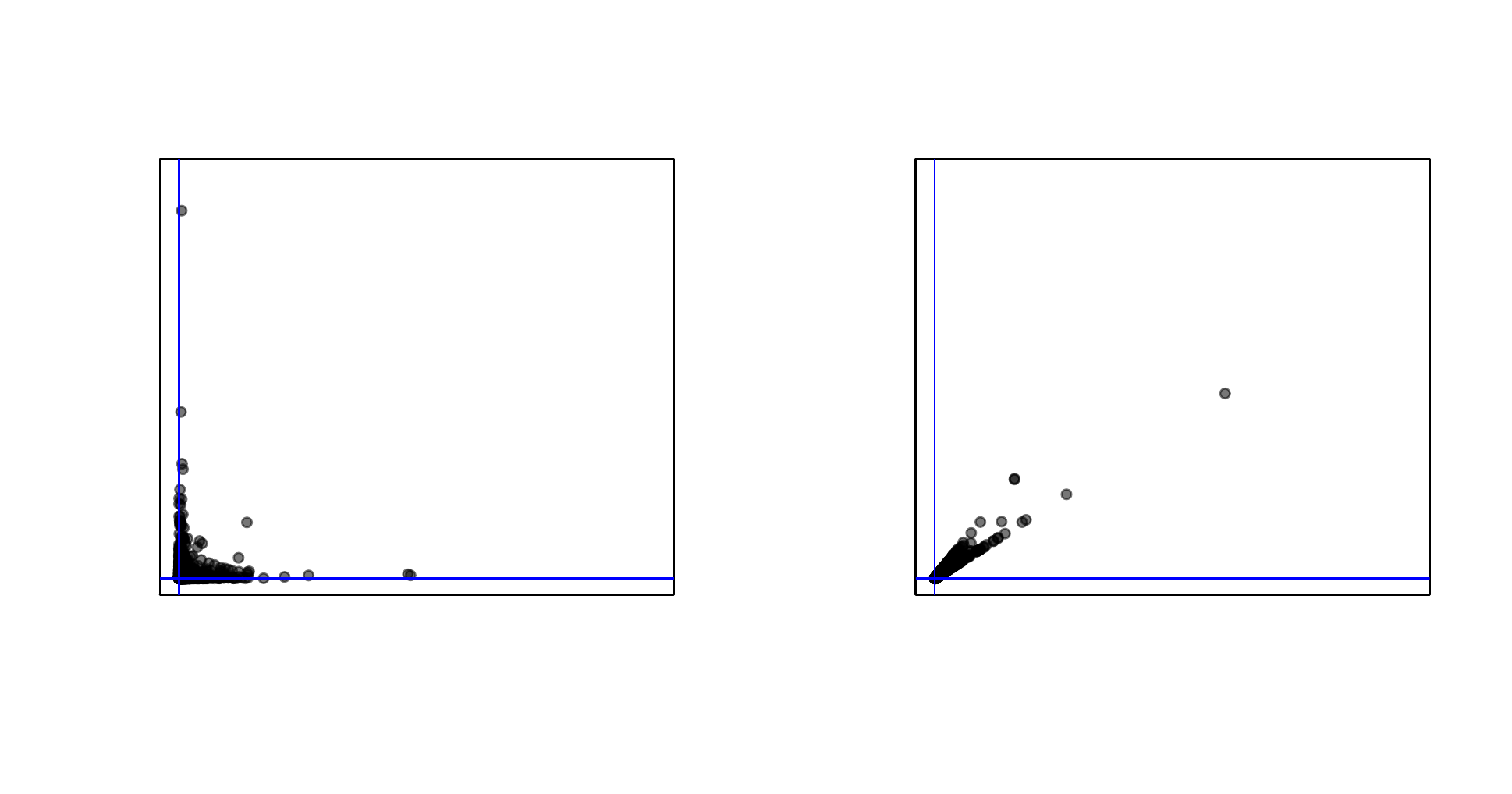}
\vspace{-1cm}
\caption{\label{simplePic} Dependence structure under linear matrix transformation: the left hand plot shows 5000 simulated bivariate event type data $(ET_1,ET_2)$ whose components are independent, which are linearly transformed in the right hand plot, representing business lines $(BL_1,BL_2)=( \frac13 ET_1+\frac23 ET_2, \frac12 ET_1 +\frac12 ET_2 )$.}
\end{center}
\end{figure}

\vspace{-0.5cm}
In the remainder of this section we {address} three topics most relevant to every operational risk manager. 
In Section~\ref{model} we quantify risk on both a single business line as well as on a company wide scale through tail risk measures. 
Section~\ref{allocation} deals with the problem of sensefully  allocating risk capital to  business lines. 
Both regulators and risk managers need to have methods for the allocation of risk to business lines according to their firm-wide importance with the goal to better quantify their risk appetite and risk tolerance, and also to set up adequate strategies for monitoring, insurance and hedging.
In the last Section~\ref{uncertainty}, we shed light on how to deal with dependence uncertainty by providing bounds for the tail risk measures; i.e., best and worst cases whichever type of asymptotic dependence the joint distribution of the event types  $ET_1,\dots,ET_7$ exhibit. 

\subsection{Tail risk measures} \label{model}

We start by recalling that, if the vector \ET\ of event types is multivariate regularly varying, then the vector \BL\ of business lines is again multivariate regularly varying, satisfying a version of \eqref{multrv}. 
Moreover, the tail index $\al$ is the same for \ET\ and \BL, only the spectral measures of both vectors differ, as illustrated in the above bivariate example. 
We aim at translating the information about the tail index $\alpha$, the extremal dependence structure between the event types given by the spectral measure $\Gamma$ as defined in \eqref{multrv}, and the network fraction matrix $A$ into risk estimates for the business lines.

Recall that the Value-at-Risk and the Conditional-Tail-Expectation of some one-dimensional random risk variable $X$ at confidence level $1-\gamma$ with $\gamma \in (0,1)$ are defined as
\begin{eqnarray*}
		\VaR_{1-\gamma}(X)&=&\inf \{t \ge 0 : \P[X>t]  \leq \gamma \}, \\
	      \CoTE_{1-\gamma}(X)&=&E[X|X>\VaR_{1-\gamma}(X) ] .
\end{eqnarray*}
Whereas the $\VaR$ is defined for all tail indices $\al>0$, the $\CoTE$ requires $\al>1$.
As Table~\ref{tab:parest} shows, all event types have estimated tail index larger than 1, such that
for the data at hand both risk measures can be estimated, which we do in Section~\ref{opRiskData}.

According to par.667 of the Basel II Framework for operational risk, the VaR has to be estimated at the extreme confidence level of 0.999; see \cite{Basel2011}.
By lack of data in this range the task obviously does not allow for non-parametric quantile estimation, only parametric or semi-parametric estimation makes sense. 
Whereas one-dimensional tail and quantile estimation based on extreme value theory or regular variation are by now classics, the challenge lies in quantifying extreme risk of the whole system, here represented by the large values of every business line and their extreme dependencies.

As a consequence, our foremost goal is to find asymptotic expressions for the tail risk measures in the bipartite model involving the different building blocks: the tail index $\alpha$, the asymptotic dependence structure {between the event type losses} given by a spectral measure, as well as the network fraction matrix which represents the process of distributing the event type losses to the different business lines. 
After providing estimates of these quantities, we develop estimators for the tail risk measures defined above. 
Moreover, relying on the linear transformation of the event type losses by the network fraction matrix yields a robust way to capture dependence between the business lines, which has to be considered as the result of two ingredients: first the intrinsic dependence structure between the event types which is altered  in a second step by the linear transformation; cf. Figure~\ref{simplePic}. 
The following result states the asymptotics for $\VaR$ and $\CoTE$ for growing confidence levels, and is the basis for  estimation procedures to follow.

\begin{theorem}\label{tail-risk-measures}
Let the event type vector $ET=(ET_1,\dots,ET_7)$ be multivariate regularly varying with tail index $\alpha>0$ and  spectral measure $\Gamma$ (see \eqref{multrv}) as well as stochastically  independent of the (random) network fraction matrix $A$. Denoting by 
$\mathbb{S}_{+}^{6}$ a positive unit sphere in $\R^{7}$,
we define risk constants for $i=1,\dots,8$,
\begin{gather}\label{constantsIS}
C^i=\E \int_{\mathbb{S}_{+}^{6}}(As)_i^{\alpha}d\Gamma(s)\quad \mbox{and}\quad C^S= \E \int_{\mathbb{S}_{+}^{6}}\|As\|^{\alpha}d\Gamma(s),
\end{gather}
for some norm $\|\cdot\|$.
Then the asymptotic behaviour of the VaR and the CoTE---in the latter case only for $\alpha >1$---of a single business line $BL_i$ is given by
\begin{gather}\label{varconstants}
\VaR_{1-\gamma}(BL_i)\sim (C^i)^{1/\alpha}\gamma^{-1/\alpha} \mbox{ and }\CoTE_{1-\gamma}(BL_i)\sim \frac{\alpha}{\alpha-1}(C^i)^{1/\alpha}\gamma^{-1/\alpha},\quad\ga\to 0.
\end{gather}
Moreover, the asymptotic behaviour of the VaR and CoTE---in the latter case only for $\alpha>1$---
on a company-wide scale modeled by the sum $\|BL\|=BL_1+\dots+BL_8$ over the business lines  
is given by
\begin{gather}\label{coteconstants}
\VaR_{1-\gamma}(\|BL\|)\sim (C^S)^{1/\alpha}\gamma^{-1/\alpha} \mbox{ and }\CoTE_{1-\gamma}(\|BL\|)\sim \frac{\alpha}{\alpha-1}(C^S)^{1/\alpha}\gamma^{-1/\alpha},\quad\ga\to 0.
\end{gather}
\end{theorem}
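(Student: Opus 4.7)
The plan is to combine the independence of $A$ and $ET$ with the closure of multivariate regular variation under non-negative linear functionals: conditional on $A$, the tail of $BL_i$ (resp.\ $\|BL\|$) reduces to a one-dimensional regularly varying tail whose pre-factor is an integral over the spectral measure $\Gamma$, and the unconditional asymptotic then follows by taking expectation in $A$. Once the one-dimensional tail $\P(BL_i>t)\sim C^i\,\P(\|ET\|>t)$ is established, the $\VaR$ comes from inverting the tail at level $\gamma$ and the $\CoTE$ from Karamata's theorem.

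Concretely, I fix $i\in\{1,\dots,8\}$ and write $BL_i=a_i^\top ET$, where $a_i$ denotes the (random) $i$-th row of $A$. Conditioning on $A=a$, the tail-integral representation for MRV vectors under non-negative linear functionals (cf.\ \citet[Chapter~6]{Resnick2007}, and Prop.~A.1 of \citet{Basrak200295} already invoked in step~(4)) gives
\[
\P(a_i^\top ET>t)\,\sim\,\P(\|ET\|>t)\int_{\mathbb{S}_{+}^{6}}(as)_i^{\alpha}\,d\Gamma(s),\qquad t\to\infty.
\]
Taking expectation in $A$ and interchanging limit and expectation---which is legal by dominated convergence, since $(As)_i^{\alpha}$ is uniformly bounded in $a$ and in $s\in\mathbb{S}_{+}^{6}$ (entries of $A$ lie in $[0,1]$ by column-stochasticity, and the sphere is compact)---together with the stochastic independence of $A$ and $ET$, yields
\[
\P(BL_i>t)\,\sim\,C^i\,\P(\|ET\|>t),\qquad t\to\infty.
\]
Normalising so that $\P(\|ET\|>t)\sim t^{-\alpha}$, which is consistent with the Pareto-like marginal assumption and absorbs any residual scalar into $\Gamma$, and inverting the tail at level $\gamma$, one reads off $\VaR_{1-\gamma}(BL_i)\sim(C^i)^{1/\alpha}\gamma^{-1/\alpha}$.

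The $\CoTE$ then follows from Karamata's theorem: for a non-negative $X$ with $\overline F_X(t)\sim c\,t^{-\alpha}$ and $\alpha>1$ one has $\E[X\mid X>t]\sim\tfrac{\alpha}{\alpha-1}t$, so substituting $t=\VaR_{1-\gamma}(BL_i)$ finishes the individual business-line case \eqref{varconstants}. The aggregate statement \eqref{coteconstants} is handled identically, now applying the tail-integral representation to the non-negative, continuous, positively $1$-homogeneous map $x\mapsto\|x\|$ on $\R_+^{8}$: conditioning on $A$ delivers $\P(\|BL\|>t)\sim C^S\,\P(\|ET\|>t)$, and inversion and Karamata give the asymptotics of $\VaR_{1-\gamma}(\|BL\|)$ and $\CoTE_{1-\gamma}(\|BL\|)$ exactly as before.

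The main technical point---and essentially the only non-routine step---is the interchange of $t\to\infty$ with the outer expectation $\E_A$ in the conditional-to-unconditional passage; equivalently, the MRV tail approximation has to be uniform, in an integrable sense, over the direction $a_i$. Boundedness of $(As)_i^{\alpha}$ on the compact sphere uniformly in $A$ supplies the dominating function, while the uniformity of the tail remainder in $a$ can be extracted from Potter-type bounds on the slowly varying part of $\P(\|ET\|>t)$. Everything else---inversion of regularly varying tails, Karamata's theorem, and the multivariate regular variation of linear transforms---is classical.
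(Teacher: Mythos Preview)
Your proof is correct and follows essentially the same route as the paper's: the paper states a more general version (Theorem~\ref{prop:varasym}) and proves it by invoking the Breiman-type result of \citet[Prop.~A.1]{Basrak200295} as a black box---which is precisely your conditioning-on-$A$-plus-dominated-convergence step, with the moment condition~\eqref{momentA} (trivially satisfied here since $A$ has bounded entries) replacing your explicit Potter-bound argument---and then obtains the $\VaR$ asymptotic by inversion of regularly varying tails and the $\CoTE$ asymptotic by Karamata, exactly as you do. Your normalisation $\P(\|ET\|>t)\sim t^{-\alpha}$ amounts to the paper's choice $b(t)=t^{1/\alpha}$, so the spectral measure $\Gamma$ in the statement coincides with the paper's $\ov\Gamma$ up to the scalar you absorb.
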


Certainly, understanding the risk of the single business lines as well as the company-wide risk is most important. 
In Theorem~\ref{prop:varasym} we provide and prove a more general version of this result, which encompasses besides subsets of business lines also other aggregation functions than the sum for the company-wide risk.

\subsection{Capital Allocation}\label{allocation}

The allocation of total capital to different entities in an optimal way is an important topic in finance; see e.g. \citet{Embrechtsetal2017}.
Applications include risk adjusted performance measurement for  portfolios as in \citet{kalkbrener,Tasche2000}, capital allocation in credit risk \citet{Tasche2004b}, or allocating systemic risk \citet{GourMon}.
In all these examples, allocated risk contributions  should sum up to the full  risk of a portfolio or a system.
We adapt this concept to the allocation of company-wide operational risk to business lines, which will then allow us to possibly identify which critical areas require setting up risk monitoring and mitigating strategies to better prevent and hedge operational losses. 
The following property is essential.

For a risk vector\footnote{In the operational risk context $Y$ corresponds to the vector of business lines and $q=8$.}   
$Y=(Y_1,\dots,Y_q)$ and a risk measure $R$ (like $\VaR$ or $\CoTE$)
we define the firm-wide risk by $R(\|Y\|)$.
 If there exist risk contributions $R(Y_i\mid \|Y\|)$ such that
 \begin{gather}\label{fullAllocation}
R(\|Y\|) = \sum_{i=1}^q R(Y_i\mid \|Y\|),
\end{gather}
 then the firm-wide risk satisfies the {\em full allocation property}.

Since tail risk measures are in general is not $additive$,  $\VaR(\|Y\|) = \sum_{i=1}^q  \VaR(Y_i)$, does not hold. 
Applying Euler's principle to the asymptotic formulae of total economic capital as derived in Theorem~\ref{tail-risk-measures}, we obtain an asymptotically full allocation property for the risk measures $\VaR$ and $\CoTE$.
A proof is given in Appendix~\ref{Appendixmodel}.

\begin{theorem}\label{pr:riskcontvar}
{Let the assumptions of Theorem~\ref{tail-risk-measures} 
hold with $\alpha > 1 $ and take 
 $R=\VaR_{1-\ga}$ or $R=\CoTE_{1-\gamma}$  
as well as the sum norm $\| \cdot \|$  in  \eqref{fullAllocation}. 
We define for $i=1,\dots,8$ the capital allocation constants
\beam\label{capall}
CA^i := (C^{S})^{1/\al-1} \Big( \E  \int_{\mathbb{S}_+^{d-1}}   \|As\|^{\alpha -1}(As)_{i}  d\Gamma(s)\Big)
\eeam
Then we obtain the asymptotically full allocation property} for the vector of business lines \linebreak 
$BL=(BL_1,\dots, BL_8)^\top$:
\beam\label{alloc}
\VaR_{1-\ga}(\|BL\|) &\sim &\sum_{i=1}^8 \VaR_{1-\ga}(BL_i\mid \|BL\|),\quad\ga\to 0,
\eeam
with risk contributions 
\beam\label{riskcontvari}
\VaR_{1-\ga}(BL_i \mid \|BL\|) &:=&    \gamma^{-1/\al} \, CA^i
\eeam as well as 
\beao
\CoTE_{1-\ga}(\|BL\|) &\sim &\sum_{i=1}^q \CoTE_{1-\ga}(BL_i\mid \|BL\|),\quad\ga\to 0,
\eeao
with risk contributions
\beam\label{riskcontcotei}
\CoTE_{1-\ga}(BL_i \mid \|BL\|) &:=&   \frac{\alpha}{\alpha-1} \gamma^{-1/\al} \, CA^i
\eeam
\end{theorem}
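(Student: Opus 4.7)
The plan is to reduce the asymptotic full allocation property to a short algebraic identity linking the capital allocation constants $CA^i$ with the total constant $C^S$, and to motivate the definition \eqref{capall} via Euler's homogeneity principle.

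First I would invoke Theorem~\ref{tail-risk-measures} to write $\VaR_{1-\ga}(\|BL\|) \sim (C^S)^{1/\al}\ga^{-1/\al}$ and $\CoTE_{1-\ga}(\|BL\|) \sim \frac{\al}{\al-1}(C^S)^{1/\al}\ga^{-1/\al}$ as $\ga\to 0$. Since the proposed risk contributions in \eqref{riskcontvari}--\eqref{riskcontcotei} are $\ga^{-1/\al}CA^i$ and $\frac{\al}{\al-1}\ga^{-1/\al}CA^i$ respectively, both allocation claims \eqref{alloc} collapse to one scalar identity,
\[
(C^S)^{1/\al} \;=\; \sum_{i=1}^{8} CA^i.
\]

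Next I would verify this identity directly from the definitions. Substituting \eqref{capall}, pulling the common factor $(C^S)^{1/\al-1}$ out of the sum over $i$, and interchanging this finite sum with $\E$ and $\int$, one gets
\[
\sum_{i=1}^{8} CA^i \;=\; (C^S)^{1/\al-1}\,\E\!\int_{\mathbb{S}_+^{6}} \|As\|^{\al-1}\sum_{i=1}^{8}(As)_i\,d\Gamma(s).
\]
The key observation is that for the sum norm $\|x\|=\sum_j|x_j|$, together with the facts that the entries of $A$ are non-negative and $s\in\mathbb{S}_+^{6}$, we have $\sum_{i=1}^{8}(As)_i = \|As\|$. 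Hence the integrand simplifies to $\|As\|^{\al}$, whose expected integral is precisely $C^S$, giving $\sum_i CA^i = (C^S)^{1/\al-1}\cdot C^S = (C^S)^{1/\al}$ as required. The asymptotic identity \eqref{alloc} (and its $\CoTE$ counterpart) then follows by multiplying through by $\ga^{-1/\al}$ (respectively by $\frac{\al}{\al-1}\ga^{-1/\al}$) and invoking Theorem~\ref{tail-risk-measures}.

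The deeper reason why this identity holds—and how I would motivate the definition \eqref{capall}—is Euler's theorem. The map $u\mapsto F(u):=\E\int_{\mathbb{S}_+^{6}}(u^\top As)^{\al}\,d\Gamma(s)$ on $\R_+^8$ is positively homogeneous of degree $\al$, so $F^{1/\al}$ is positively homogeneous of degree $1$. A direct computation of $\partial_{u_i}F^{1/\al}$ at $u=(1,\dots,1)$, where $u^\top As=\|As\|$ by the sum-norm identity above, yields exactly $CA^i$, and Euler's identity $F^{1/\al}(u)=\sum_i u_i\,\partial_{u_i}F^{1/\al}(u)$ specialised at $u=(1,\dots,1)$ returns the algebraic identity displayed above. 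From this perspective the $CA^i$ are the standard gradient (Euler) capital contributions for the asymptotic leading term of $\VaR_{1-\ga}(u^\top BL)$ viewed as a function of the weight vector $u$.

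The main obstacle I anticipate is not computational but conceptual: the argument uses crucially the choice of the sum norm (which makes $\sum_i(As)_i=\|As\|$) and the non-negativity of the entries of $A$ (so that no absolute values intervene). For a general norm or signed weights the one-line cancellation fails and one must justify differentiation under the integral via dominated convergence, using that $(u^\top As)^{\al-1}$ is uniformly bounded on $\mathbb{S}_+^{6}$ for $u$ in a compact neighbourhood of $(1,\dots,1)$, and then appeal to Euler's identity. In the present setting that machinery is unnecessary and the proof reduces to the single cancellation above, with the $\CoTE$ allocation following from the $\VaR$ allocation by the scalar factor $\al/(\al-1)$ inherited from \eqref{coteconstants}.
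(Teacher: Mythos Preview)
Your proof is correct. The paper's argument is precisely your Euler-principle paragraph: it introduces the $1$-homogeneous map $c(x)=\big(\E\int_{\mathbb{S}_+^{d-1}}\|\operatorname{diag}(x)As\|^{\alpha}\,\ov\Gamma(ds)\big)^{1/\alpha}$, computes $\partial_{x_i}c$, and evaluates Euler's identity at $x=(1,\dots,1)$ to obtain the $CA^i$ (your parametrisation $u^\top As$ and the paper's $\|\operatorname{diag}(x)As\|$ coincide for the sum norm on non-negative vectors). Your opening argument---pull out $(C^S)^{1/\alpha-1}$, swap the finite sum with $\E$ and the integral, and use $\sum_i(As)_i=\|As\|$---is a genuinely shorter route the paper does not take: it bypasses differentiation entirely and shows that once the $CA^i$ are \emph{defined} by \eqref{capall}, the allocation identity is a one-line consequence of the sum-norm identity on non-negative vectors. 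Euler's principle is what the paper uses to \emph{derive} the definition of $CA^i$; your direct check shows it is not needed to \emph{verify} the theorem as stated.
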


\subsection{Handling dependence uncertainty}\label{uncertainty}

As the estimation of the spectral measure in \eqref{constantsIS}, which models the dependence structure of a multivariate regularly varying vector, is not a simple task, we provide bounds given by the worst and best case dependence scenarios for the risk constants $C^{i}$ at individual level  and $C^S$ at company-wide level (cf. \eqref{constantsIS}, both in terms of the asymptotic dependence structure of the event type losses given by the spectral measure $\Gamma$ of \ET, see \citet{KK} and also \citet{MainikOrder}. 
Writing 
$C^{i}_{\Gamma}$ and $C^S_{\Gamma}$ for the constants in case of an arbitrary spectral measure $\Gamma$ we give bounds below. 
Denote by $C^{i}_{ind}, C^S_{ind}$ the risk constants corresponding to asymptotic independence of the event types, where the spectral measure $\Gamma_{ind}$ of $ET$ is concentrated on the axes, and by $C^i_{dep}, C^S_{dep}$ the risk constants corresponding to full asymptotic dependence,  where the spectral measure $\Gamma_{dep}$ of the event types is concentrated on the line through {$K^{1/\alpha} 1$} with $K^{1/\alpha} =diag(K_1^{1/\alpha} ,\dots,K_d^{1/\alpha})$ and $1=(1,\dots,1)^{\top}$.
As stated in  Theorem 3.2 in \citet{OKCKLUGR}, denoting the canonical unit vectors by $e_j$ for  $j =1,\dots,d$,  we find the risk constants for the corresponding business lines 
\beam \label{VaRconst}
 C^i_{ind} =  C^i_{ind} (A) = \sum_{j=1}^{d} K_j\E A_{ij}^{\alpha},  \,\, i=1, \ldots, q,  \, \mbox{ and } \, C_{ind}^S = C_{ind}^S (A) = \sum_{j=1}^d K_j \E \| A e_j \|^\alpha,\\
\label{VaRconst_dep}
C_{dep}^i = C_{dep}^i (A) :=  \E (AK^{1/\al}1)_{i}^{\alpha }, \,\, i=1, \ldots, q,   \,  \mbox{ and }\quad  C_{dep}^S = C_{dep}^S (A) = \E \|AK^{1/\al}1\|^\alpha.
\eeam
The following statement summarizes Theorems 3.1, 3.2 and 3.3 in \citet{KK}.
These inequalities for the risk constants entail asymptotic inequalities for the tail risk measures in \eqref{varconstants} and \eqref{coteconstants}.

\bpr
Consider the three event type vectors $ET_{ind}, ET_\Gamma, and ET_{dep}$, each with the same components $ET_1,\dots,ET_d$, but different dependence structures as listed above. Then for the risk constants satisfy the following bounds:
\begin{gather}
C_{ind}^{i}\leq C_{\Gamma}^i \leq C_{dep}^i\quad \text{and} \quad  C_{ind}^{S}\leq C_{\Gamma}^S \leq C_{dep}^S\quad \text{for}\quad        \alpha \geq 1,   \label{individualbounds} \\ 
\label{systembounds}
C_{ind}^{i}\geq C_{\Gamma}^i \geq C_{dep}^i \quad \text{and} \quad  C_{ind}^{S}\geq C_{\Gamma}^S \geq C_{dep}^S\quad \text{for}\quad        0< \alpha \leq 1. 
\end{gather}
\epr

Note that an upper bound in the case of $\alpha >1$ (where the expectation of the loss variable is finite) is a lower bound in the case of  $\alpha <1$ (where the expectation of the loss variables is infinite) and vice versa. 
As a consequence, the assumption of full asymptotic dependence between the components of the risk vector gives an upper bound only for $\alpha>1$, which implies in particular that the sum of the $\VaR$s of each risk component is not only not an (asymptotic) upper bound in general, but can even be the lower bound. 
Note also that multivariate regular variation naturally models association; i.e., positive (or no) dependence between the event types and positive (or no) dependence between the business lines.

Having such bounds available has the useful practical implication to gain understanding of the uncertainty we are facing in the most extreme situations and the impact that dependence may have on the tail-risk; cf. Figure~\ref{depindfigure} for a data example.
Moreover, the possibility to capture the dependence between the business lines by the matrix $A$ provides a simple tool to test the potential consequences of alternative dependence configurations between the business lines so to say as a way to enable scenario analyses as well as developing stress testing systems that can capture joint extreme losses.

\section{Operational Risk Data}\label{opRiskData}

The operational risk data have been provided by the DIPO consortium (Database Italiano Perdite Oper\-ative---Italian Banking Association Consortium).
For confidentiality reasons, we can provide only some descriptive statistics, and cannot reveal numerical values of the risk constants and the capital allocation constants. 
This is, however, not needed for an illustration of our new method.

The database contains 145\,473 losses reported daily from 33 Italian banking groups, with about 180 entities, in a period of 10 years during 01/01/2003-30/12/2013. The reporting threshold is 5\,000 Euro, below which no loss is reported. To avoid the presence of a large number of zeros, we {aggregate losses weekly} for the total of 575 weeks and the resulting zeros for event type and business line data.
As displayed in column $\#0s$ in Table \ref{tab:ETBLdesc},  $ET_2$, $ET_4$ and $ET_7$ as well as $BL_3$, $BL_4$ and $BL_8$ have complete time series of weekly aggregated losses. 
Even when {aggregating losses  weekly}, there is only a small number of  zeros for most event types, with a maximum of 15.13\% for  Business Distruption \& System Failures ($ET_6$). 
In contrast, this is not the case when considering business lines, where 4 out of 8 business lines have more than 27\% of zeros in the entire period, with a maximum of 81\% for Corporate Finance ($BL_1$). 

When examining the marginal distributions of the magnitude, also called {\em severity}, of weekly losses, we notice that the distributions of losses to both $ET$s and $BL$s are not evenly spread, with $ET_5$ and $ET_6$ and $BL_1$, $BL_5$, $BL_6$, and $BL_7$ having less than 5\% of the total severity of losses. 
Figure \ref{fig:boxplot} displays the boxplots of weekly losses (on log-scale) grouped by $ET$s (left) and  $BL$s (right). Median losses are much lower for $ET_5$ and $ET_6$ as well as for $BL_1$, $BL_5$, $BL_6$ and $BL_7$, which are also characterized by a larger number of missing data as well as a smaller fraction of total severity than the remaining ones. 

In order to build a better model for assessing the risk tolerance and risk appetite of a company, which we will do at the level of single business lines, we begin with a model  for event types.

\begin{table}[]
\centering
\begin{tabular}{lccl|l|r|r|}
\cline{1-3} \cline{5-7}
\multicolumn{1}{|l|}{\textbf{Category}} & \multicolumn{1}{c|}{\textbf{ \#0s}} & \multicolumn{1}{c|}{\textbf{Severity}} & \hspace*{1cm} & \textbf{Category} & \textbf{ \#0s} & \textbf{Severity} \\ 
\cline{1-3} \cline{5-7} 
\multicolumn{1}{|l|}{{$ET_1$}}   & \multicolumn{1}{r|}{1,9\%}                 & \multicolumn{1}{r|}{13,1\%}            &  & {$BL_1$}   & 80,5\%                & 3,3\%             \\ 
\cline{1-3} \cline{5-7} 
\multicolumn{1}{|l|}{{$ET_2$}}   & \multicolumn{1}{r|}{-}                 & \multicolumn{1}{r|}{12,8\%}            &  & {$BL_2$}   & 1,7\%                 & 9,5\%             \\ 
\cline{1-3} \cline{5-7} 
\multicolumn{1}{|l|}{{$ET_3$}}   & \multicolumn{1}{r|}{0,3\%}                 & \multicolumn{1}{r|}{5,4\%}             &  & {$BL_3$}   & -                 & 43,1\%            \\ 
\cline{1-3} \cline{5-7} 
\multicolumn{1}{|l|}{{$ET_4$}}   & \multicolumn{1}{r|}{-}                 & \multicolumn{1}{r|}{43,8\%}            &  & {$BL_4$}   & -                 & 18,4\%            \\ 
\cline{1-3} \cline{5-7} 
\multicolumn{1}{|l|}{{$ET_5$}}   & \multicolumn{1}{r|}{8,3\%}                 & \multicolumn{1}{r|}{0,6\%}             &  & {$BL_5$}   & 27,1\%                & 0,5\%             \\ \cline{1-3} \cline{5-7} 
\multicolumn{1}{|l|}{{$ET_6$}}   & \multicolumn{1}{r|}{15,1\%}                & \multicolumn{1}{r|}{1,2\%}             &  & {$BL_6$}   & 52,5\%                & 0,4\%             \\ 
\cline{1-3} \cline{5-7} 
\multicolumn{1}{|l|}{{$ET_7$}}   & \multicolumn{1}{r|}{-}                 & \multicolumn{1}{r|}{23,3\%}            &  & {$BL_7$}   & 45,0\%                & 0,5\%             \\ 
\cline{1-3} \cline{5-7} 
                                     & \multicolumn{1}{l}{}                       & \multicolumn{1}{l}{}                   &  & {$BL_8$}   & -                 & 24,3\%            \\ \cline{5-7} 
\end{tabular}
\caption{Percentage of frequencies of numbers of zero weekly aggregated losses (column: $\#0s$) and percentage of severity of losses (column: Severity), when grouped by Event Types (left table) and Business Lines (right table).\\ 
\footnotesize{Event types: 1. Internal Fraud, 2. External Fraud, 3. Employment Practices \& Workplace Safety, 4. Clients, Products \& Business Practices, 5. Damage to Physical Assets, 6. Business Disruption \& System Failures, 7. Execution, Delivery \& Process Management. Business lines: 1. Corporate Finance, 2. Trading and Sales, 3. Retail Banking, 4. Commercial Banking, 5. Payment and Settlement, 6. Agency and Custody, 7. Asset Management, 8. Retail Brokerage. }}
\label{tab:ETBLdesc}
\end{table}

\begin{figure}[ht!]
\centering
\begin{tabular}{cc}
\includegraphics[scale=0.4]{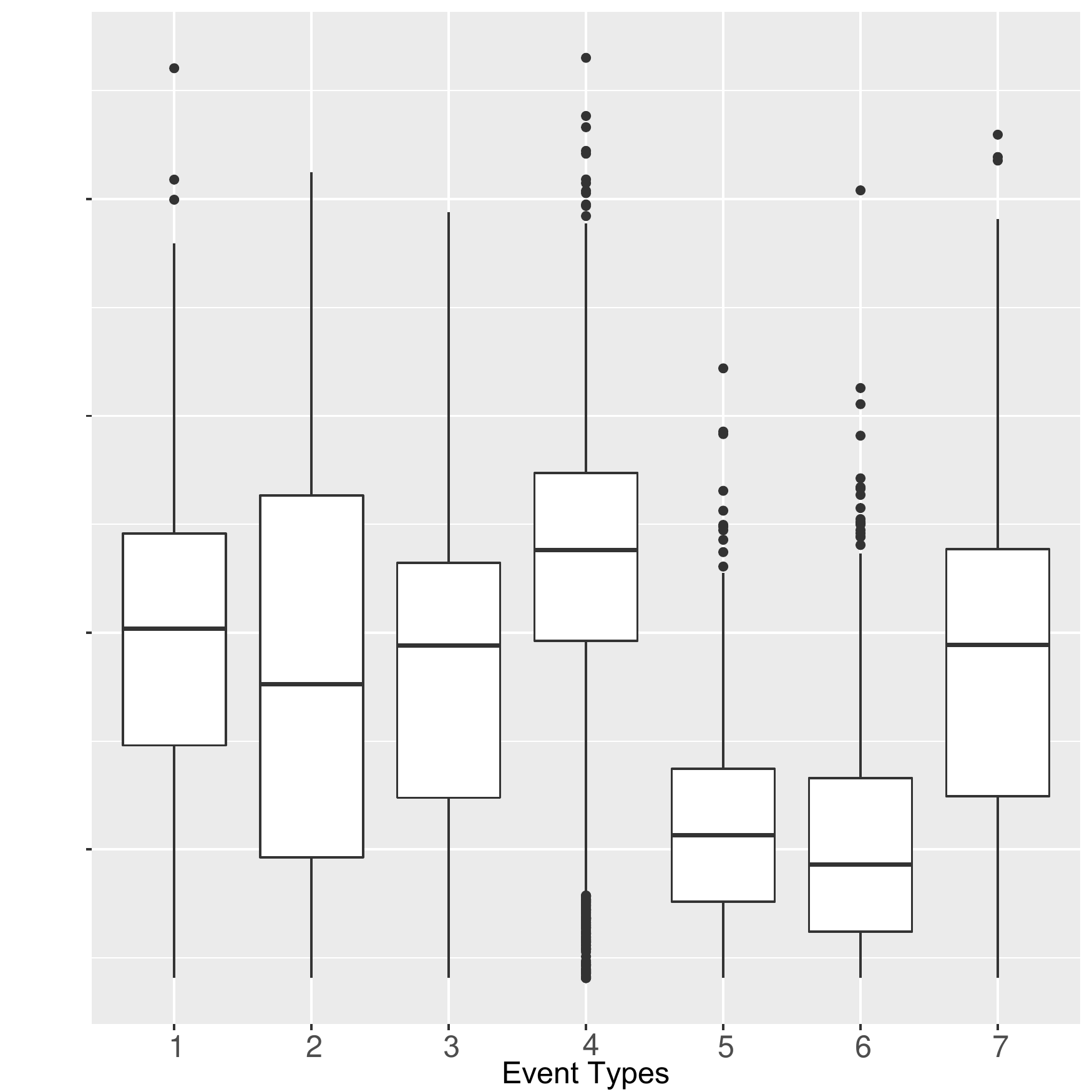}
&
\includegraphics[scale=0.4]{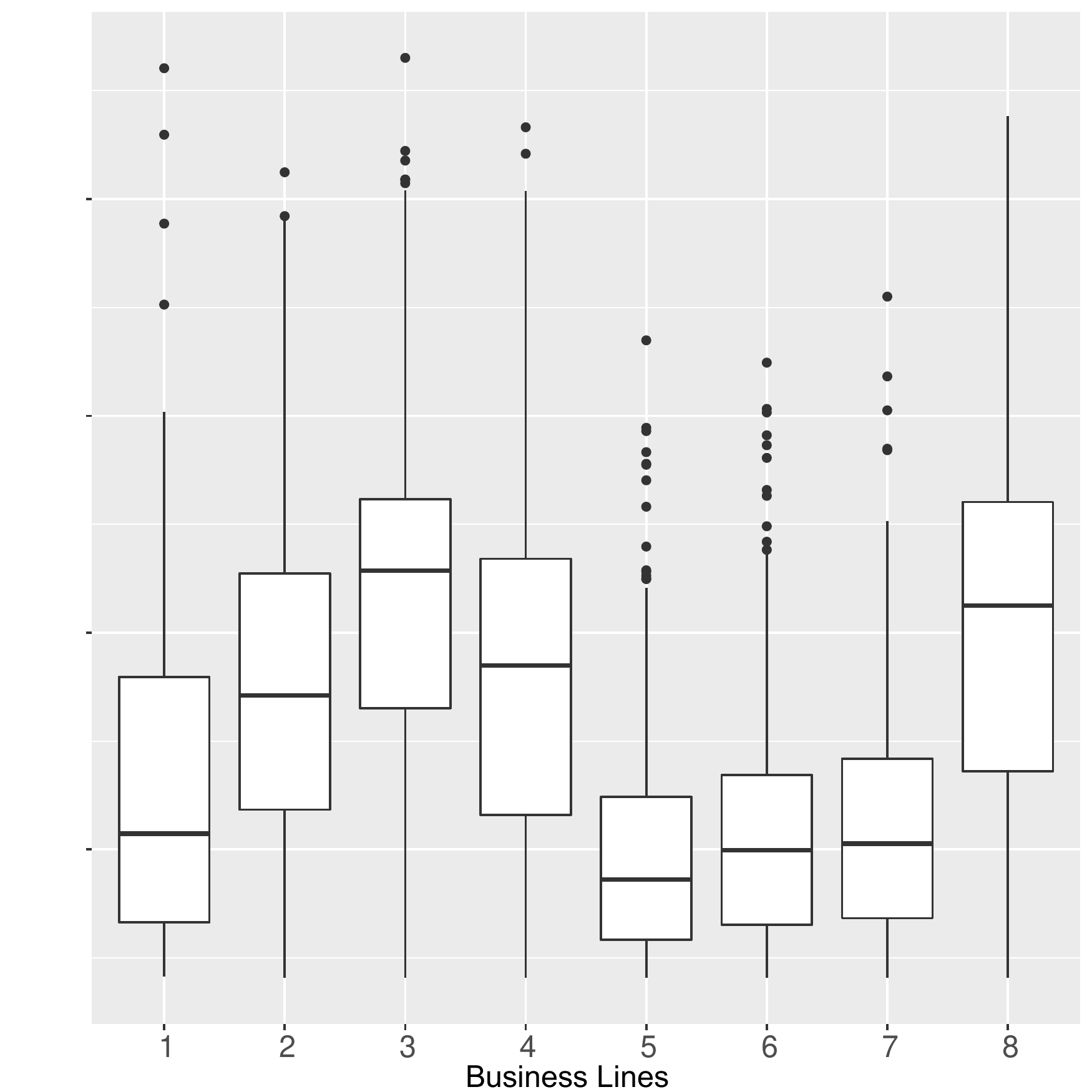}
 \end{tabular}
\caption{Boxplots of weekly aggregated severity of  operational risk losses (in log-scale)  grouped by $ET$s (left plot) and $BL$s (right plot).}
\label{fig:boxplot}
\end{figure}

\subsection{Marginal modelling and estimation}\label{s31}

Extreme value theory and regular variation is one of the basic tools in operational risk; see e.g. the books \citet{Shev2, EKM1997, McNeilFreyEmbrechts2005, Shev} or the articles \citet{BockerKluppelberg2010, ChavezDemoulinEmbrechtsNeslehova2006, ChavezDemoulinetal2015}. 
After aggregating the losses weekly, as a model for marginal regular variation, we take for every $j\in\{1,\dots,7\}$ the form \eqref{Pareto} in its statistical version given by
\begin{gather}
\pr{ET_j>t}\approx K_j t^{-\alpha}, \quad t>u_j,
\end{gather}
meaning we assume that from a fixed large $u_j$ on that the right-hand side provides a good approximation to the left-hand tail probability. 

Marginal estimation of regularly varying distribution tails, in particular the tail index $\al$, has been studied since Hill's seminal paper; see e.g. \citet[Section~4]{Resnick2007}, or  \citet[Section~6]{EKM1997}, where a number of different methods are discussed.

We rely on the common agreement of being the best method and estimate $\alpha$ and $K_j$ by the POT--method (\textit{peaks--over--threshold}); see \citet[Section~6.5.1]{EKM1997}.
As detailed in Section~\ref{subsec:estimation_procedure}, we proceed as follows.
For each marginal we estimate the shape parameter $\xi$, the scale parameter $\beta$ and the location parameter $u$ of the gpd in \eqref{gpd} by the POT method and then rely on the approximations in \eqref{gpd_app1} and \eqref{gpd_app2}. 

Recall from \eqref{multrv} that multivariate regular variation requires the same tail index $\alpha=1 / \xi$ for every marginal distribution.
For real data, the parameter $\alpha$  will never be estimated as being equal for different marginals, even in the case of  truly multivariate regularly varying data, where the tail indices for different component of the risk vector are theoretically equal.
Estimation in practice will not lead to the same tail index estimate over the components, simply because of limited sample size and estimation error.  

\begin{table}[]
\centering
\begin{tabular}{|c|c|r|r|}
\hline
\textbf{Class} & \textbf{$\wh{\xi_j}$} & \textbf{$\wh{\beta_j}$} & \textbf{$\wh u_j$} \\ \hline
$ET_1$   & 0.61       & 2\,508\,366       & 3\,320\,328    \\ \hline
$ET_2$   & 0.56       & 1\,736\,738       & 4\,150\,559    \\ \hline
$ET_3$   & 0.44       & 571\,195        & 633\,702     \\ \hline
$ET_4$  & 0.59        & 9\,091\,252       & 10\,463\,268   \\ \hline
$ET_5$   & 0.64       & 190\,547        & 187\,932     \\ \hline
$ET_6$   & 0.77       & 105\,049        & 49\,243      \\ \hline
$ET_7$   & 0.70       & 2\,401\,487       & 3\,417\,419    \\ \hline
\end{tabular}
\caption{Estimates for the gpd parameters for weekly aggregated operational losses from the DIPO database.  Column 2:  shape parameter $\xi$, column 3: scale parameter $\beta$, column 4: location (threshold) parameter $u$.}
\label{tab:parest}
\end{table}

As a remedy, for each event type, we take the {mean over the estimates $\wh\xi_j$} in Table~\ref{tab:parest}, such that  $\wh\xi_{mean}= \frac17 \sum_{j=1}^7 \wh\xi_j=0.61471$, resulting in a common tail index estimate $\wh{\alpha}=1/\wh\xi_{mean} =1.62678$.  The standard deviation of the estimate $\wh\xi_{mean}$ is small and the single estimates $\wh\xi_j$ for $ET_j$ for $j=1,\dots,7$ are quite close to the mean, with three values slightly above it and four below.

When computing the asymptotic confidence intervals for each parameter estimate $\wh\xi_j$, we have that all $99\%$ confidence intervals around $\wh\xi_j$ include the value $\wh\xi_{mean} = 0.6147$, while at $95\%$ all but the confidence interval for $\wh\xi_3$ does not include this value; however, the upper bound of its confidence interval is still very close, being equal to 0.6117.

Under multivariate regular variation, taking the mean ofer all $\xi_j$ allows us to compensate for potential under- or over-estimation. 
The range of estimates $\wh\xi_j$ for $j=1,\dots,7$ give reasonable possibilities for stress-testing or for evaluation of the potential consequences of lighter or fatter tails;  e.g. by taking the minimum or the maximum over the different shape parameter estimates.  
 After setting $\xi_{mean}=0.6147$ as shape parameter and the original thresholds $u_{j}$ as location parameters in \eqref{gpd}, we re-estimate by (conditional) maximum likelihood estimation the different scale parameters $\beta_j$ as reported in Table \ref{tab:parfit}. 
 Since $\wh\xi_{mean}$ and $\wh\beta_j$ are asymptotically  dependent (see \citet[Section~6.5.1]{EKM1997}) the re-estimation of $\beta_j$ will adjust for possible misspecification of $\wh\alpha$.
To assess the fit of all marginal models, we perform Kolmogorov-Smirnov goodness of fit tests that, as displayed in column 5 of Table~\ref{tab:parfit}, results in not rejecting the null hypothesis of the goodness of fit of the estimated distribution for each $ET_j$ for $j=1,\dots,7$.   
 Moreover, we inspect the QQ-plots in Figure \ref{fig:qqplotET}:
 they provide further evidence of the fit of the gpd to all marginals for  $\wh\xi_{mean}$ and the scale parameters $\wh\beta_j$ reported in Table~\ref{tab:parfit}: only the event type data $ET_2$ and $ET_3$ show some deviations in the right tail. 
 
For data sets with decidedly different parameters $\al_j$, it is adequate to transform all marginal distributions to have the same parameter $\alpha$, which is occasionally done in extreme value statistics, and is indeed a prerequisite to all copula-based estimation. However, this introduces an extra estimation error into the procedure. 
Our method, which is justified for this data set, has the advantage that all quantities of interest are always measured in EUR, as e.g. Value-at-Risk, Conditional-Tail-Expectation, as well as risk contributions, which one looses when transforming the marginals in a non-linear way. Moreover, throughout this analysis all tail risk measures are estimated from the original data, assessing risk exactly from them.
By goodness of fit tests, confidence intervals and visual inspection of QQ-plots, we find that the gpd marginals with common $\wh\alpha$ provide a reasonable fit to the real-world data.

 \begin{figure}[htbp!]
\centering
\begin{tabular}{ccc}
\subfigure{\includegraphics[width=5cm]{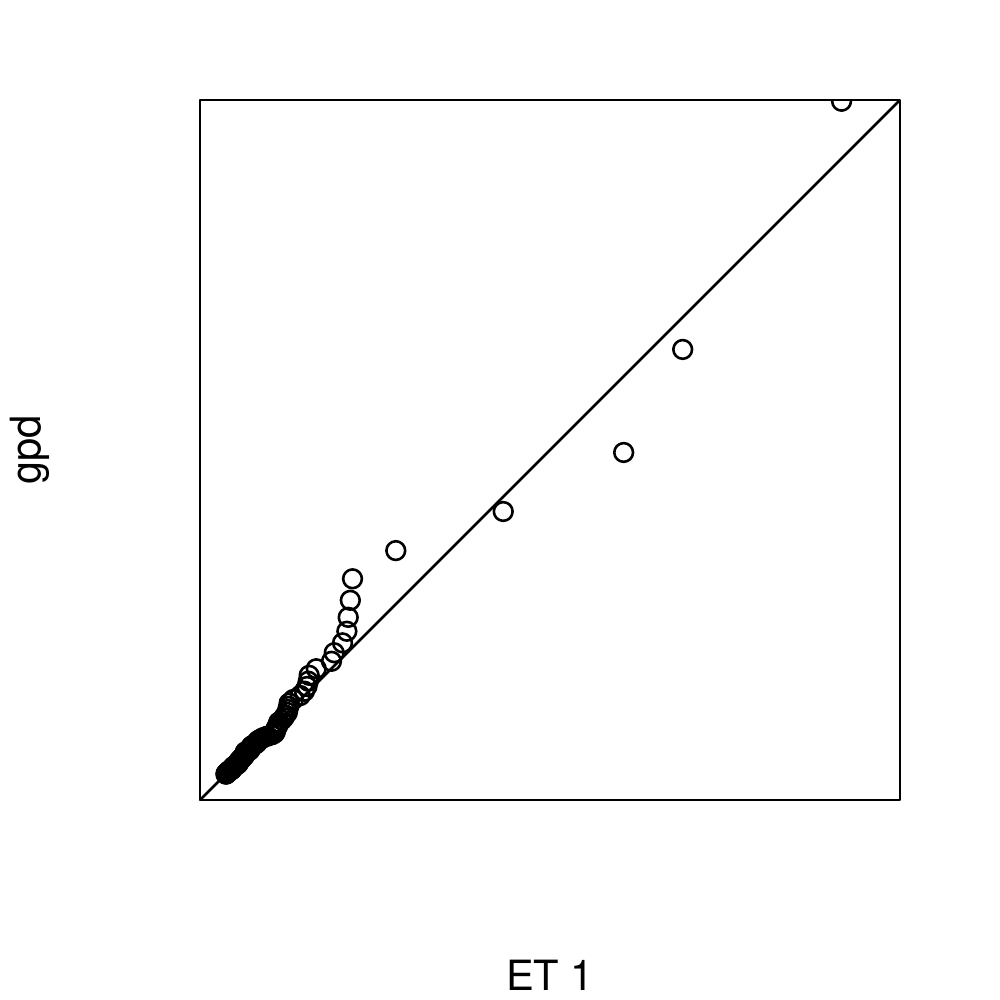}} 
   & \subfigure{\includegraphics[width=5cm]{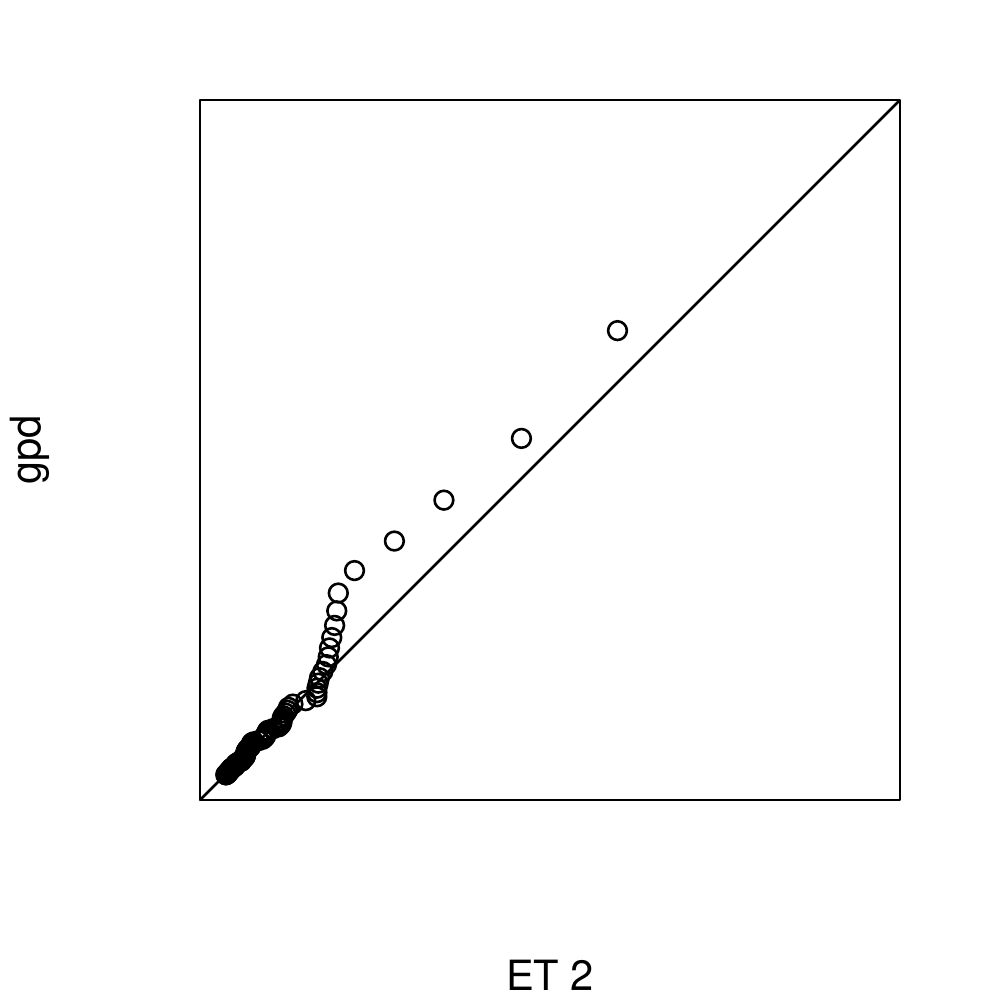}} 
& \subfigure{\includegraphics[width=5cm]{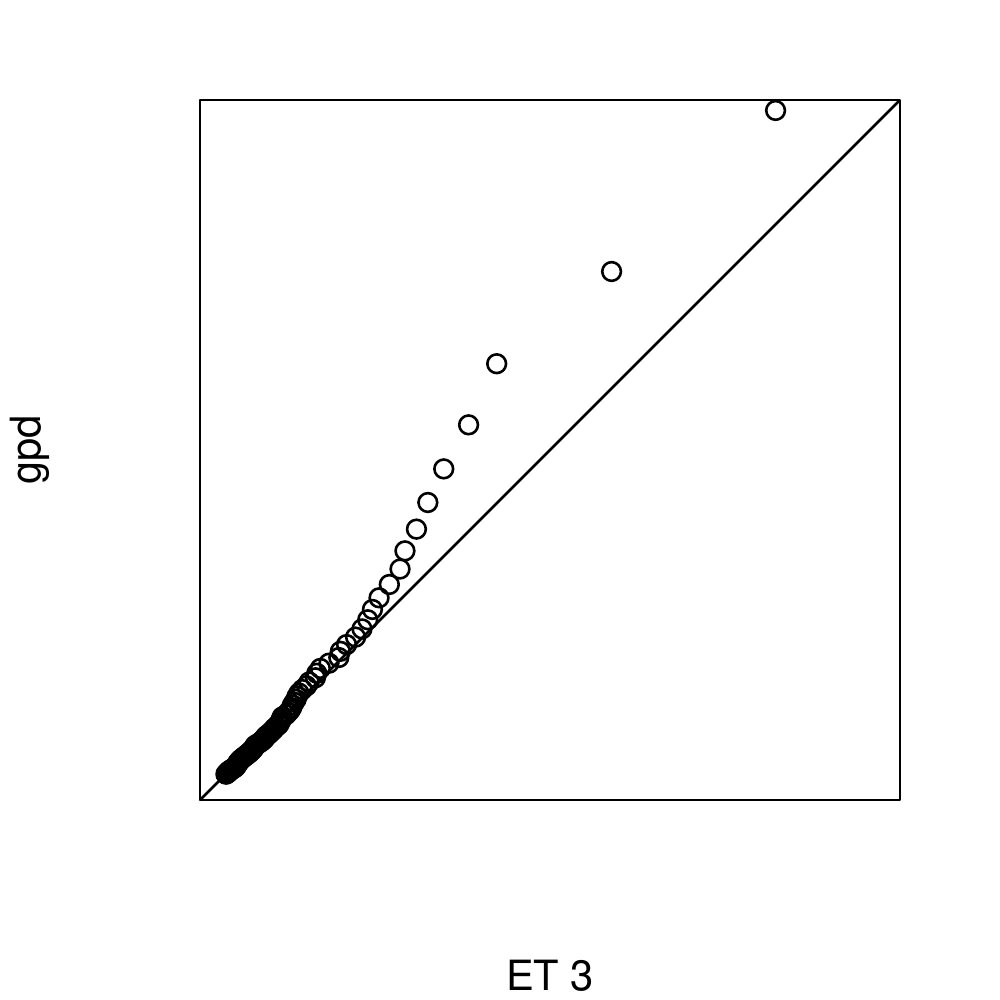}}  \\ 

\subfigure{\includegraphics[width=5cm]{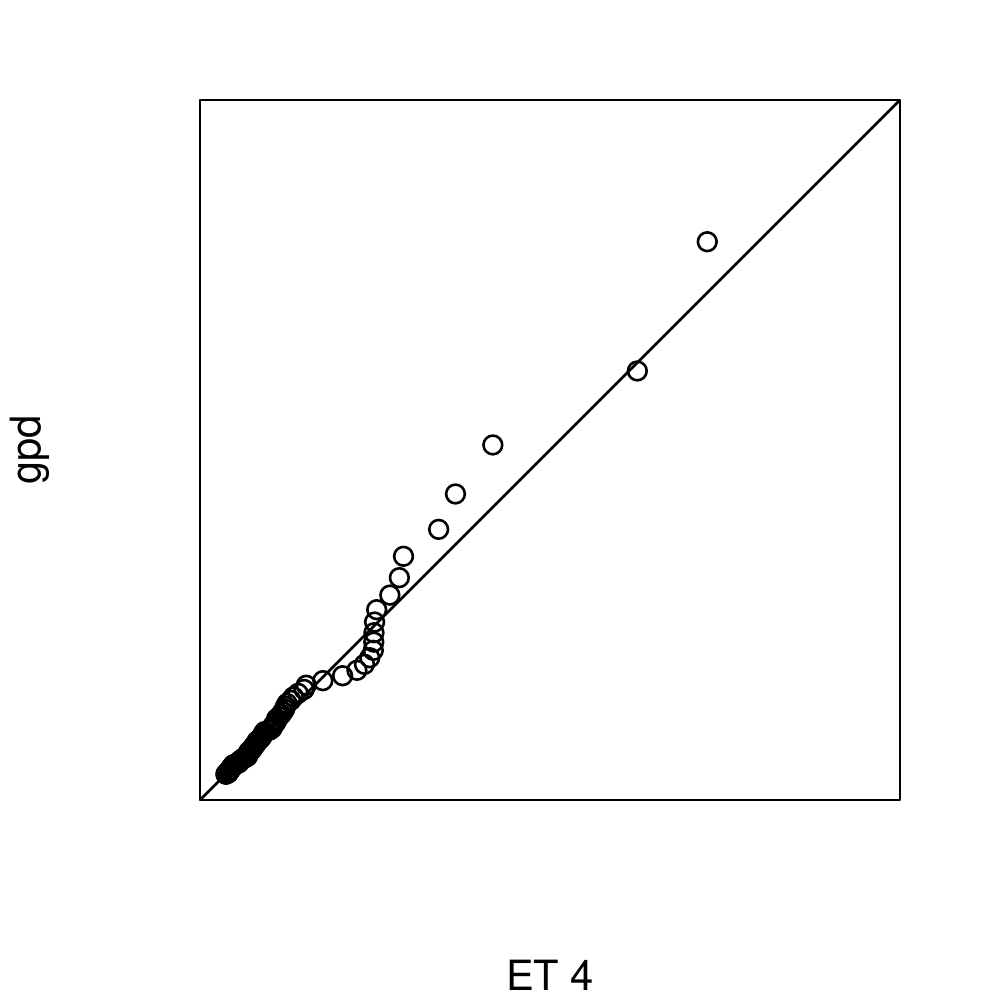}} 
   & \subfigure{\includegraphics[width=5cm]{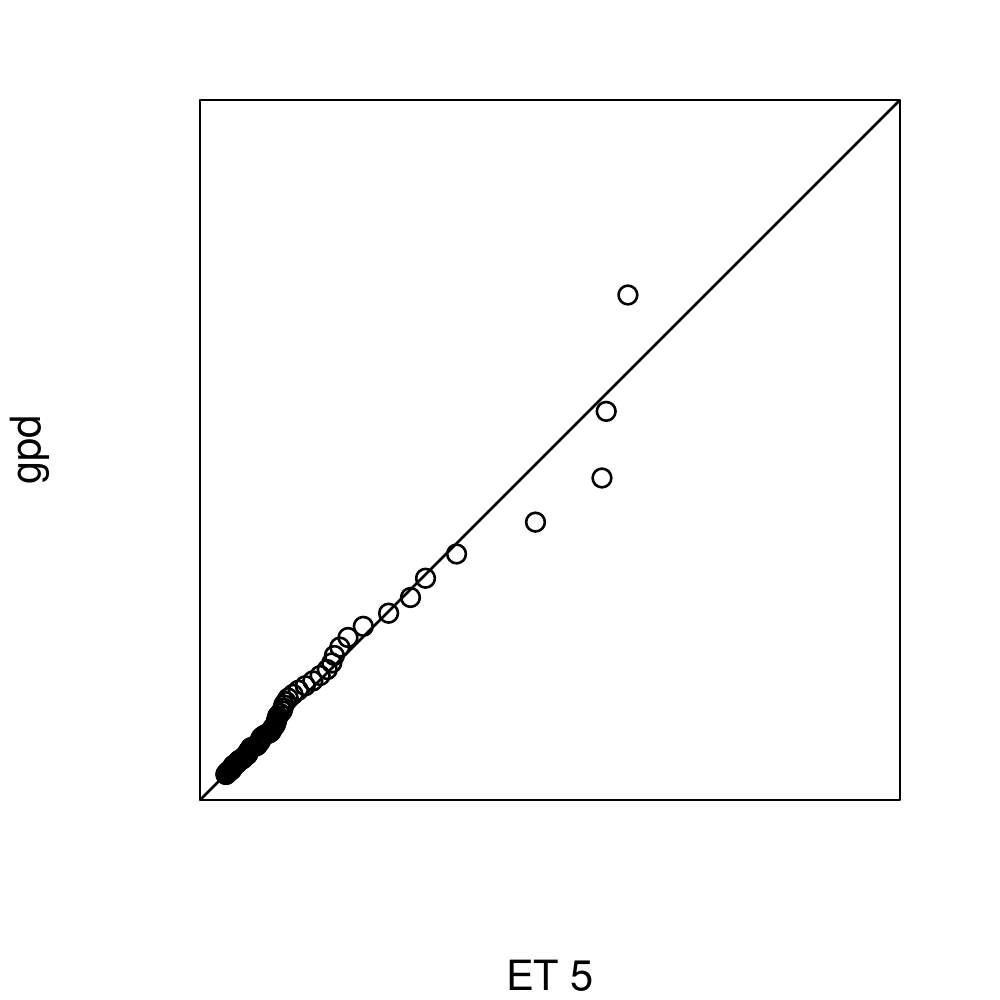}} 
& \subfigure{\includegraphics[width=5cm]{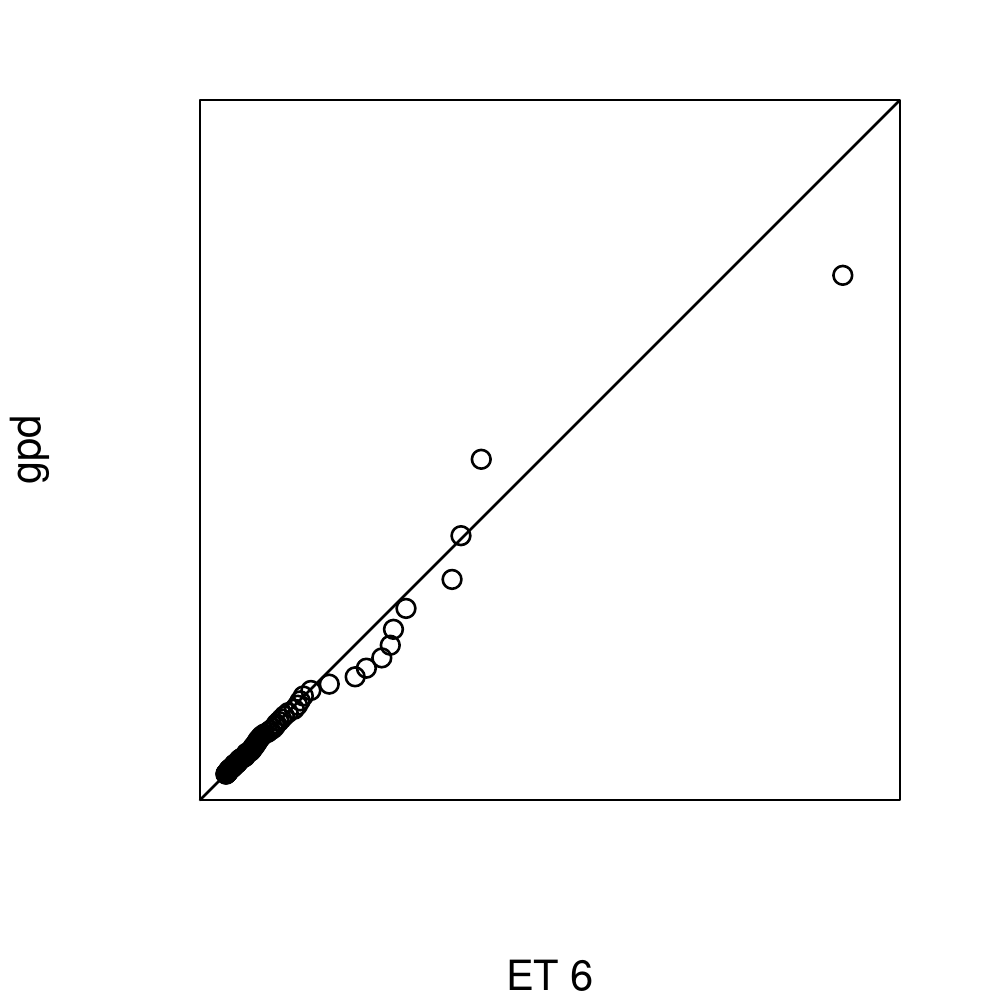}} \\
\ &\subfigure{\includegraphics[width=5cm]{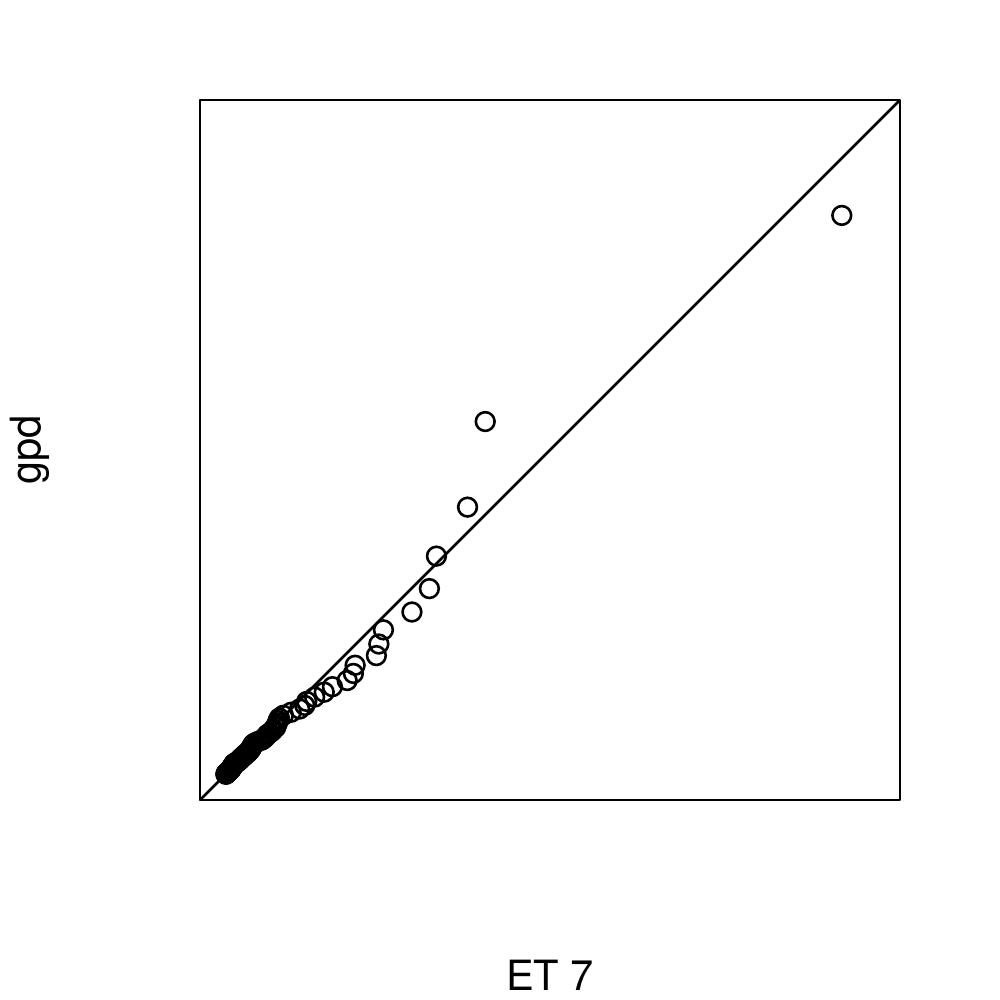}} & \ 
	\end{tabular}
\caption{ QQplot of DIPO data vs. gpd with $\wh\xi_{mean}= 0.61471$
and estimated scale and location parameters from Table~\ref{tab:parfit} }.
\label{fig:qqplotET}
\end{figure}

\begin{table}[]
\centering
\begin{tabular}{|l|c|c|r|r|c|}
\hline
\textbf{Class} & \textbf{$\wh\xi_{mean}$} & \textbf{$\wh\alpha$} &\textbf{$\wh{\beta}_j$} & \textbf{$\wh{u}_j$} & \textbf{p-value  KS} \\ \hline
$ET_1$   & 0.6147    & 1.6268      & 2\,509\,123       & 3\,320\,328    & 0.9601               \\ \hline
$ET_2$  & 0.6147     & 1.6268    & 1\,680\,229       & 4\,150\,559    & 0.3115               \\ \hline
$ET_3$   & 0.6147    & 1.6268      & 513\,882        & 633\,702     & 0.6685               \\ \hline
$ET_4$   & 0.6147    & 1.6268      & 8\,923\,344       & 10\,463\,268   & 0.9361               \\ \hline
$ET_5$  & 0.6147     & 1.6268     & 193\,148        & 187\,932     & 0.9998               \\ \hline
$ET_6$  & 0.6147     & 1.6268     & 115\,836        & 49\,243      & 0.9251               \\ \hline
$ET_7$   & 0.6147    & 1.6268      & 2\,535\,017       & 3\,417\,419    & 0.3557               \\ \hline
\end{tabular}
\caption{Re-estimated scale parameters $\wh{\beta}$, when using the same shape parameter $\wh\xi_{mean}$  for all event types.  Column 2:  shape parameter $\xi_{mean}$, column 3: scale parameter $\wh{\beta}$, column 4: location (threshold) parameter $\wh{u}$, column 5: $p$-value of Kolmogorov-Smirnov test.}
\label{tab:parfit}
\end{table}

\subsection{Independence of  loss variables and networks}\label{s32}

One assumption of the bipartite operational risk model and needed in Theorem~\ref{prop:varasym} is the independence of the random network fraction matrix $A$ and the regularly varying event type vector $ET$. 
Hence, we test the null hypothesis of stochastic independence of $ET$s and $A$ against the alternative of dependence; see \citet{Bakirov, energypackage}. The test is based on the hypothesis that a multivariate distribution function (here of $A\times ET$) decomposes in a product of two distribution functions, here that of $A$ and that of $ET$.
For running the test within the framework of \citet{energypackage}, we used the function \texttt{indep.test()} with method "mvI" and parameter $R=199$. 
The test statistic is then equal to $I=0.02490639$ (with value 0 corresponding to independence) with a corresponding $p$-value of  0.3618.

Instead of estimating the spectral measure of the event type vector \ET\ explicitly, we will in the next section estimate the risk constants, which are in the focus of our investigation.

\subsection{Risk estimation}

Recall that according to Theorem~\ref{tail-risk-measures} the asymtotic behaviour of business line specific risk in terms of $\VaR$ is given by 
\begin{gather*}
\VaR_{1-\gamma}(BL_i) \sim (C^i)^{1/\alpha}\gamma^{-1/\alpha}  \mbox{ and }
\VaR_{1-\gamma}(\|BL\|) \sim (C^S)^{1/\alpha}\gamma^{-1/\alpha},\quad \gamma \rightarrow 0,
\end{gather*}
with $\CoTE_{1-\gamma}(BL_i)$ and $\CoTE_{1-\gamma}(\|BL\|)$ having a similar form with the same risk constants $C^i$ and $C^S$.

The key to find an estimate for the risk is---besides having a valid estimate for $\alpha$ in hands---handling the constants $C^S$ and $C^i$ as given in \eqref{constantsIS} from a statistical perspective. 
In Section~\ref{s31} we have estimated $\alpha$ and $K_1,\dots, K_7$, 
and we take $A$ as the realized network fraction matrix.
What remains to be done for estimating $C^i$ and $C^S$, is an estimate for the dependence structure in terms of the spectral measure $\Gamma$ of \ET.

For that purpose, we consider the general form of these constants given in \eqref{constantsIS} which incorporate the spectral measure in two different integrals. 
The spectral measure is a tool to capture the extremal dependence between the event type losses or, in other words, the dependence between rare events.
We estimate $\Gamma$ not explicitly (which we could), but since our focus is on the risk constants, we estimate those constants immediately by using the empirical version of $\Gamma$ in \eqref{constantsIS}.
As $\Gamma$ is a measure for large values of the data, 
only those multidimensional event type observations are taken into account that are large in some norm: As a natural norm we take the sum-norm of the vector such that only those sums of data enter into the estimation, which exceed a certain threshold. 
To choose this threshold is a meaningful point as it marks the border between the so-to-say usual observations and the area, where the extremes begin. In spirit this is similar to choosing the threshold values $u_j$ for the marginal distributions.
As we have spelt out there, the marginal thresholds are chosen as to allow for a best approximation of a gpd to the conditional marginal distribution. 

In a similar spirit for the spectral measure we choose a threshold, which approximate the product limit in \eqref{multrv} in an optimal way; see \citet{WD,energypackage}. 

Besides the estimates for the tail index $\alpha$ which we obtained in Section~\ref{s31}  and  the spectral measure representing dependence information between the event types, one other ingredient of the model has not yet been mentioned.
The network of the bipartite graph represented by snapshots of the weekly fraction matrices need to be incorporated as well. 
This network first implies a change of dimension from the seven event types to the eight business lines as the network fraction matrix is applied to the event type vector. As a result, we obtain the dependence structure  between the business lines.
All that brought together leads to formulae for the estimators of the risk constants as in \eqref{est_ru}, \eqref{est_agent},  and capital allocation constants \eqref{est_ru}.

As discussed in Section \ref{model}, the multivariate regularly varying model allows us not only to estimate the VaR and CoTE at aggregated level for the entire system, but also at the level of a single business line. 
The two situations are described below for the VaR.

\subsection{Risk estimation at the individual and the system level}

In order to find estimates for the system VaR, we estimate the constant $(C^S)^{1/\alpha}$ based on the marginal estimates and formula \eqref{est_ru}, which takes care of the dependence structure. 
This constant can then be scaled with the factor $\gamma^{-1/\alpha}$ to obtain an approximation of the VaR at confidence level $1-\gamma$ for small $\gamma$.

Similarly to various estimators for the tail index $\al$, there also  exist different estimators for the scaling constants $K_j$ for $j=1,\dots,7$.
In Section~\ref{subsec:estimation_procedure},  we outline two methods to estimate the scaling constants $K_j$ for $j=1,\dots,7$ for the tail behaviour of each event type as in \eqref{Pareto}. 
In our study we have compared both estimates and the resulting estimates for the risk constants $(C^S)^{1/\alpha}$, the business line specific constants $(C^{i})^{1/\alpha},\ i=1,\dots,8,$ from \eqref{constantsIS} and for the capital allocation constants from \eqref{capall}.
Our conclusion is that the semi-parametric estimator \eqref{K-estimatorMVR} yields slightly different values than the parametric estimator \eqref{K_POT}. It is, however, remarkable that both estimates yield the same order of $\wh K_1,\dots,\wh K_7$.
Moreover, \eqref{K_POT} results in more stable estimators than the parametric estimator \eqref{K_POT}.

\begin{figure}
\subfigure{\includegraphics[width=8cm]{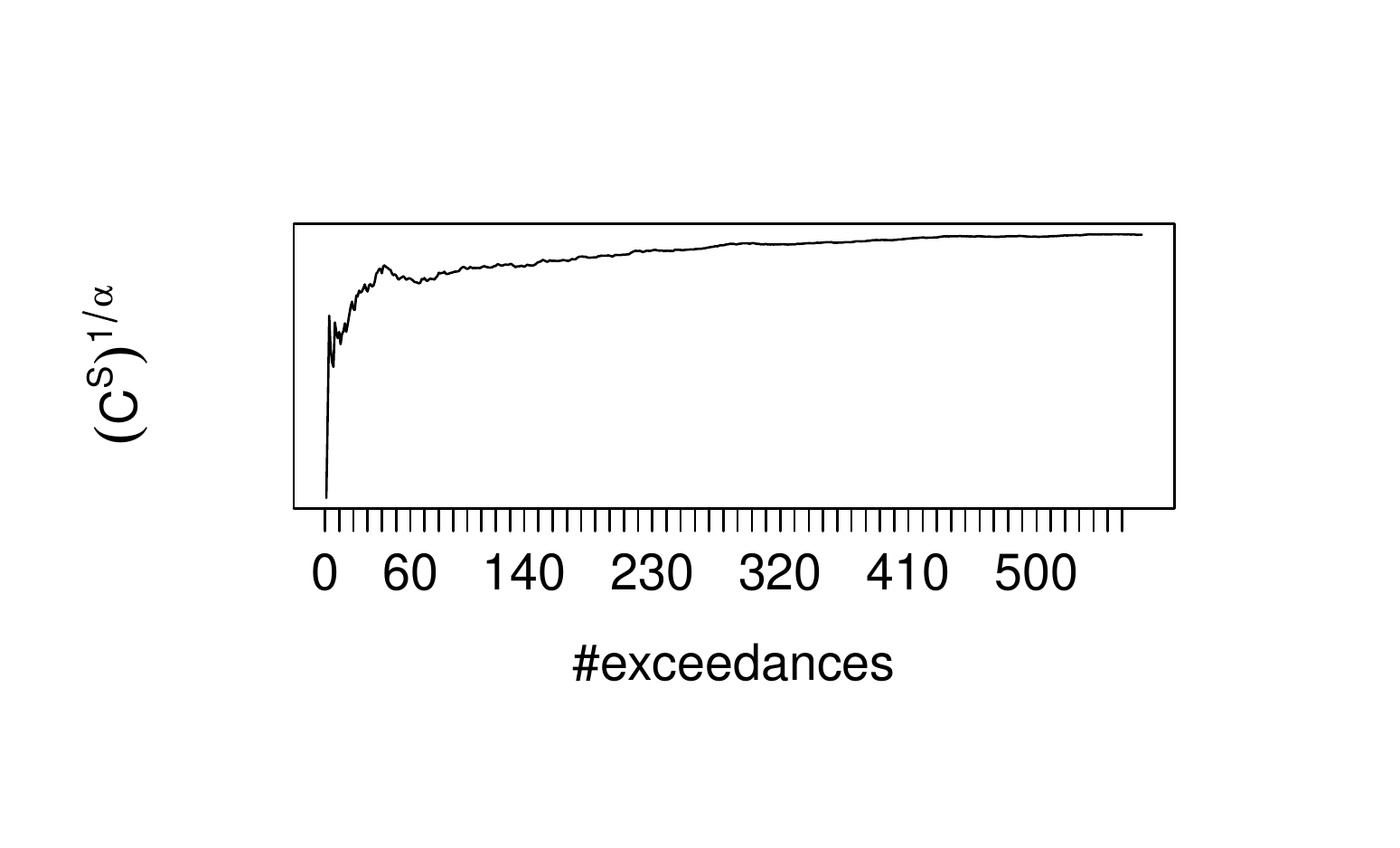}}\hfill
\subfigure{\includegraphics[width=8cm]{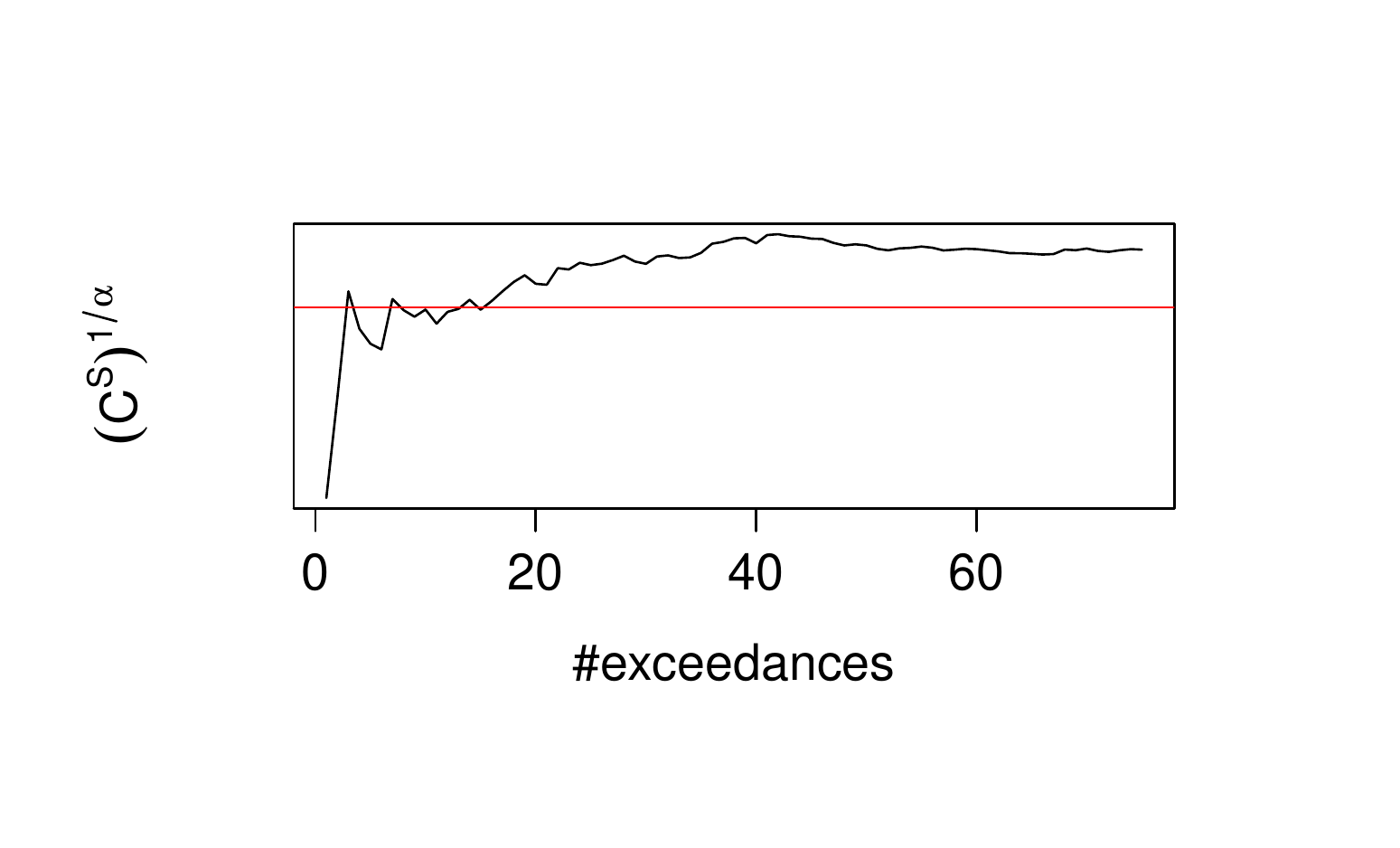}}\\
\vspace{-1cm}
\caption{\label{OPdatawithoutPOT}
VaR constants $(C^S)^{1/\alpha}$ estimated from the DIPO data based on estimates for the scaling constants $K_j$ via \eqref{K-estimatorMVR}. The right hand plot is a zoom into the left hand plot. The red horizontal line corresponds to 
$(\sum_{j=1}^{7} (\wh  K_j)^{1/\alpha}$, which is the value corresponding to asymptotic independence between the event types. }
\end{figure}

We are not allowed to report numerical values, however from \eqref{systembounds}, we find together with the 
 estimated $\wh\al\ge 1$ that $C_{ind}^S\le C_{\Gamma}^S$ indicating the dependence for the business lines.
The risk constant $C_{ind}^S$ can be computed explicitly, since by \eqref{Sum} the VaR of $\|BL\|$ and $\|ET\|$ are the same.

Finally, by considering the bounds for {$\wh\al\ge 1$ for the risk constants from \eqref{individualbounds}} at different confidence levels $1-\gamma$, we illustrate the gap between the VaR estimates based on the two extreme dependence structures of asymptotic independence and asymptotic full dependence for the DIPO data.
In Figure~\ref{depindfigure} we plot  the estimates of the functions $(C^S_{ind})^{1/\alpha}\gamma^{-1/\alpha}$ (dashed) and $(C^S_{dep})^{1/\alpha}\gamma^{-1/\alpha}$  (solid). 
As the solid curve corresponds to  comonotonicity between the different event types, requiring then to sum up the Values-at-Risks of all the event types, we immediately notice that taking into account the multivariate dependence structure leads to considerably lower estimates of the VaR with lower bound given by asymptotic  independence,
Figure~\ref{CiOPBLb} displays
the results for the VaR constants $(C^{i})^{1/\alpha}$ at individual level for both methods. 
A risk constant running below the asymptotic independence bound indicates dependence in the data.

\begin{figure}
\begin{center}{\includegraphics[width=10cm, height=7cm]{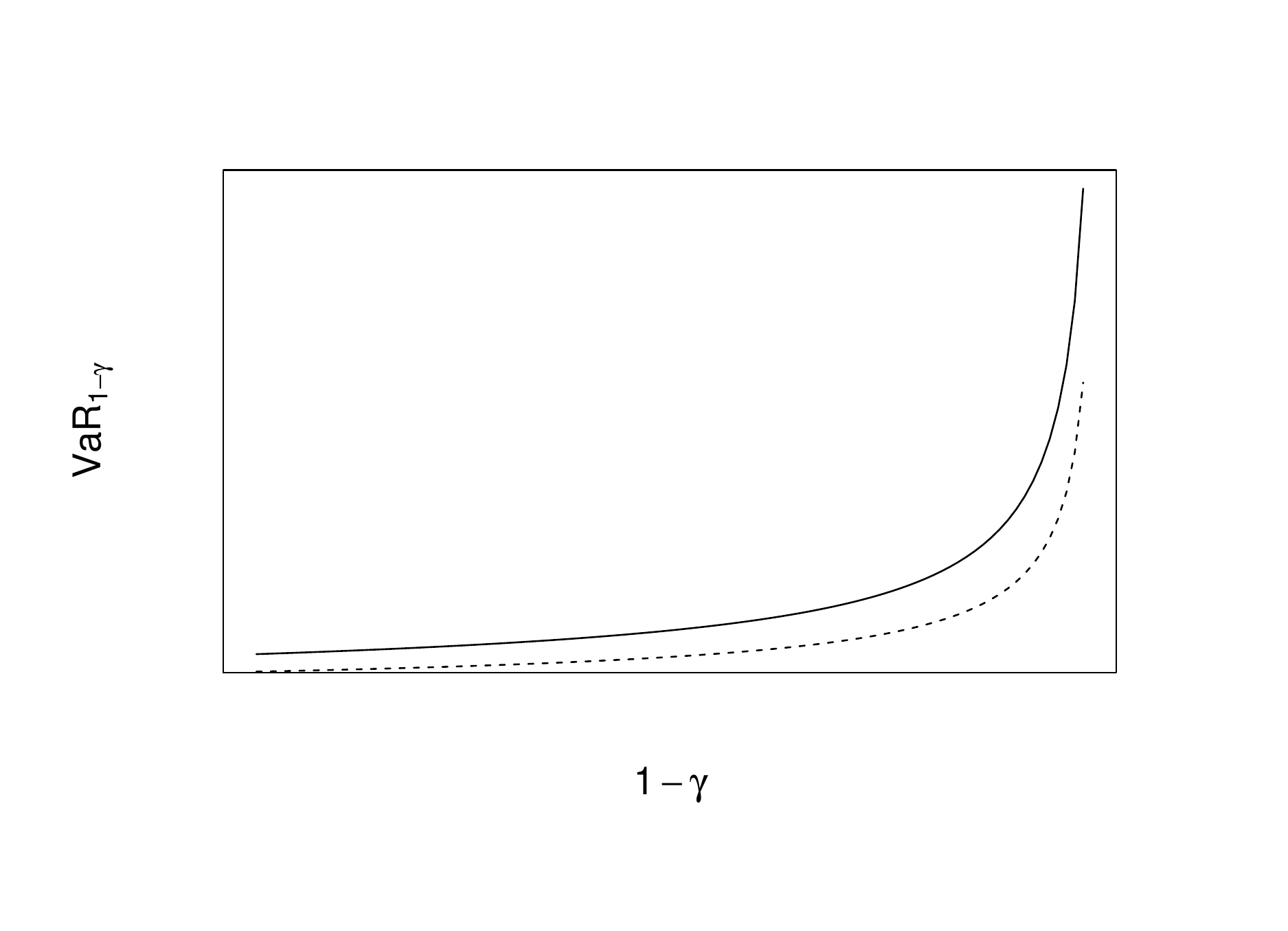}}\end{center}
\vspace{-1.5cm}
\caption{\label{depindfigure}  
VaR bounds estimates of the DIPO data for $(C^S_{ind})^{1/\alpha}\gamma^{-1/\alpha}$ for the asymptotically independent case (dashed) and $(C^S_{dep})^{1/\alpha}\gamma^{-1/\alpha}$ for the asymptotically dependent case (comonotonic) case (solid).  }
\end{figure}

\begin{figure}[htpb!]
\centering
\begin{tabular}{cc}
\subfigure{\includegraphics[width=8cm]{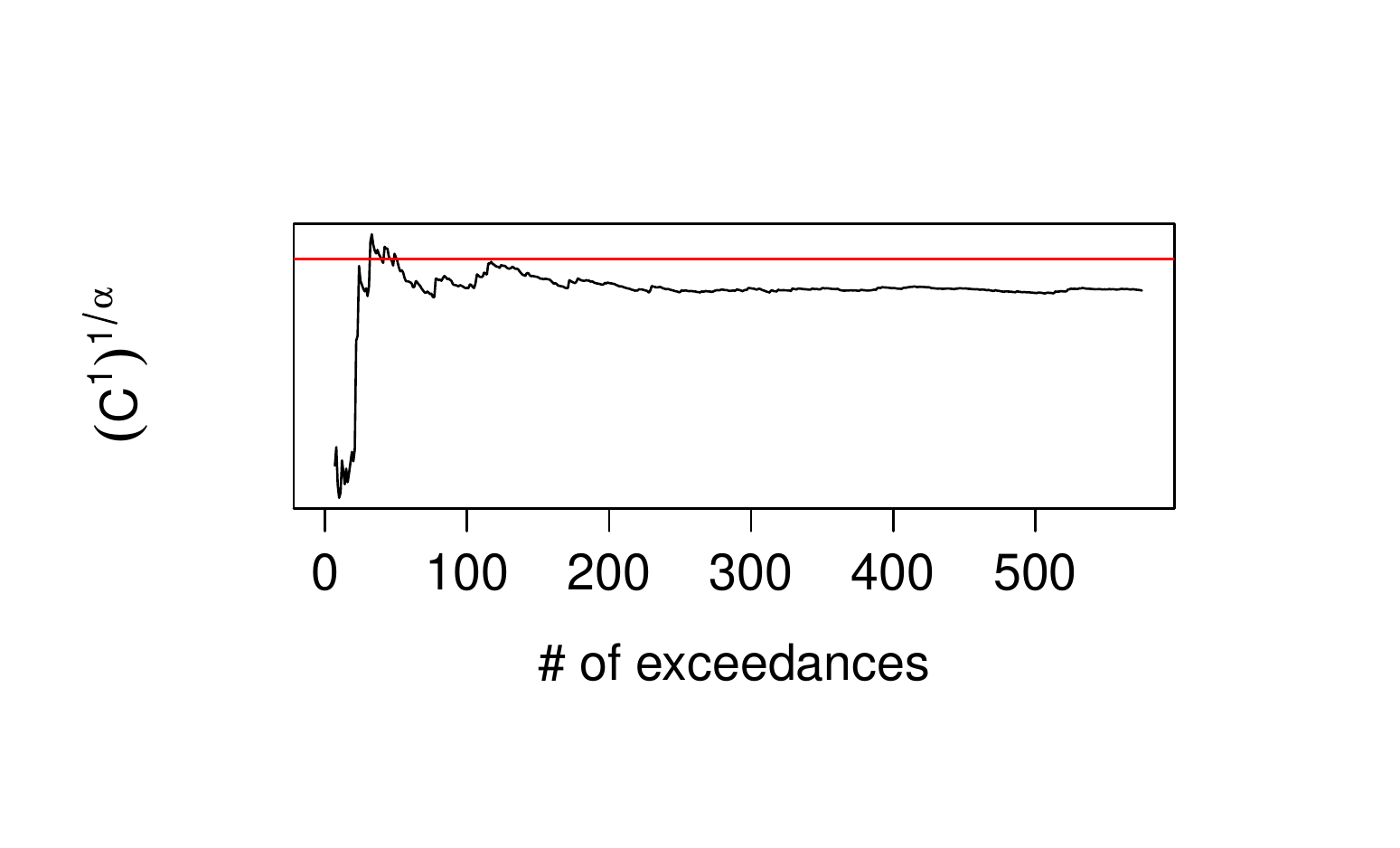}} 
   & \subfigure{\includegraphics[width=8cm]{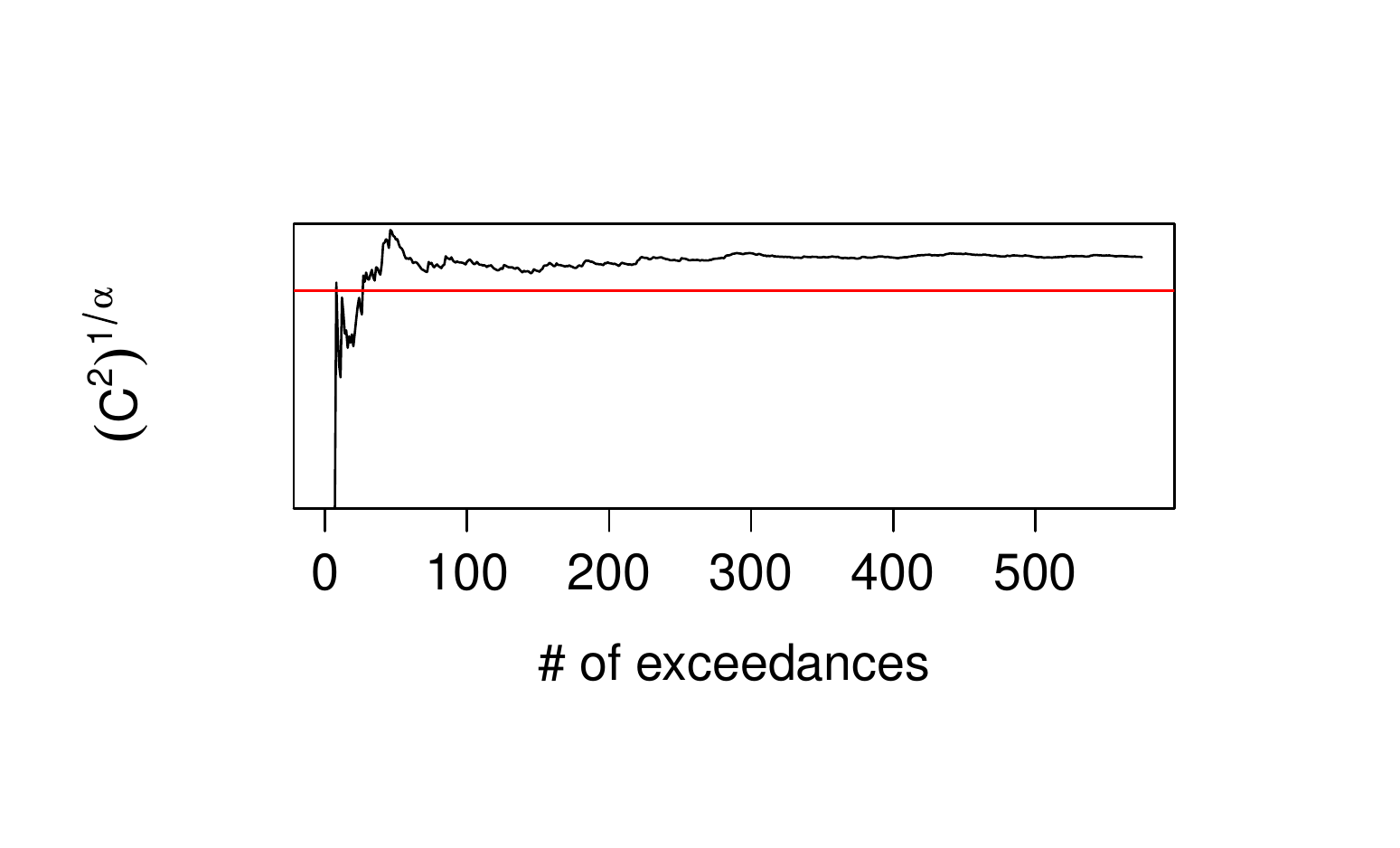}} \\[-1cm]
 \subfigure{\includegraphics[width=8cm]{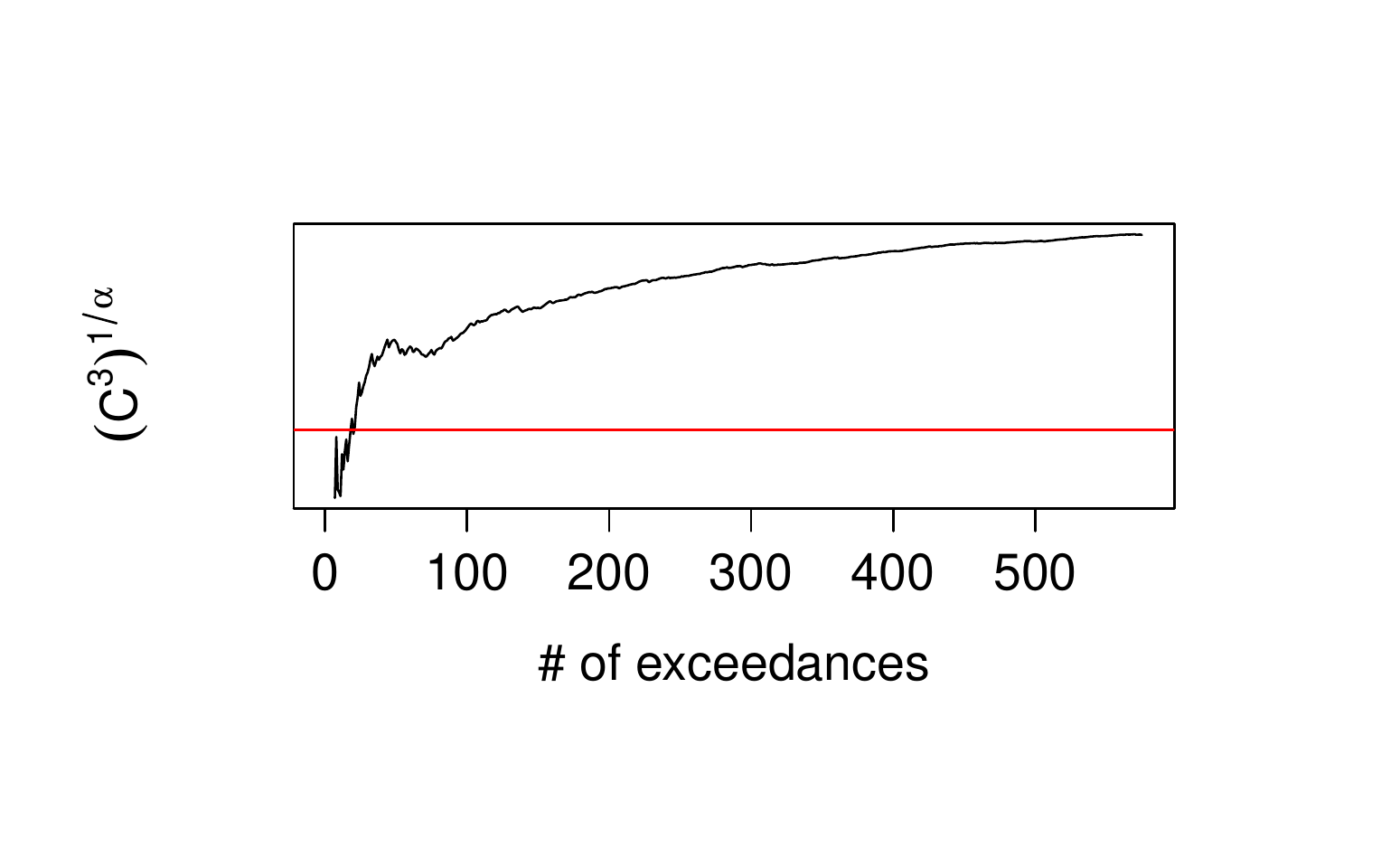}}   
& \subfigure{\includegraphics[width=8cm]{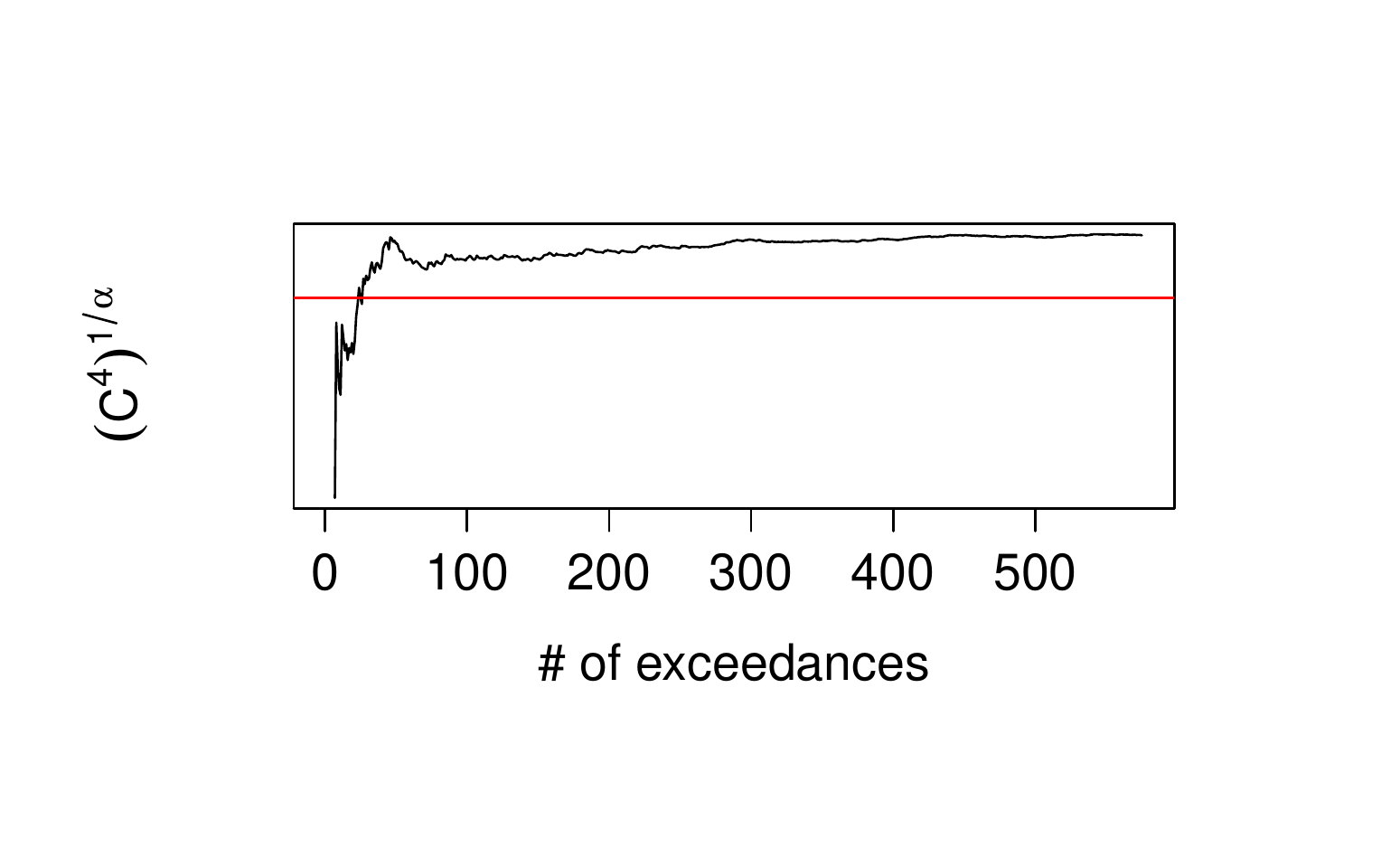}}  \\[-1cm]
\subfigure{\includegraphics[width=8cm]{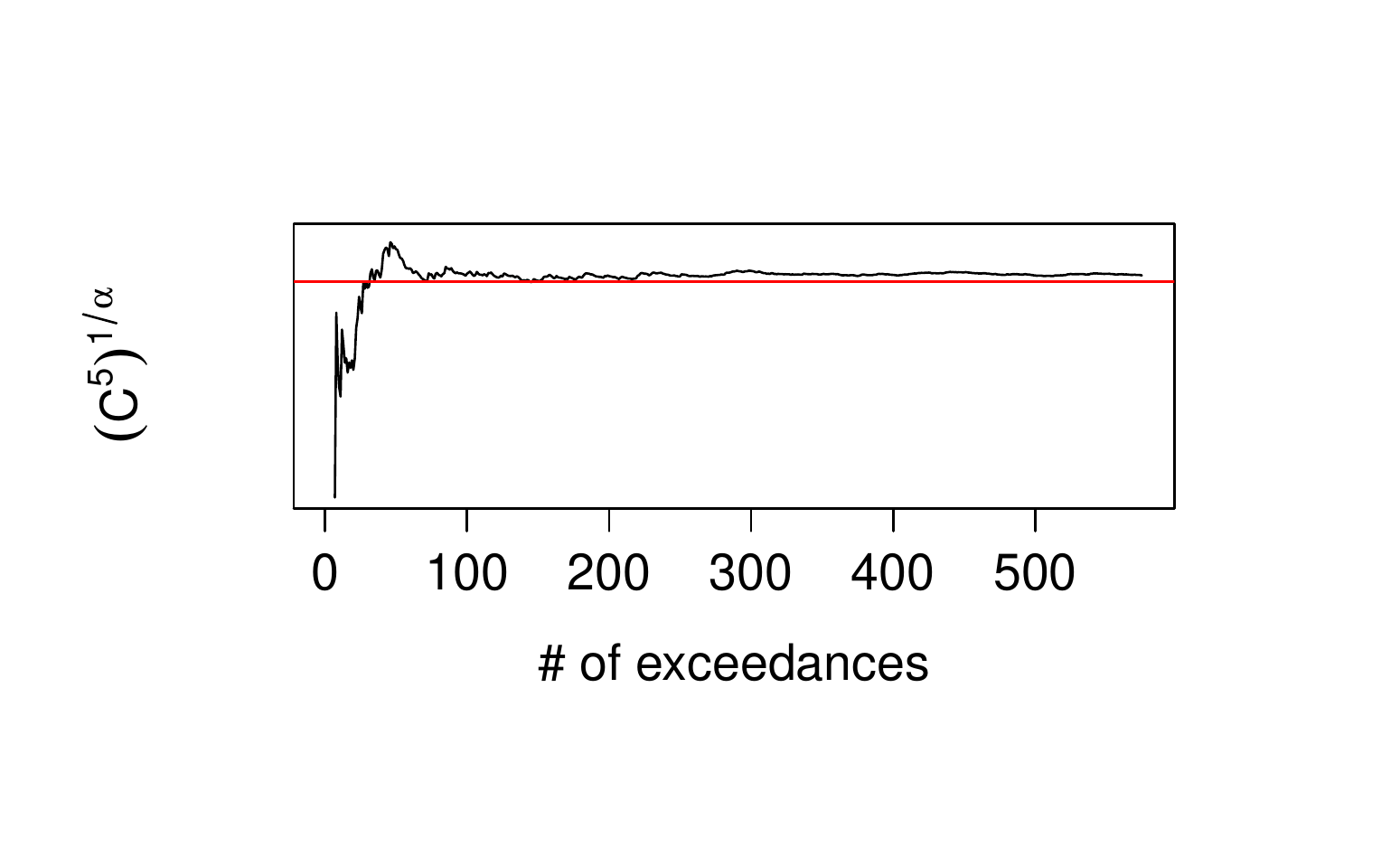}} 
   & \subfigure{\includegraphics[width=8cm]{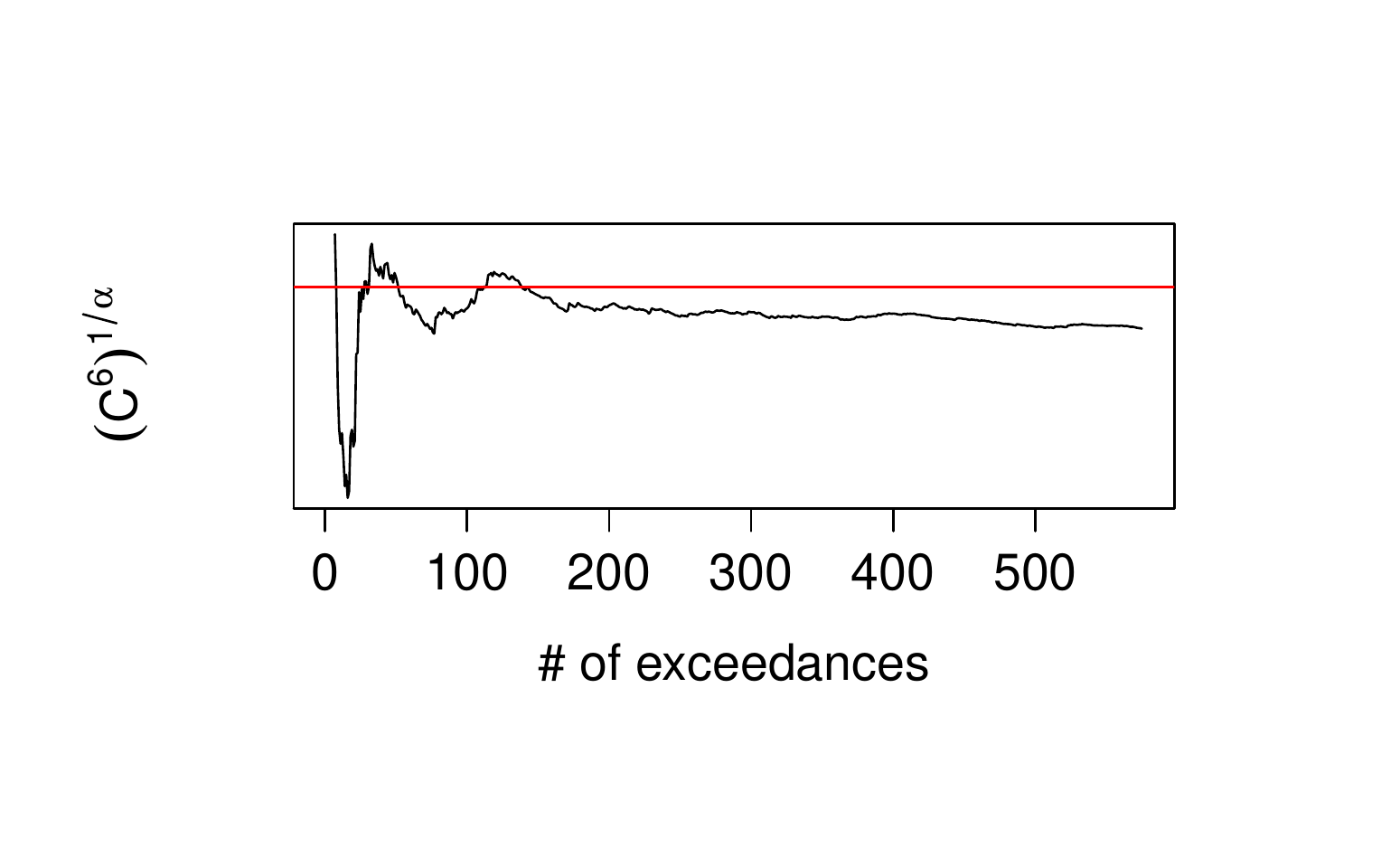}} \\[-1cm]
 \subfigure{\includegraphics[width=8cm]{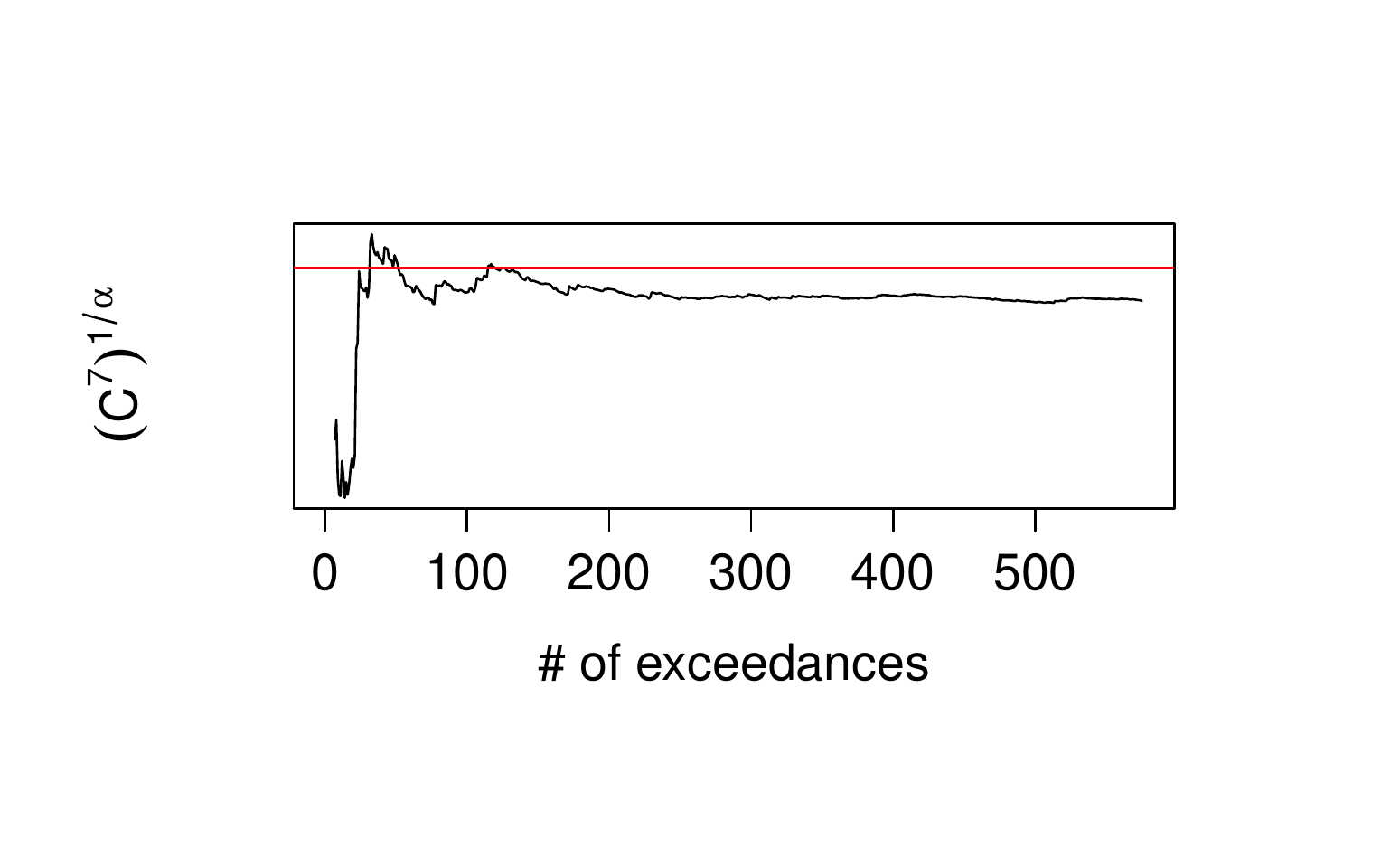}}   
& \subfigure{\includegraphics[width=8cm]{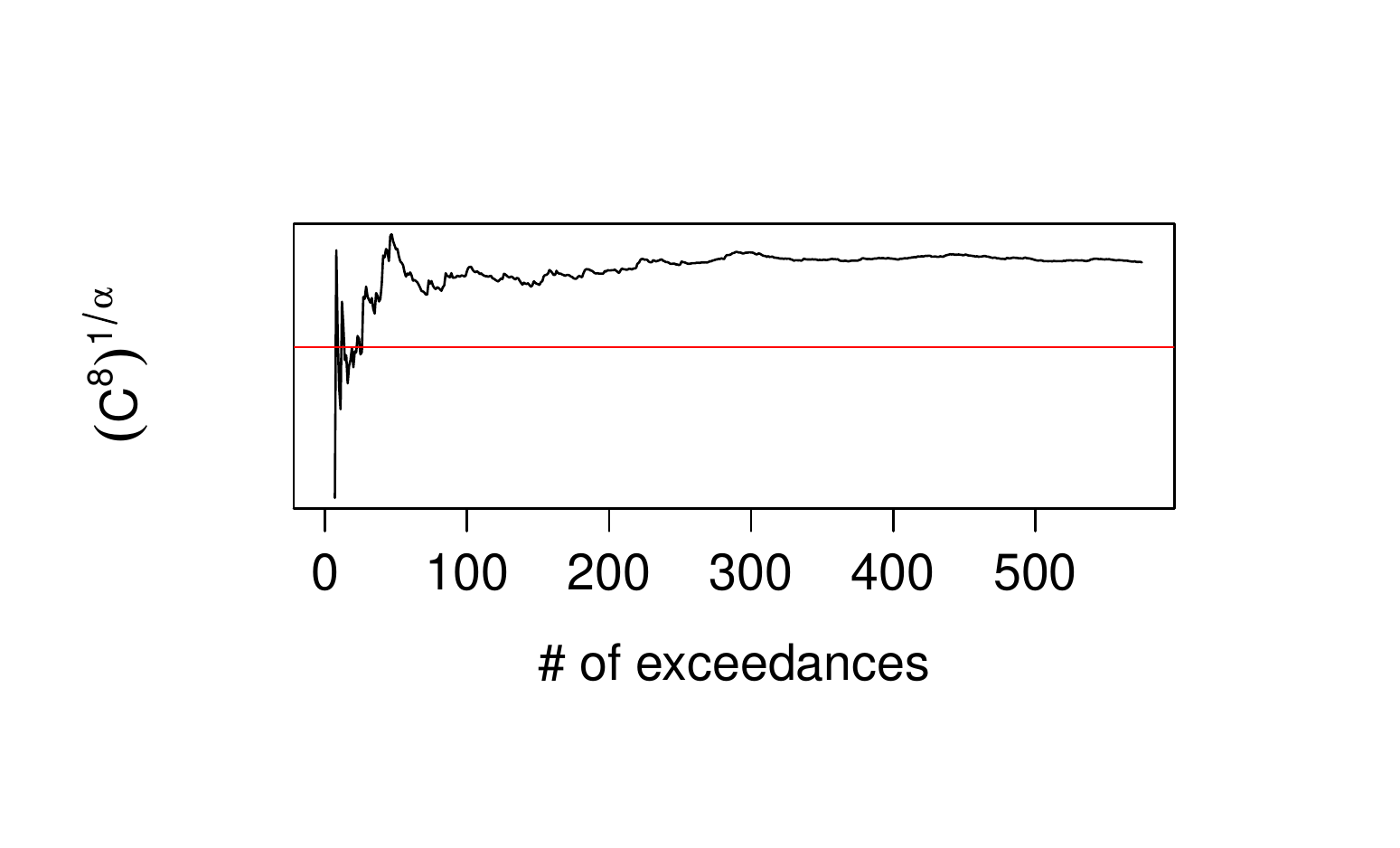}}  
	\end{tabular}
\vspace{-0.5cm}
\caption{ The VaR constant $(C^{i})^{1/\alpha}$ estimated for the business lines $BL_i$ for $i=1,\dots,8$ with the estimates for the scaling matrix $K^{1/\alpha}$ via \eqref{K-estimatorMVR}. 
The red horizontal lines correspond to the values for asymptotic independence between the event types.}
\label{CiOPBLb}
\end{figure}

\subsection{Capital allocation---estimation of the risk contributions}

The estimation of  the risk contributions \eqref{riskcontvari}  and \eqref{riskcontcotei} requires beside an estimate for the tail index $\al$ and  for the risk constant $C^S$, which is given as an integral with respect to the spectral measure $\Gamma$ involving the network fraction matrix $A$, another such integral, which we estimate by its empirical counterpart given in \eqref{est_agent_riskcont}.

The estimation  results for the capital allocation to each business line are shown in Figure~\ref{SeverityDist}, when relying on the POT method. 
We notice immediately, that the capital allocation to each business line are stable already for small $k$ values, between 100 and 200,  and slightly understimating when compared to the theoretical one for $BL_1$, $BL_2$, $BL_5$, $BL_6$, and $BL_7$, overlapping for $BL_4$  and $BL_8$ and strongly underestimating for $BL_3$. 
This is not surprising if we review the descriptive statistics of the DIPO data at the beginning of Section~\ref{opRiskData}. 
In fact, $BL_3$ includes about 43\% of the total empirical severity of the losses with no missing observations. 
Hence, $BL_3$ has---as to be expected---the largest share in the capital allocation. 
Then, $BL_4$ and $BL_8$ have also no missing observation and respectively 18.4\% and 24.3\% of the total severity of the losses, which then allows to reach estimates close to the theoretical ones, even for small $k$. Figure~\ref{VaRFinal} allows to also point out the variability in the VaR estimates for $k=100$ when considering different business lines and confidence levels. As expected, the availability of more observations and of largest total severity results then in a larger variability of estimates for $BL_3$, $BL_4$ and $BL_5$, especially for very high confidence levels.

\begin{figure}[htpb!]
\centering
\begin{tabular}{cc}
\subfigure{\includegraphics[width=8cm]{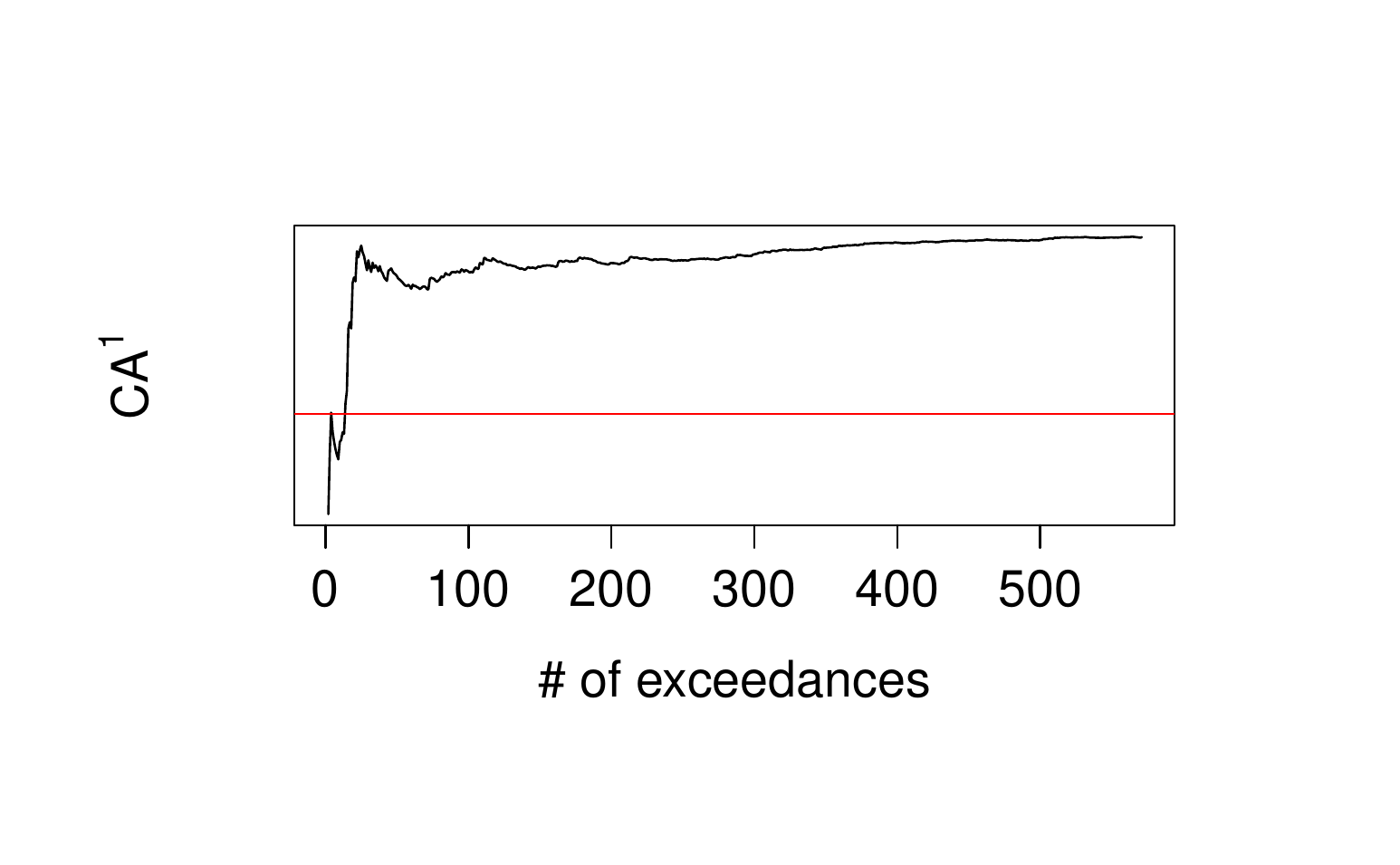}} 
   & \subfigure{\includegraphics[width=8cm]{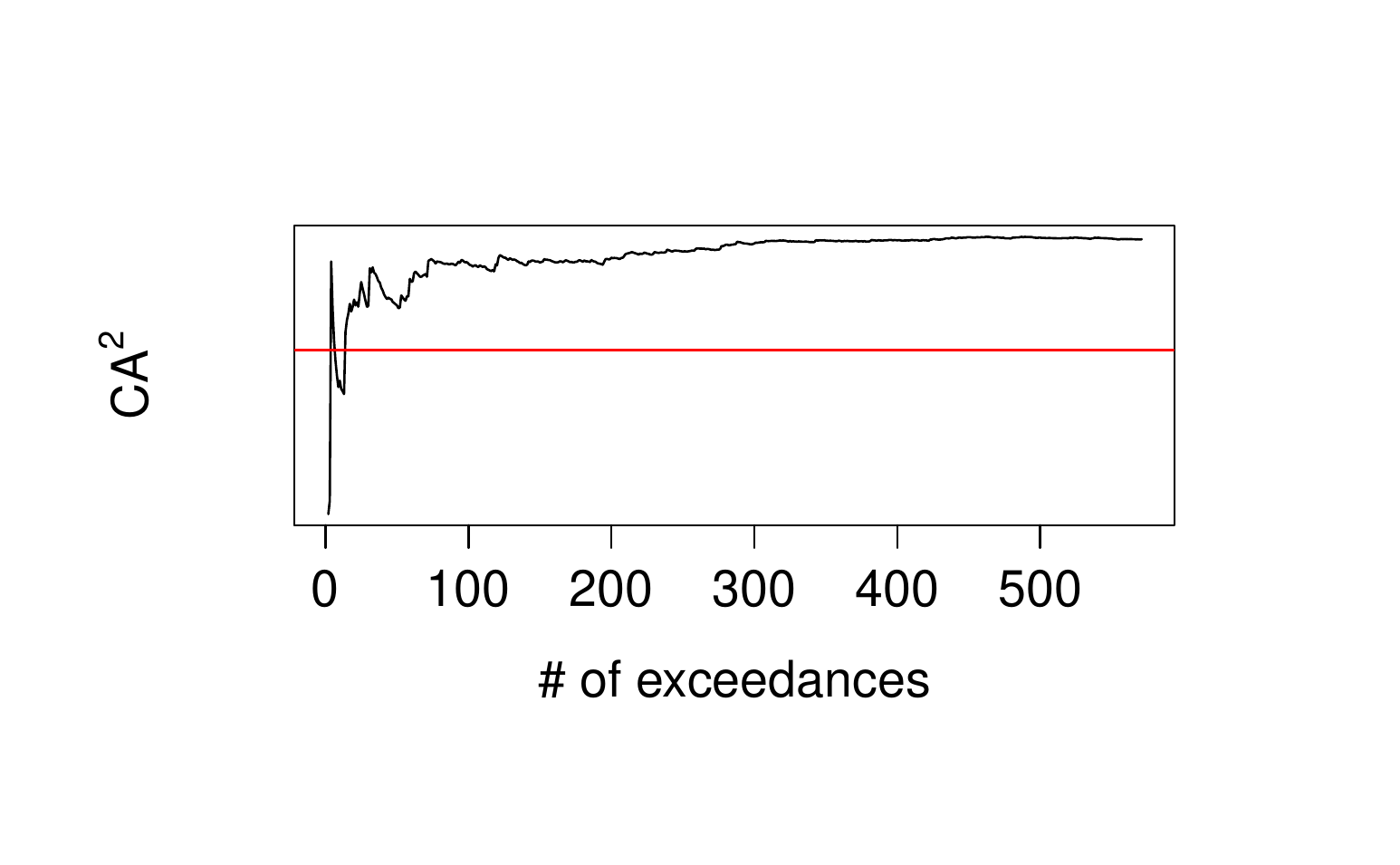}} \\[-1cm]
 \subfigure{\includegraphics[width=8cm]{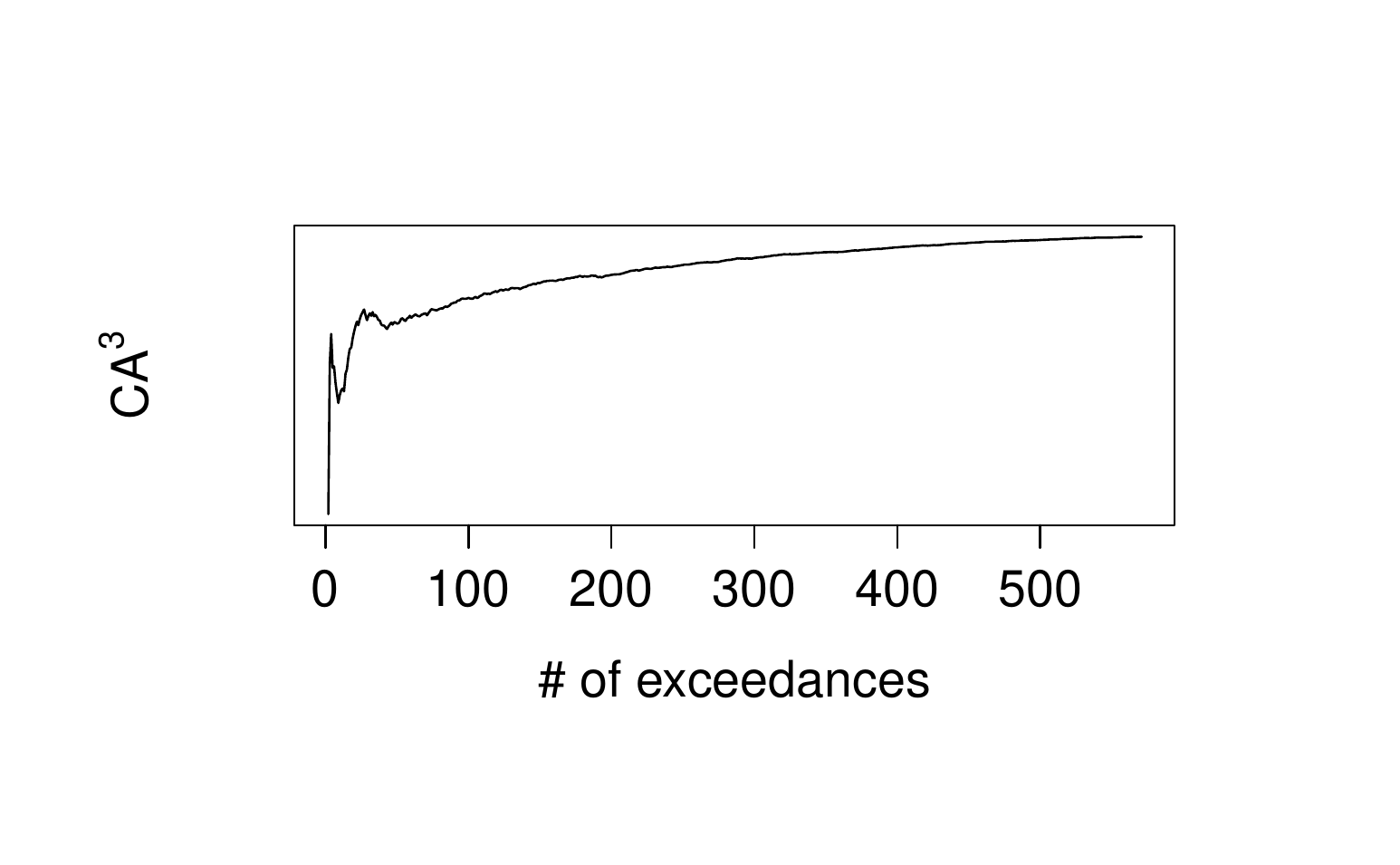}}   
& \subfigure{\includegraphics[width=8cm]{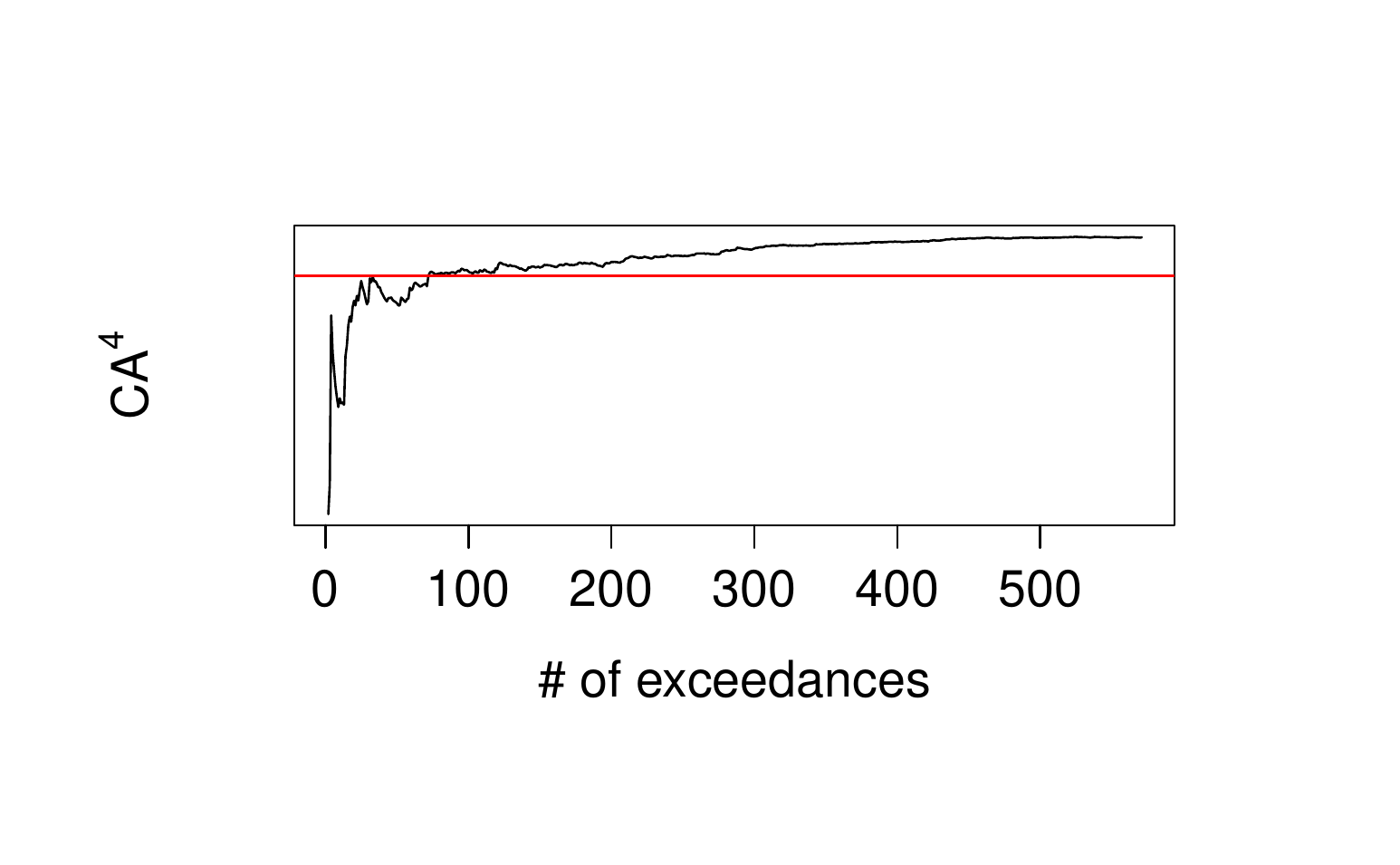}}  \\[-1cm]
\subfigure{\includegraphics[width=8cm]{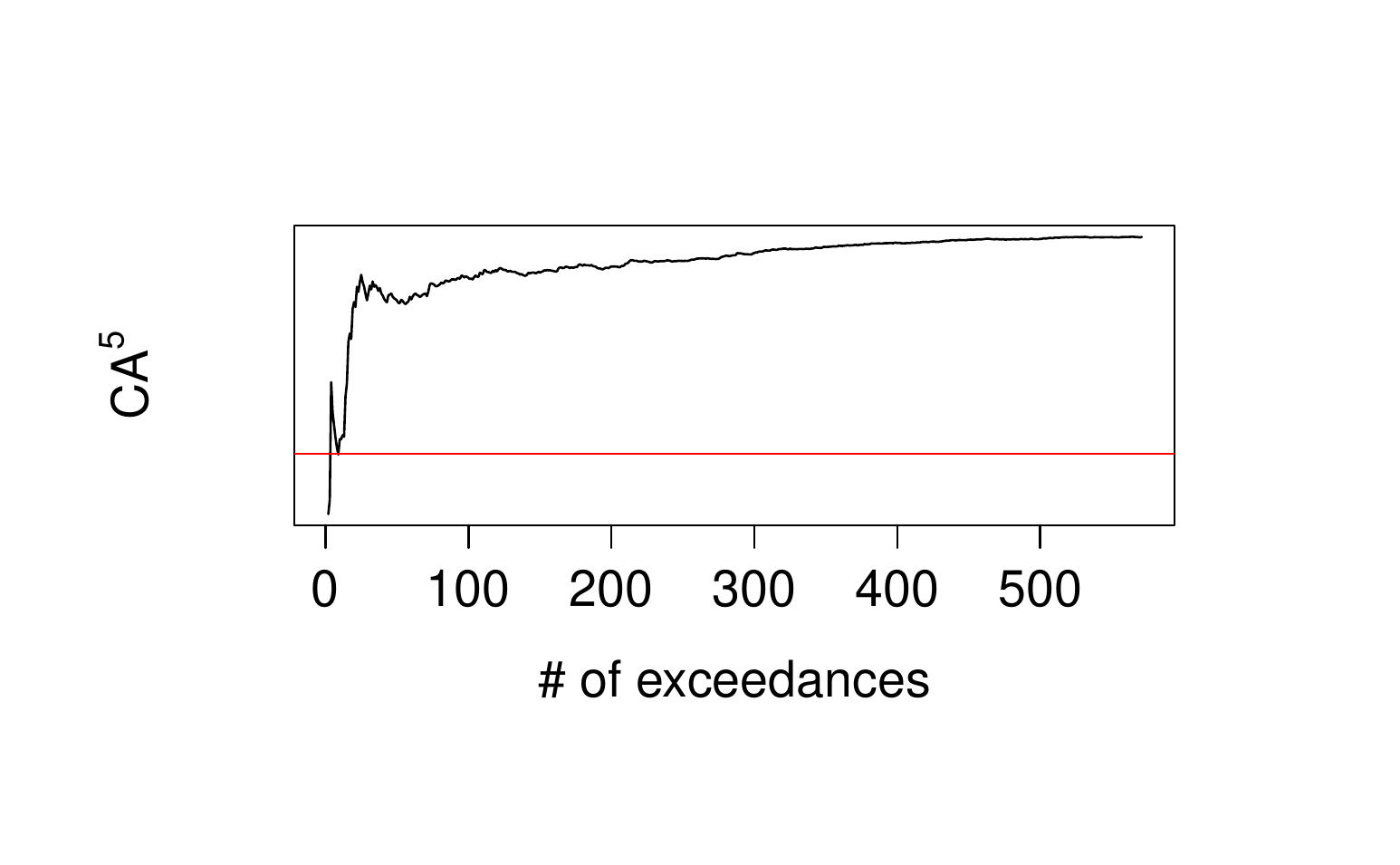}} 
   & \subfigure{\includegraphics[width=8cm]{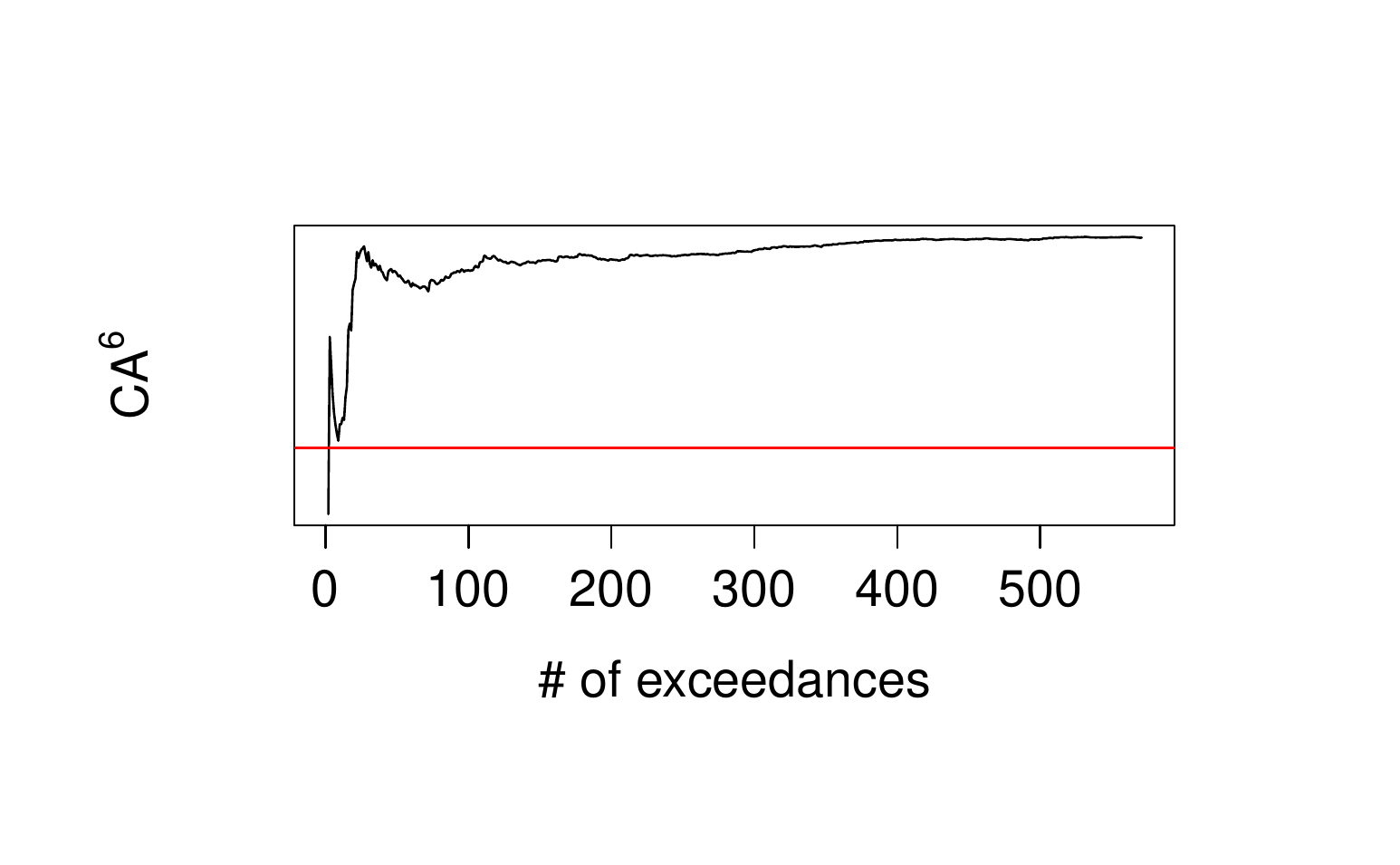}} \\[-1cm]
 \subfigure{\includegraphics[width=8cm]{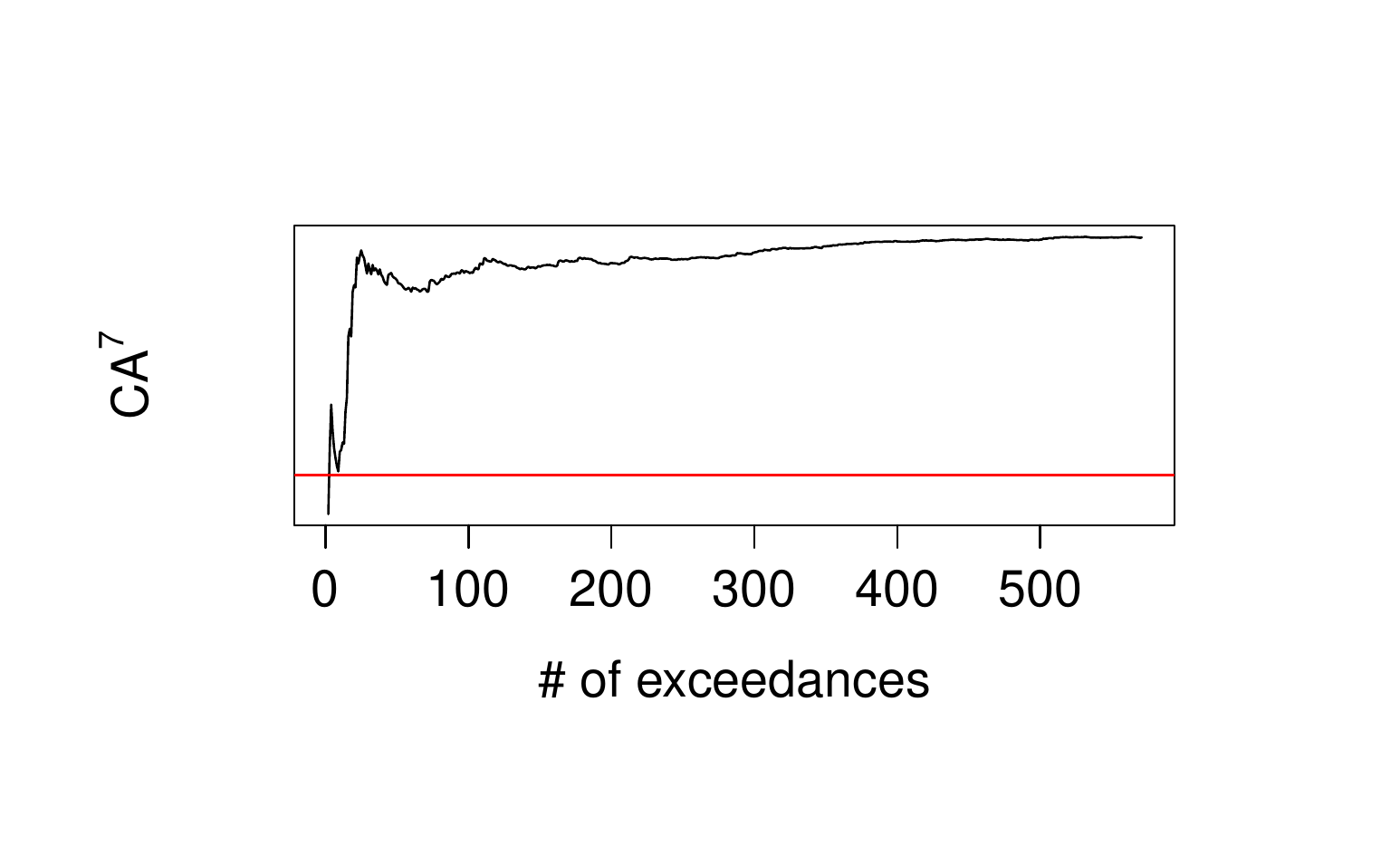}}   
& \subfigure{\includegraphics[width=8cm]{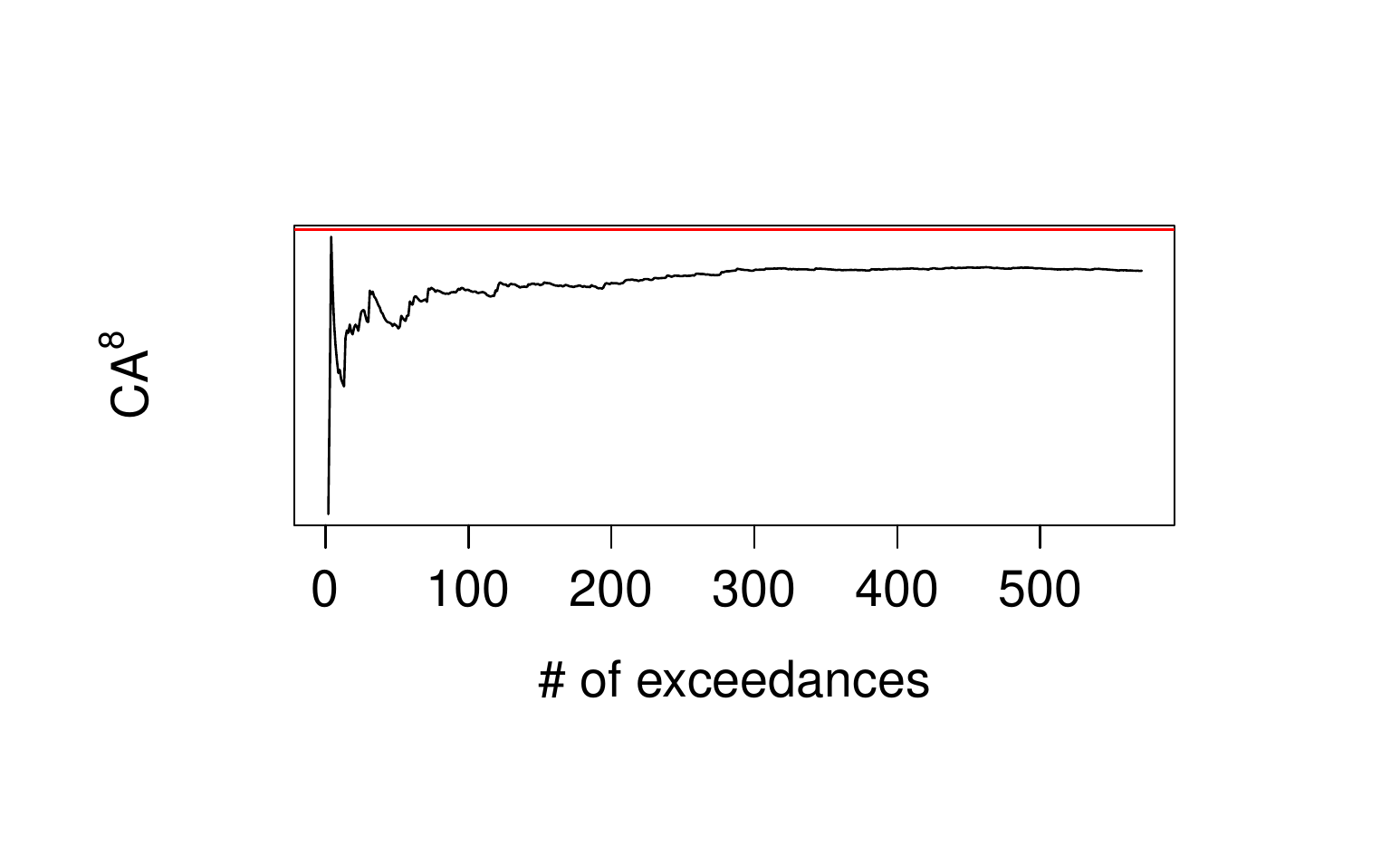}}   
	\end{tabular}
\vspace{-0.5cm}
\caption{Capital allocation constants ${CA}^i$ estimated for the business lines $BL_i$ for $i=1,\dots,8$.
We present the estimated risk contributions based on \VaR\ from \eqref{riskcontvari}. 
The red horizontal lines correspond to the values for asymptotic independence between the event types. 
For  $CA_3$ it is outside the range.}
\label{SeverityDist}
\end{figure}

\begin{figure}
\begin{center}
\includegraphics[width=7cm, height=6cm]{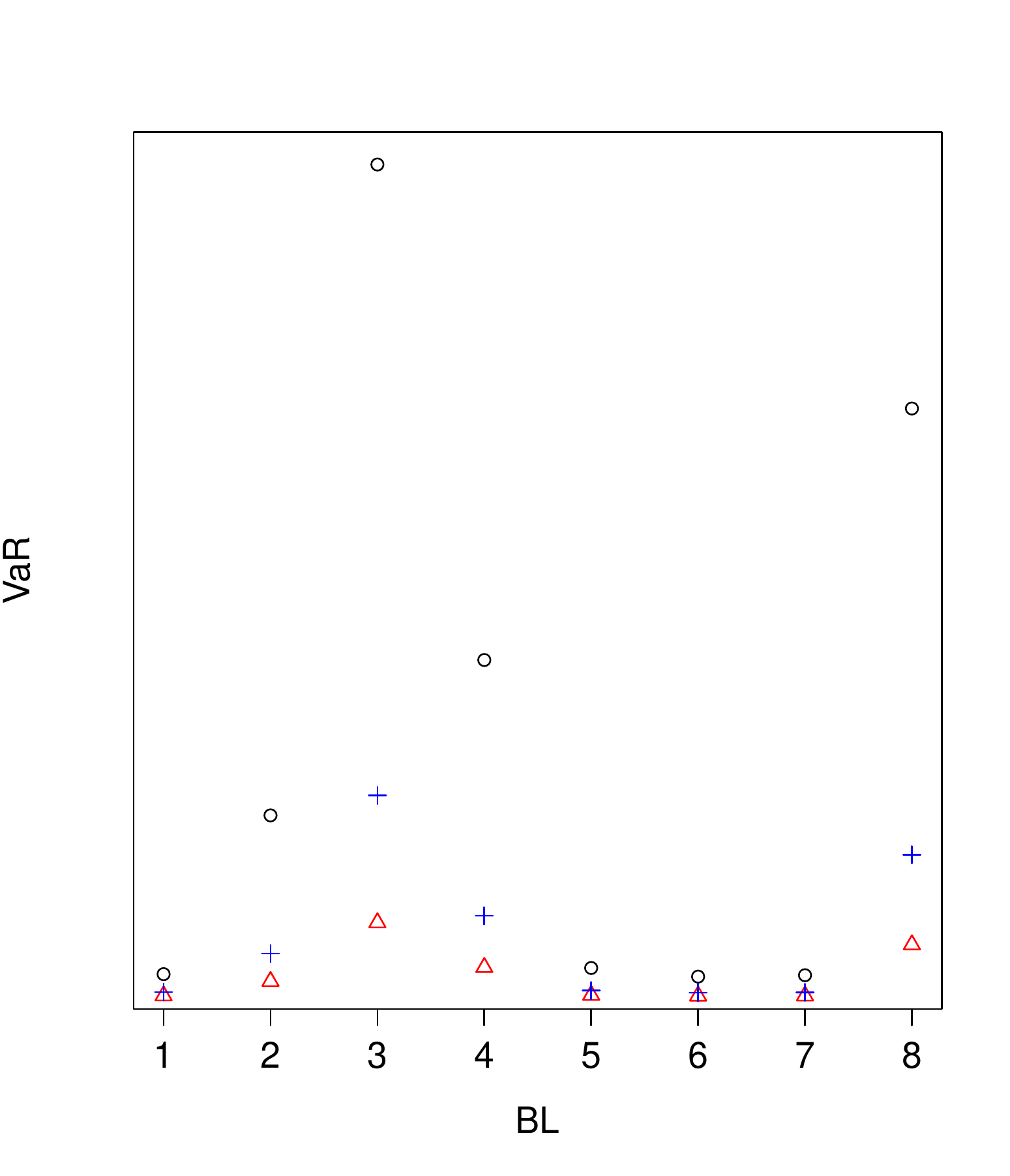} 
\end{center}
\vspace{-0.7cm}
\caption{\label{VaRFinal}  VaR estimates for the business lines at confidence levels   95\% (red triangle), 99\% (blue cross), and 99.9 \% (black circle) }
\end{figure}

\section{Simulation study}\label{Sec:SimulationStudy}

To evaluate and illustrate the performance of the estimators for the risk constants and the risk contributions  in Section~\ref{subsec:estimation_procedure}, we study their behavior in a simulation set-up, where closed-form asymptotic solutions are available.
Also, we shed some light on the issue of choosing the appropriate number of exceedances $k <n$ in \eqref{final_est_general}.
The marginals $X_j$ of  the loss vector $X$ are simulated independently, corresponding to independent event types. 
This enables us to  compare the outcome of the simulation with a value easily computable from the marginals, as the spectral measure is concentrated on the axes.

The marginal distribution of the loss vector $X$ is for every component $j=1,\dots,d$ simulated according to a mixture distribution  
\begin{gather}\label{mixture}
F_j(t)=\mathds{1} \{t<u_j \} H_j(t) +  \mathds{1} \{ t\geq u_j \}( H_j(u_j) + (1-H_j(u_j))G_j(t) ),\quad t\ge 0,
\end{gather}
 with $H_j$ lognormal, $G_j$ gpd with $u_j$ being the threshold dividing the lognormal and the gpd parts. 
Such marginal distribution is a typical choice for operational risk losses.     

We simulate $m$ independent blocks of data (i.e. $m$ Monte Carlo runs), each consisting of a number $n$ of $d-$dimensional observation vectors. 
For each of these blocks we then apply the estimation procedure as detailed in Appendix~\ref{subsec:estimation_procedure}; i.e., first estimating the the marginal distributions for $j=1,\dots,d$. 
After applying an automated procedure to detect the threshold value, the GPD shape and scale parameters 
are estimated. In a next step, aiming at a common tail index, the mean $\hat\xi_{mean}$ over the estimated shape parameters $\wh\xi_j$ for $j=1,\dots,7$ is taken and the scale parameters are re-estimated given $\hat\xi_{mean}$. 
This is senseful as the estimators for shape and scale are dependent, and the re-estimation takes care of the deviances from having taken the mean for the shape parameters. The common tail index $\wh{\al}$ is then the reciprocal of the mean shape parameter $\hat\xi_{mean}$. In the consequent step, the constants $K_j^{1/\wh\alpha}$ are estimated 
by \eqref{K-estimatorMVR}.  
We use these estimated marginal parameters for scaling the data as in \eqref{scaled_observations} and compute the risk constants \eqref{est_ru} and \eqref{est_agent} as well as the capital allocations via \eqref{est_agent_riskcont}.

The following simulations do have in common the choice of $n=1\,000$ and $m=500$. 
Furthermore and unless stated otherwise, we show boxplots for a range of different values of $k$ denoting here the number of exceedances of the {radial parts}  as in \eqref{radial_parts} used for the estimation,  ranging from  200 to 10  with stepsize equal to 10. 
We examine, in particular, the estimation  of the VaR constant $(C^S)^{1/\alpha}$ at system level.

Then, we report here the two following {representative} marginal scenarios \footnote{More results are available upon request.}
\begin{itemize}
\item {Parameter Scenario 1)}: 
We set the dimension $d=3$, for the gpd with  $\xi=1/\alpha=0.5$,  $\beta=(1,5,50)$, $u=(1,10,50)$ and for the lognormal, expected value $\mu=(0,1,2)$ and standard deviation  $\sigma=(1,2,4)$ (on the log-scale). 
\item {Parameter Scenario 2)}:   
We use the parameters estimated from all event types of the real-world DIPO data by fitting a log-normal distribution below  and a gpd above the thresholds (see Table~\ref{tab:parest} in Section~\ref{s32} for the full parameter specification). 
\end{itemize}

\subsection{Estimation for aggregated risk}\label{subsection:simulation:aggregated}

The results for Parameter Scenario 1) {are} plotted in Figure~\ref{MIXsimple},   which   displays  the boxplots for the constant $(C^S)^{1/\alpha}$ for different values of exceedances.

\begin{figure}
\begin{center}{\includegraphics[scale=0.8]{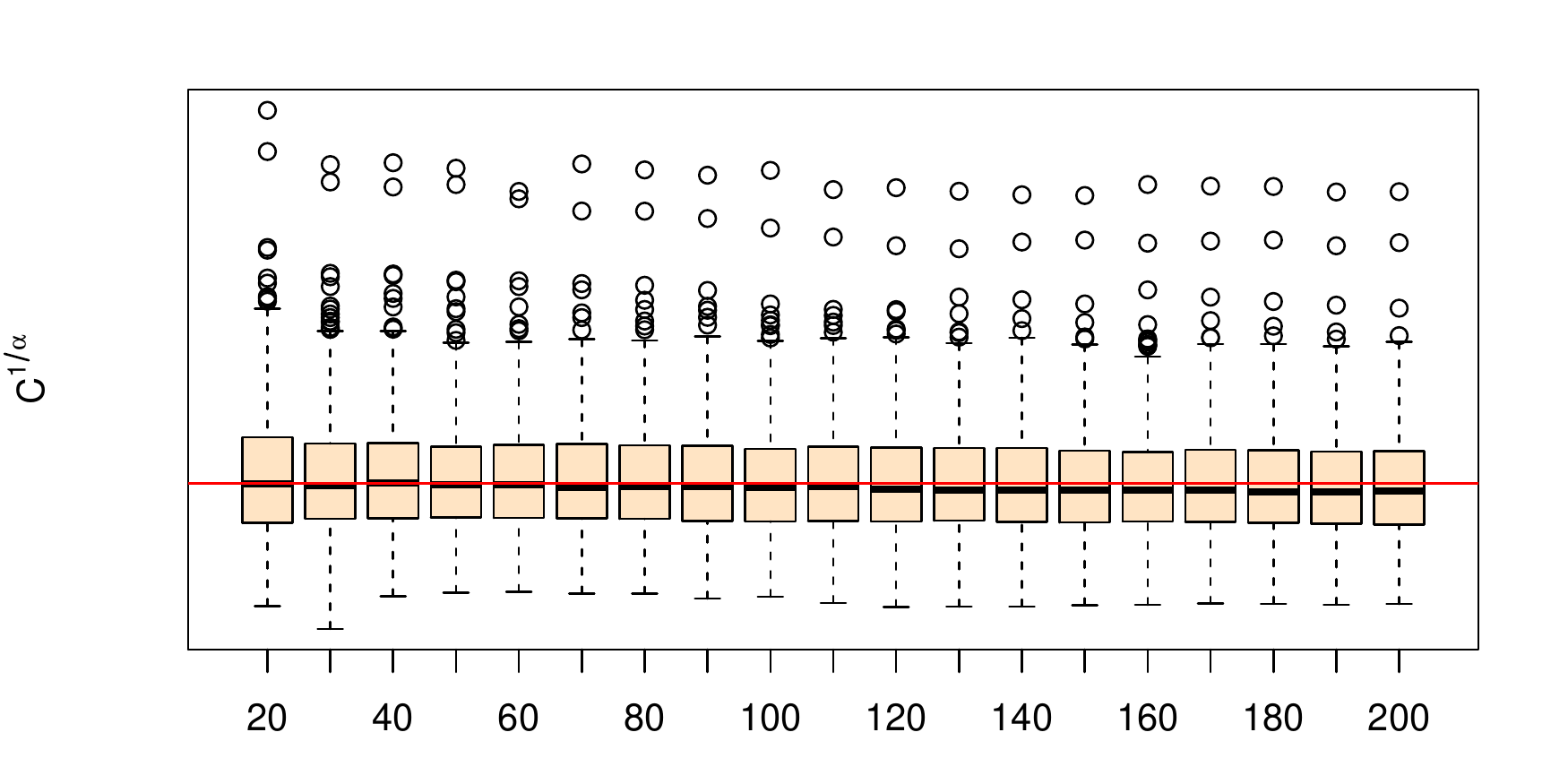}}
\end{center}
\vspace*{-1.0cm}
\caption{\label{MIXsimple}Boxplots  for the estimated VaR constant $(C^S)^{1/\alpha}$  from \eqref{est_ru} with parameters $n=1000, m=500, d=3$, for the gpd $\xi=1/\alpha=0.5$,  $\beta=(1,5,50)$, $u=(1,10,50)$ and for the lognormal  $\mu=(0,1,2)$ and  $\sigma=(1,2,4)$ (on the log-scale). 
On the horizontal axis the number of radial part exceedances $k \le n$ used for dependence estimation.
{The red line is the theoretical asymptotic value.}}
\end{figure}

We find that the median of the estimates is very stable over the entire range of the considered radial part exceedances. 
The proposed estimation method is robust at median level also in case only a small number of exceedances is available, which is the typical case for operational risk. In fact, the red horizontal line represents the theoretically computed value for the parameters used in the simulation, which is always very close to the {medians} of the different boxplots.   
The same interpretation applies to the situation displayed in Figure~\ref{MIXrealParameters}, where the set-up is defined in Parameter Scenario 2) and parameters are estimated by using the real world data.  
{We observe that  the values between $k=90$ and $k=120$ for the exceedances provide a good fit to the theoretically computed value represented by the red horizontal line in Figure~\ref{MIXsimple}. 
The slight decrease in the medians for smaller values of exceedances $k$ appears to be in correspondence also to the curve for $(C^S)^{1/\al}$ in Figure~\ref{OPdatawithoutPOT} based on the real world data.
Moreover, there is a bias for larger values of $k$. 
This is in agreement with common knowledge in different situations in extreme value statistics, for more details see e.g. \citet[Section~4.2.2]{Beirlant}. }

\begin{figure}
\begin{center}{\includegraphics[scale=0.8]{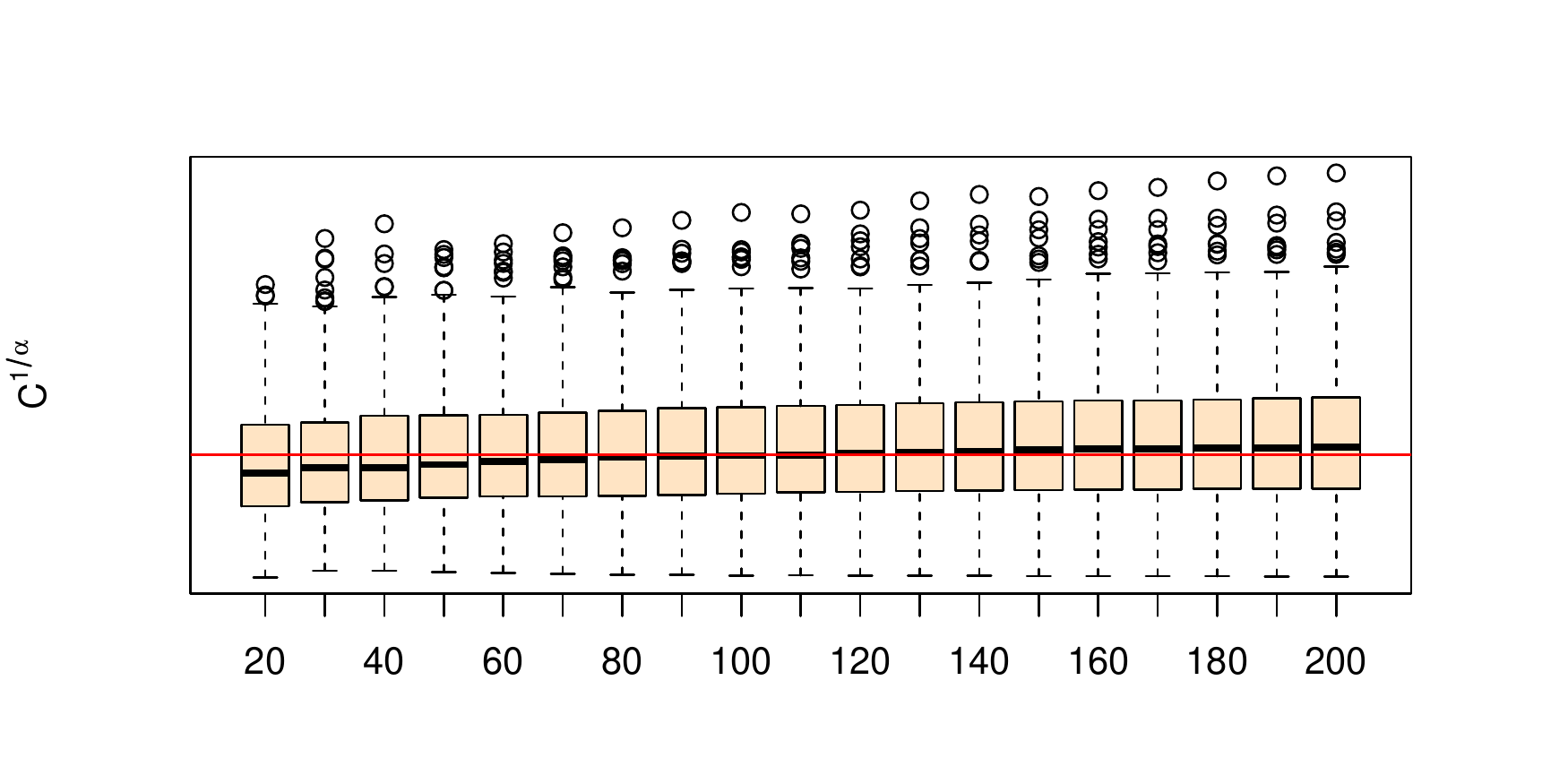}}\end{center}
\vspace{-1.0cm}
\caption{\label{MIXrealParameters} 
Boxplots  for the estimated VaR constant $(C^S)^{1/\alpha}$ with parameters $n=1000, m=500, d=7$, and parameters for the gpd  and lognormal estimated from DIPO real-world data for the marginals reported in Table~\ref{tab:parest} in Section~\ref{s32}. 
On the horizontal axis the number of radial part exceedances  $k \le n$  used for dependence estimation.
 The horizontal values cannot be reported for confidentiality reasons. 
 The red line is the theoretical value {for asymptotically independent marginals.}}
\end{figure}

\subsection{{Individual Risk and Capital Allocation for Different Network Scenarios}}

Besides estimating the VaR and consequently the risk capital  for the entire system,  it is important to disentangle the different components and  study how much risk is concentrated in some business lines, and what is the risk contribution of a business line to the overall risk, after taking into account the network/dependence structure by means of the network fraction matrix $A$. Hence, we apply the following rationale: whenever a loss of a particular event type appears, it is distributed to the different business lines by having each one taking over a certain fraction of that loss. Since a priori there is hardly any {determinism on which fractio}n of a certain event type loss is assigned to some business line, we model this procedure of distributing the event type losses as a stochastic phenomenon.

 Intuitively, we can  think of an underlying weighted bipartite graph where the event types are one type of nodes and the business lines form the other type of nodes. As described in Section \ref{intuitive}
the fractions of the matrix $A$ are given by $A_{ij} = L_{ij}/(L_{1j} + \dots + L_{8j}).$  
They are computed as the fractions of losses from the $j$-th event type that occured in the $i$-th business line as in Figure~\ref{bipartite}. 

In the remainder of this section, we report results from  the simulation set-up in  Parameter Scenario 2), which refers to the real-world estimates in Section~\ref{subsection:simulation:aggregated} (Results for Parameter Scenario 1) are available upon request). Then, we distinguish between two network scenarios: 
we consider the case where the losses are distributed according to a \textit{homogeneous} law for the bipartite graph (see \citet{OKCKLUGR}, Example~4.2.2) and constrast this model with the empirically observed network structure (i.e. \textit{empirical}), where  $A_{ij}$ is the actual  fraction of the loss of event type $j$,  which is allocated to business line $i$. 
We summarize the details as follows.
\begin{itemize}
	\item {Network Scenario 1):}  
	For the \textit{homogeneous} network model, we denote by $\1\{i \sim j \}, i=1,\dots,8,j=1,\dots, 7, $ independent $Bernoulli(p)$ random variables for some fixed connectivity parameter $p \in (0,1)$ independent of either $i$ or $j$. If $\1\{i \sim j \}=0$, nothing of the loss occuring in event type $j$ is distributed to business line $i$. In contrast, if $\1\{i \sim j \}=1$, the loss fraction of $\frac{1}{\deg(j)}$ is attributed to business line $i$, where $\deg(j)=\sum_{i=1}^{8}\1\{i \sim j \}$ is the number of degrees or existing links. Hence, the loss emerging in event type $j$ is distributed in equal parts to all business lines receiving a non-zero loss. As a consequence, the random matrix $A$ with components 
\begin{gather}
A_{ij}=\frac{\1\{i \sim j \}  }{\deg(j)}\quad \mbox{with } \frac{0}{0}:=0,
\end{gather} describes the assignment of the event type losses to the business lines in the homogeneous model. {In this simulation we use a connectivity  parameter $p=0.8$. }
\item {Network Scenario  2):} From the weekly aggregated DIPO data we have 575 observations of $ET-BL$ loss matrices, from which we calculated the fraction matrices $A$ for each week. We consider these fraction matrices as independent realizations $A^{(l)},l=1,\dots, 575,$ of some underlying network law. 
From Section~\ref{s32} we know that we can assume independence of $A$ and the vector \ET.
\end{itemize}

\begin{figure}
\subfigure{\includegraphics[width=8.2cm]{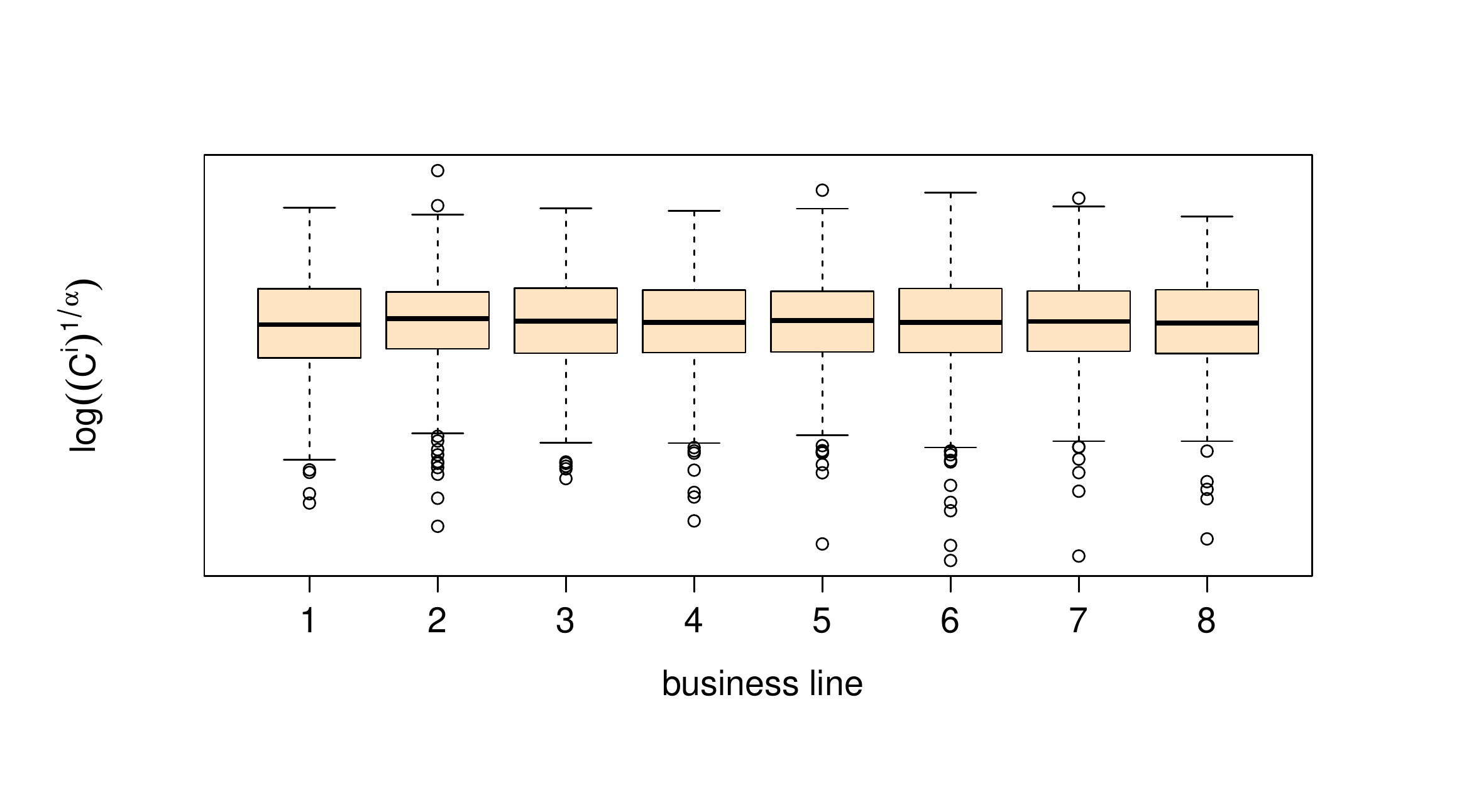}}\hfill
\subfigure{\includegraphics[width=8.2cm]{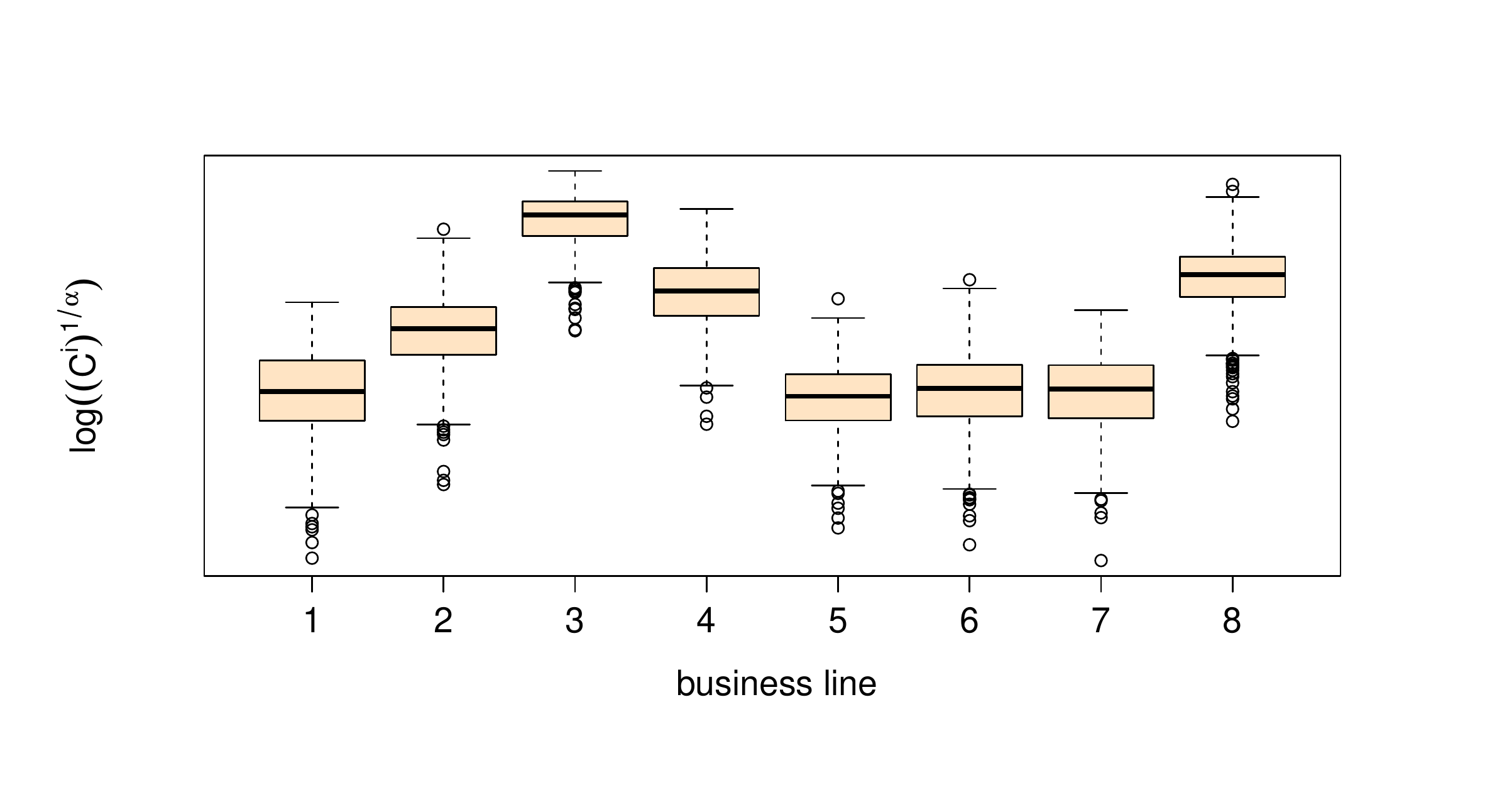}}
\vspace{-1.3cm}
\caption{ \label{CiBL} 
Boxplots of individual VaR constants $(C^i)^{1/\alpha}$  (on the log scale)  
(left for homogeneous networks, right for empirical networks estimated from the DIPO data) when considering $k=100$ as the number of exceedances used for dependence estimation.  
On the horizontal axis are the eight business lines. The horizontal values cannot be reported for confidentiality reasons.} 
\end{figure}

\begin{figure}
\hspace*{-5mm}
\subfigure{\includegraphics[width=9cm]{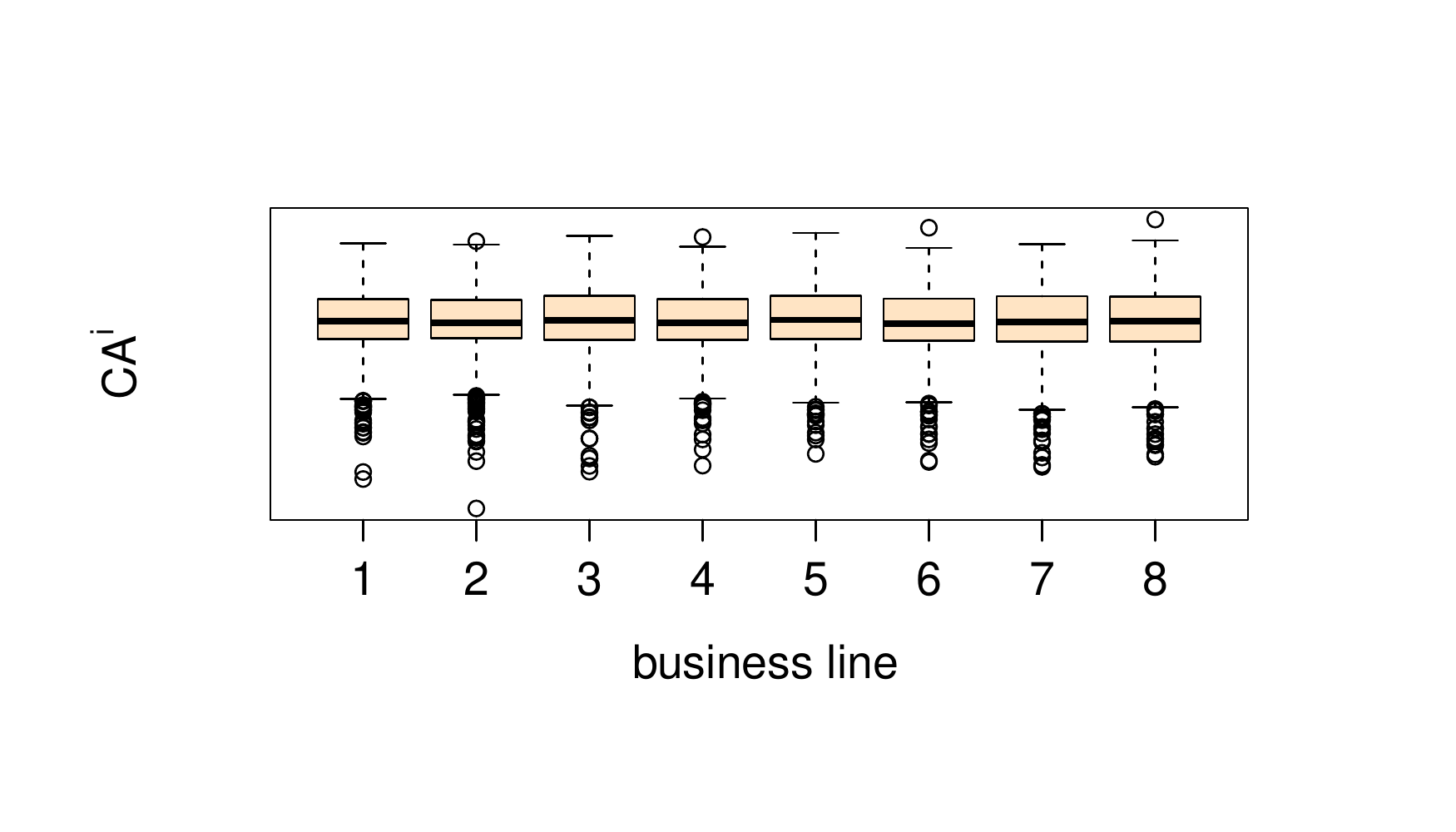}}\hspace{-8mm}
\subfigure{\includegraphics[width=9cm]{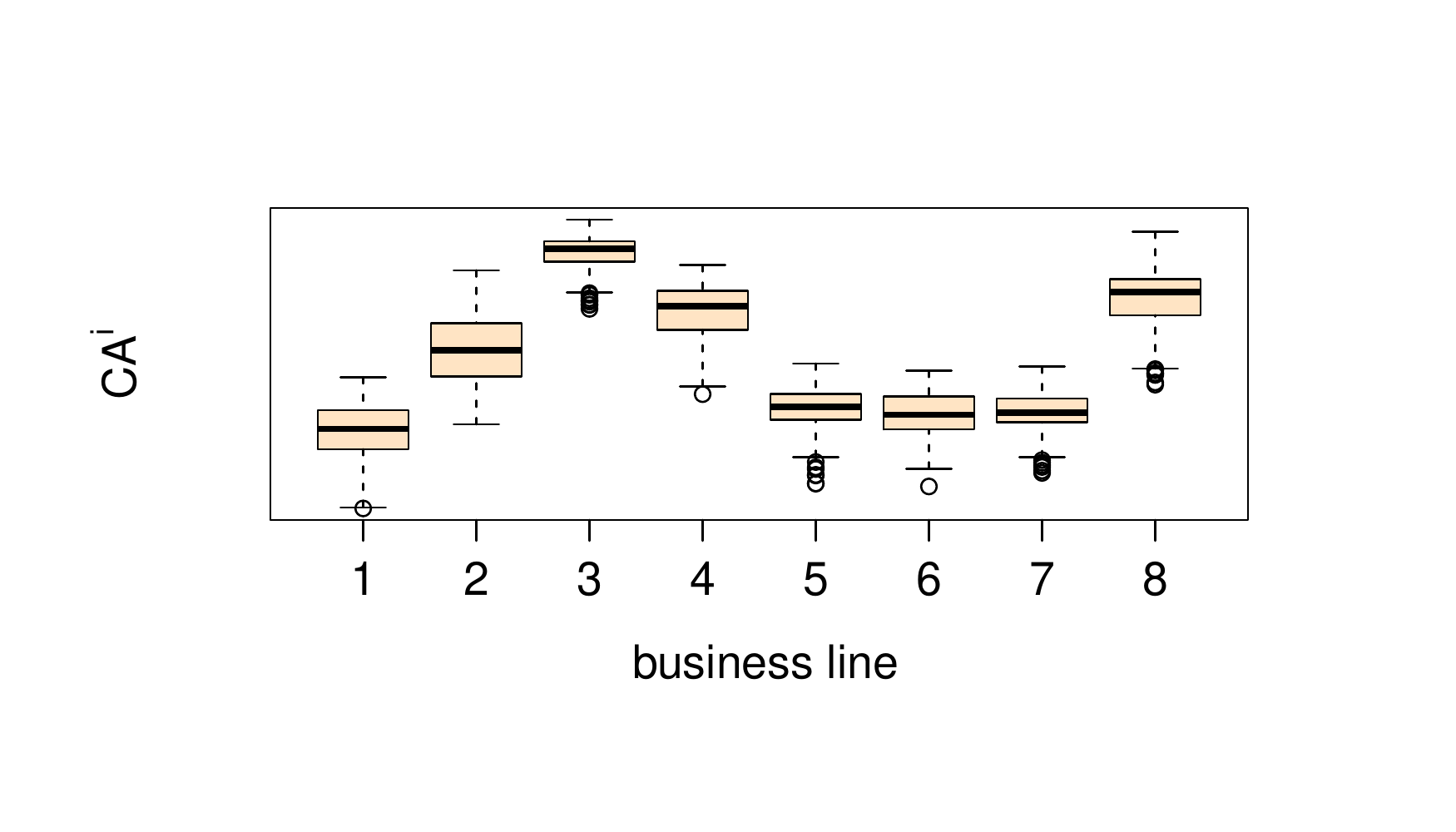}}
\vspace{-1.3cm}
\caption{ \label{CABL} 
Boxplot of the {risk contributions $CA^i$ as in \eqref{riskcontvari}}  estimated with POT method (left for homogeneous networks, right for empirical networks computed from the DIPO data) when considering $k=100$ as the number of exceedances.  On the x-axis the number of the business line,  while on the y-axis the value of  VaR constant  at individual level { $CA^i$}. 
y-scaling cannot be reported for confidentiality reasons. }
\end{figure}

In Figure~\ref{CiBL} we compare the estimates of the individual risk constants $(C^i)^{1/\alpha}$ for the homogeneous model in the left-hand plot (Network Scenario 1), and for the DIPO networks in the right-hand plot (Network Scenario 2). 
As to be expected, in the case of the homogeneous model, the estimators do not substantially differ between different business lines, since every business line has theoretically the same probability to be exposed to losses from a particular event type. 
In contrast to this, we recognize much more variability in the estimates based on the empirical networks estimated from the DIPO data, where the dependence structure is explicitly included, taking into account the unequal distribution of the frequency and the severity of losses in the $ET-BL-$matrix (see Section~\ref{opRiskData} for a description of the data).

In Figure~\ref{CABL} we plot the results for the capital allocations to the different business lines. 
Again, we see on the left-hand side the estimates for the homogeneous model, where capital allocations to business lines are as expected very much the same, whereas in the right-hand plot  for the empirical DIPO networks the estimates differ substantially among the business lines. 

Comparing the right-hand plots of Figures~\ref{CiBL} and~\ref{CABL}, we find that they also look very similar, though the one in Figure~\ref{CiBL} is on log-scale, while the other in Figure~\ref{CABL} is not. {The differences between the $C^i$  are larger than those between the capital allocation constants $CA$. 
This is due to the fact that the capital allocations should sum up to the total VaR (where the event types are simulated independently). In contrast, summing up the VaRs for the single business lines would lead to the total VaR for fully dependent event types, which typically is known to result in a huge overestimation} whenever the event type variables have finite mean; i.e,  we have a tail index $\alpha>1$.

\section{Conclusion}\label{sect:conclusion}

Being capable of properly estimating risk-capital numbers is high on the agenda of regulators and practitioners, especially after the recent crises, as it has become even more evident that adequate reserves are needed to avoid spillovers effects to the entire system with potentially dramatic consequences.

Operational risk management aims to set provisions aside for very heterogeneous event types, from earthquakes to fraud, which might affect all the different business lines of an institution. 
Despite explicit regulation was introduced after market and credit risk, the modelling of operational risk has seen an increased interest due to the magnitude of losses within such a category as well as the need to find appropriate tools to capture the salient features of such 56 dimensional $BL-ET$ loss distribution,  characterized by heavy tails and heterogeneous dependence, as well as the requirement to estimate reliable risk measures at very high confidence levels in presence of missing data and very short times series of operational losses.

In our paper we introduce a theoretical model that naturally incorporates most of the relevant characteristics of operational risk: by relying on heavy-tailed distributions and bipartite graphs we address the presence of extreme losses by heavy-tailed distributions and heterogeneous dependence by a bipartite graph, which also naturally allows us to deal with missing data. 
Using multivariate regular variation we can derive asymptotic formulas for tail-risk measures at high confidence levels  and also provide bounds, while also being capable of suggesting two hands-on and reliable estimation methods. 
Results for simulated data and real-world data support the validity of the proposed approach, while also pointing out suggestions for further research. 

{In this paper, we rely on the assumption of a common tail index $\alpha$ for all event types, which we have proven also to be a statistically reliable choice with regards to the DIPO data set. 
We have also indicated how to deal with decidedly different tail indices, namely to run the very same analysis after transforming the marginal data to the same tail index as is standard in copula models.

There are two suggestions to conduct stress tests within the bipartite framework. 
First, one could move away from asymptotic independence between the event types, which results in higher VaR's in view of the provided bounds in Section~\ref{model}, on both an individual business line level as well as on an aggregated bank level. 
Second, one could vary the network fraction matrix $A$ to stress particular business lines.}
As a priority of future research we want to use additional information about certain operational loss events that are classified by DIPO to be systemically relevant and hence could enrich a model of interbank credit contagion as a particular shock component to the asset side of the balance sheet. 
We also consider developing a model for pricing financial instruments related to the operational tail risks.

\appendix

\section{Appendix}\label{appendix}

\subsection{Mathematical background} \label{Appendixmodel}

Equivalent formulations of multivariate regular variation as defined in \eqref{multrv} are provided in \citet[Theorem~6.1]{Resnick2007}.
For ease of notation,
for a multivariate regularly varying vector $X$ with tail index $\alpha>0$ we write $X\in \MRV(-\alpha)$.

One of these equivalent formulation is the following.
Given a norm $\|\cdot\|$---we take the sum norm if not indicated differently---$X \in \MRV(-\alpha)$ is equivalent to vague convergence 
\begin{gather}
\frac{\pr{X \in t\cdot }}{\pr{\|X\|>t}} \stv \mu(\cdot),\ t \to \infty, 
\end{gather}
(where $\stv$ denotes vague convergence)
to a non-null Radon measure $\mu$, which is called {\em intensity measure}.
The measure $\mu$ is homogeneous of order $-\alpha$; i.e., $\mu(uB)=u^{-\alpha }\mu(B)$ for all Borel sets $B\in \mathcal{B}(\R_+^{d}\setminus\{0\})$.

Another equivalent formulation requires a scaling
function $b: \R_+\rightarrow \R_+$ such that
\begin{gather}\label{exponentmeasure}
t\pr{ \frac{X}{b(t)} \in \cdot } \stv \nu(\cdot),\ t\to \infty,
\end{gather}
to a non-null Radon measure $\nu$ holds. 
Assuming \eqref{Pareto} holds true we can and do choose $b(t)=t^{1/\alpha}$ as a scaling function for $X$ in \eqref{exponentmeasure}. 
In this case, we call the limit measure $\nu$ the \textit{exponent measure} of $X$.

Note that we can then recover the constants $K_j, j=1,\dots, d,$ from \eqref{Pareto} by inserting a set 
\begin{gather}B_j=(0,\infty)\times \dots \times(0,\infty)\times \underbrace{(1,\infty)}_{\mbox{$j$-th\ comp.}}\times (0,\infty)\times \dots \times (0,\infty)  \end{gather} 
in \eqref{exponentmeasure}.
Then, we have  
\begin{gather}\label{defKjs}
t\pr{X_j>t^{1/\alpha}}=\pr{\frac{X}{b(t)} \in B_j} \rightarrow \nu(B_j) =:K_j,\  t\to \infty.
\end{gather}
The measures $\mu$ and $\nu$ are connected via $\mu(\cdot)=\frac{\nu(\cdot)}{\nu(\{x:\|x\|>1\})},$ see Lemma~2.3 in \citet{OKCKLUGR}.

There is also a connection between $\mu$ and $\nu$ and the spectral measure $\Gamma$ in \eqref{multrv}.
For the positive unit sphere $\mathbb{S}_+^{d-1}$ in $\R^d_+$ define a mapping $\pi: \R_+^{d}\setminus \{0\}\rightarrow (0,\infty)\times \mathbb{S}_+^{d-1}$ as the projection to polar coordinates, $\pi(x):= (\|x\|, \frac{x}{\|x\|})$, and fix a measure $\lambda_\alpha$ on the one-dimensional Borel sets $\mathbb{B}(\R_+)$
by $\lambda_\alpha((x,\infty])=x^{-\alpha}.$ 
We define $\circ$ as the usual composition of mappings and $\otimes$ the product of measures (corresponding to independence).
Then
\begin{gather} \label{spectral}
\nu\circ \pi^{-1} = \lambda_\alpha\otimes \ov\Gamma\quad\mbox{and}\quad
\mu\circ \pi^{-1} = \lambda_\alpha\otimes \Gamma.
\end{gather}
with spectral measures $\ov\Gamma$ and $\Gamma$ on the  sphere, where $\Gamma$ is standardized to be the probability measure as defined in \eqref{multrv}.

As stated before, our interest is in the asymptotic behaviour of tail-related risk measures not only of $X$ itself but also of  transformations by random matrices and homogeneous aggregation functions. For this purpose, a random transformation matrix $A: \R^{d} \rightarrow \R^{q}$ should be (stochastically) independent of the random vector $X$  and satisfy the moment condition 
\begin{gather}\label{momentA}
\E \sup_{\|x\|=1} \|Ax\|^{\alpha + \delta} < \infty
\end{gather}
for some $ \delta > 0 $ in order to apply a Breiman-type limit result. The matrix $A$ alters the original dependence structure between the components of the random vector $X$. 

Such a matrix can be directly read off from the operational risk data by considering  the network fraction matrix that distributes the event type losses to the different business lines.  
Considering also alternative (random) matrices instead of this given matrix $A$ can be a way to stress test the system and evaluate the impact of dependence on the tail-risk measures estimates. 
We also want to emphasize that for every network fraction matrix condition \eqref{momentA} is obviously satisfied.
For more details regarding real-world applications we refer to Section~\ref{opRiskData}.  

\begin{theorem}\label{prop:varasym}
Let $X\in\R_+^d$ be $\MRV(-\alpha)$  and $\nu$ its exponent measure as in \eqref{exponentmeasure}.
Let $A$ be a random transformation matrix of dimension $q\times d$ as in \eqref{BPhiT}, independent of $X$ and satisfying \eqref{momentA}.
Furthermore, let $h:\R_+^{q}\rightarrow \R_+$ be a $1-$homogeneous function. We define
the constant
\begin{gather}\label{VaR.const}
C_\Gamma^{h}(A):= 
\E\Big( \int_{\mathbb{S}_{+}^{d-1}}  h(As)^\alpha d\ov\Gamma(s)\Big).
\end{gather}
Then the asymptotic behaviour of the VaR of $h(AX)$ is given by
\begin{gather}\label{VaRasym}
\VaR_{1-\gamma}(h(AX))\sim \gamma^{-1/\alpha} (C_\Gamma^{h}(A))^{1/\alpha},\quad\ga\to 0,
\end{gather} 
and the asymptotic behaviour of the CoTE of $h(AX)$--- provided $\alpha>1$---is given by
\begin{gather}\label{CoTEasym}
\CoTE_{1-\gamma}(h(AX))\sim \frac{\alpha}{\alpha-1}\gamma^{-1/\alpha} (C_\Gamma^{h}(A))^{1/\alpha},\quad\ga\to 0.
\end{gather}
\end{theorem}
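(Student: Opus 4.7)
The plan is to reduce the statement to the one-dimensional fact that $Y := h(AX)$ is regularly varying with tail index $-\alpha$ and with explicit tail constant $C_\Gamma^h(A)$; once this is established, both \eqref{VaRasym} and \eqref{CoTEasym} follow by standard inversion and Karamata-type arguments for univariate regularly varying tails.

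First, I would work conditionally on $A$. Since $A$ is linear and $h$ is $1$-homogeneous, for any fixed $u>0$,
\begin{gather*}
t\,\P\bigl(h(AX)>u\,t^{1/\alpha}\,\bigl|\,A\bigr) = t\,\P\bigl(X/t^{1/\alpha}\in\{x:h(Ax)>u\}\,\bigl|\,A\bigr)\longrightarrow \nu(\{x:h(Ax)>u\})
\end{gather*}
as $t\to\infty$, directly from \eqref{exponentmeasure}. The step of pulling the expectation over $A$ inside the limit is the Breiman/Potter-type argument: the moment bound \eqref{momentA} with slack $\delta>0$ furnishes a Potter majorant that makes the family $\{t\,\P(h(AX)>u\,t^{1/\alpha}\mid A)\}_{t\ge t_0}$ uniformly integrable in $A$, so that
\begin{gather*}
t\,\P\bigl(h(AX)>u\,t^{1/\alpha}\bigr)\longrightarrow \E\bigl[\nu(\{x:h(Ax)>u\})\bigr].
\end{gather*}
Next, I would use the polar decomposition $\nu\circ\pi^{-1}=\la_\alpha\otimes\ov\Gamma$ from \eqref{spectral} together with $\la_\alpha((v,\infty])=v^{-\alpha}$ and the $1$-homogeneity of $h$ to compute
\begin{gather*}
\nu(\{x:h(Ax)>u\}) = \int_{\mathbb{S}_+^{d-1}}\la_\alpha\bigl(\{r>0:r\,h(A\theta)>u\}\bigr)\,d\ov\Gamma(\theta) = u^{-\alpha}\int_{\mathbb{S}_+^{d-1}}h(A\theta)^\alpha\,d\ov\Gamma(\theta),
\end{gather*}
so that $t\,\P(h(AX)>u\,t^{1/\alpha})\to u^{-\alpha}\,C_\Gamma^h(A)$. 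Inverting this tail relation yields \eqref{VaRasym}. For \eqref{CoTEasym} with $\alpha>1$, I would invoke Karamata's theorem: a nonnegative random variable whose survival function is regularly varying with index $-\alpha$ satisfies $\E[Y\mid Y>t]\sim \tfrac{\alpha}{\alpha-1}\,t$ as $t\to\infty$, and evaluating this at $t=\VaR_{1-\gamma}(h(AX))$ together with \eqref{VaRasym} gives the asserted asymptotic.

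The main obstacle is the interchange of expectation and limit in the second step: one needs uniform control in $t$ (and continuity in $u$ on compact subsets of $(0,\infty)$). The moment condition \eqref{momentA} with strict slack $\delta>0$ is precisely what enables Potter-type domination using the univariate tail of $\|X\|$ from \eqref{marginrv}; this is the reason the hypothesis is stated with a buffer. Minimal regularity of $h$ (continuity suffices) is also required so that $\{x:h(Ax)>u\}$ is a $\nu$-continuity set for Lebesgue-a.e.\ $u>0$ and a.e.\ realization of $A$, justifying the application of vague convergence pointwise at the level of indicator sets. For the passage from $\VaR$ to $\CoTE$ one additionally needs uniform integrability of the rescaled tail $h(AX)/\VaR_{1-\gamma}(h(AX))$, which is automatic in the regularly varying regime with tail index strictly larger than $1$.
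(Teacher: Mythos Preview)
Your proposal is correct and follows essentially the same route as the paper: establish the tail limit $\lim_{t\to\infty} t^\alpha\P(h(AX)>t)=\E\nu(\{x:h(Ax)>1\})$, rewrite it via the polar decomposition \eqref{spectral} as $C_\Gamma^h(A)$, then invert for $\VaR$ (regular variation of the generalized inverse) and use Karamata for $\CoTE$. The only presentational difference is that the paper does not unpack the conditioning plus Potter-bound/uniform-integrability argument; it simply invokes Proposition~A.1 in \citet{Basrak200295} as a black box for the Breiman-type limit under independence of $A$ and $X$ and the moment condition \eqref{momentA}, and then cites \citet[Prop.~1.5.15 and Thm.~1.6.5]{BGT1987} for the inversion and the Karamata step.
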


\begin{proof}[\textit{Proof of Theorem~\ref{prop:varasym}}] 
We apply Proposition~A.1 in \citet{Basrak200295}, for which we need independence of $A$ and $X$ as well as \eqref{momentA}.
From there we know the first equality below, the second follows from \eqref{spectral}, giving
\begin{gather*}
\lim_{t\to\infty}  t^{-\alpha}\pr{h(AX) >t } = \E \nu(\{x:\ h(Ax) >1   \}) =  \E \int_{\mathbb{S}_{+}^{d-1}}  h(As)^\alpha d\ov\Gamma(s).\end{gather*}
Now \eqref{VaRasym}  may be easily derived using standard arguments, in particular, that inverses of
regularly varying functions are again regularly varying (cf. \citet[Prop.~1.5.15]{BGT1987}). \\
The asymptotic behaviour of $\CoTE$ in \eqref{CoTEasym} is 
then a consequence of a Karamata-type argument (cf. \citet[Theorem~1.6.5]{BGT1987}).
\end{proof}

The function $h(\cdot)$  plays a role in determining at what level {to compute the risk-measures}. In particular,  
two  choices for $h$ in Theorem~\eqref{prop:varasym}  are important: first,  the sum norm $h(x)=\|x\|=\sum_{i=1}^q|x_i|$, which reflects the aggregated risk or risk at the entire \textit{system} level (i.e. $S$). 
For this sum norm, we will use the notation $C^S$ instead of $C^h$. 
Second, for the projection on the $i-$th coordinate, we set $h(x)=x_i$, and we write $C^i$ instead of $C^h$, reflecting the risk at an \textit{individual} level, which for operational risk then refers to the single business lines (or event types). 
Both measures are of interest, as the constant $C^S$ should provide a single metric of systemic risk that encompasses all institutions under consideration, but also, it should be possible to apply them to any subset of institutions in the system down to the level of a single {business line or agent}, depending on the case study and the available data. Moreover, both regulators and risk managers need to have methods for the allocation of risk to business lines/event types according to their firm-wide importance to better quantify their risk appetite and tolerance, and then set-up adequate monitoring, insurance and hedging strategies.

In the following, we  reformulate the risk constants from \eqref{VaR.const} and the risk allocation constants for general aggregation function $h$ in such a way that we can introduce reasonable estimators, which perform well in practical situations. 
In particular, we take into account that the constants $K_j$ from \eqref{Pareto} can differ considerably among the components of the risk vector $X$.
Marginal scaling resulting in $K_j=1$ for all $j=1,\dots,d$ 
affects the integral \eqref{VaR.const} as a simple transformation of variables, so that it affects the dependence structure in terms of the spectral measure $\Gamma$ in an obvious way. 
Consequently, we consider the scaled data components $K_j^{-1/\alpha}X_j$ for which we have $\P[K_j^{-1/\alpha}X_j>t ]\sim t^{-\alpha}$ for $j=1,\dots,d.$

%

\begin{theorem}\label{prop.estimator.basic}
Let $X\in\R_+^d$ be $\MRV(-\alpha)$ and let the assumptions of Proposition~\eqref{prop:varasym} hold with two 1-homogeneous functions $h,g:\R^{q}\rightarrow \R$.   
We further write  $K^{1/\alpha}: =diag(K_1^{1/\alpha},\dots, K_d^{1/\alpha})$ with $K_j$ for $ j=1,\dots,d$ as in \eqref{defKjs}. 
Furthermore, let $\Gamma_{K}$ be the spectral measure  as  in \eqref{multrv} of the scaled vector $K^{-1/\alpha}X.$ Then we obtain 
\begin{gather}\label{A.10}
\E  \int_{\mathbb{S}_+^{d-1}}  h(As)^{\alpha-1}g(As)  \ov\Gamma(ds)=\frac{\E  \int_{\mathbb{S}_+^{d-1}}  h(AK^{1/\alpha}s)^{\alpha-1}g(AK^{1/\alpha}s)\Gamma_{K}(ds) }{\int_{\mathbb{S}_+^{d-1}}  s_1^{\alpha} d\Gamma_{K}(s)}
\end{gather}
For the special case $h=g$ we find
\begin{gather}
 C_{\Gamma}^{h}(A)=\frac{\E \int_{\mathbb{S}_+^{d-1}}  h(As)^{\alpha} d\Gamma_{K}(s) }{\int_{\mathbb{S}_+^{d-1}}  s_1^{\alpha} d\Gamma_{K}(s)}.
\end{gather} 
For $h$ equal to the sum norm this applies to the right-hand factor in  \eqref{riskcontvari}.
\end{theorem}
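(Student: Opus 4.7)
The plan is to exploit the polar decomposition of the exponent measure $\nu$ given in \eqref{spectral} together with the behaviour of $\nu$ under the linear scaling $Y:=K^{-1/\alpha}X$, and then to identify the correct normalisation constant that turns the unnormalised spectral measure of $Y$ into the probability measure $\Gamma_K$. Throughout I assume $h,g\ge 0$ so that $h^{\alpha-1}g$ is a nonnegative $\alpha$-homogeneous function; the general case reduces to this by splitting $g$ into positive and negative parts.

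The first step is to establish the auxiliary identity
\begin{equation*}
\int_{\mathbb{S}_+^{d-1}}\phi(s)\,\ov\Gamma(ds)=\nu(\{x:\phi(x)>1\})
\end{equation*}
for any nonnegative $\alpha$-homogeneous $\phi$. Indeed, $\alpha$-homogeneity gives $\{x:\phi(x)>1\}=\pi^{-1}(\{(r,s):r>\phi(s)^{-1/\alpha}\})$, and applying \eqref{spectral} together with $\lambda_\alpha((u,\infty))=u^{-\alpha}$ and Fubini yields $\nu(\{\phi>1\})=\int\phi(s)\,\ov\Gamma(ds)$.

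Second, I apply this identity for each realisation of $A$ with $\phi(x)=h(Ax)^{\alpha-1}g(Ax)$, and take expectation in $A$. The moment condition \eqref{momentA} legitimises the Fubini interchange and yields
\begin{equation*}
\E\int_{\mathbb{S}_+^{d-1}} h(As)^{\alpha-1}g(As)\,\ov\Gamma(ds)=\E\,\nu\bigl(\{x:h(Ax)^{\alpha-1}g(Ax)>1\}\bigr).
\end{equation*}
Next I perform the change of variable $x=K^{1/\alpha}y$. Since $A$ is independent of $X$, the vector $Y=K^{-1/\alpha}X$ is again in $\MRV(-\alpha)$ with exponent measure $\nu_Y(B)=\nu(K^{1/\alpha}B)$, so the event $\{h(Ax)^{\alpha-1}g(Ax)>1\}$ transforms into $\{h(AK^{1/\alpha}y)^{\alpha-1}g(AK^{1/\alpha}y)>1\}$. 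Applying the auxiliary identity from step one to $\nu_Y$ with its associated unnormalised spectral measure $\ov\Gamma_Y$ gives
\begin{equation*}
\E\int h(As)^{\alpha-1}g(As)\,\ov\Gamma(ds)=\E\int h(AK^{1/\alpha}s)^{\alpha-1}g(AK^{1/\alpha}s)\,\ov\Gamma_Y(ds).
\end{equation*}

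Third, I convert $\ov\Gamma_Y$ into the probability measure $\Gamma_K$ by writing $\ov\Gamma_Y=c\,\Gamma_K$ with $c=\ov\Gamma_Y(\mathbb{S}_+^{d-1})$. To pin down $c$, I exploit the fact that the scaling was designed exactly so that $\P(Y_j>t)\sim t^{-\alpha}$, whence by \eqref{defKjs} the exponent measure of $Y$ satisfies $\nu_Y(\{y:y_1>1\})=1$. Applying the auxiliary identity one more time with $\phi(y)=y_1^{\alpha}$ produces
\begin{equation*}
1=\nu_Y(\{y:y_1>1\})=\int s_1^{\alpha}\,\ov\Gamma_Y(ds)=c\int s_1^{\alpha}\,\Gamma_K(ds),
\end{equation*}
so $c^{-1}=\int s_1^{\alpha}\,\Gamma_K(ds)$. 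Substituting $\ov\Gamma_Y=c\,\Gamma_K$ in the previous display yields \eqref{A.10}. The special case $h=g$ is immediate by setting $g=h$, giving $h(As)^{\alpha}$ under both integrals and hence the stated formula for $C_\Gamma^h(A)$; the remark about the sum norm follows by specialising $h$ in this identity.

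The main obstacle I anticipate is the careful handling of the case $\alpha<1$ in which $h^{\alpha-1}$ can blow up near zeros of $h$: the auxiliary identity then requires either approximation by bounded $\alpha$-homogeneous functions (truncating $\phi$ at some level $M$ and letting $M\to\infty$ by monotone convergence) or restricting integration away from $\{h(As)=0\}$; combined with the Fubini interchange justified by \eqref{momentA} this should pose no genuine difficulty, but it is the one place where one must be attentive.
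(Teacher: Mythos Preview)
Your proof is correct and follows essentially the same route as the paper: pass to the image exponent measure $\nu_K=\nu\circ K^{1/\alpha}$ of the scaled vector, use the polar decomposition \eqref{spectral} to obtain the unnormalised spectral measure $\ov\Gamma_K$, and then identify the normalising constant via $\nu_K(\{y_1>1\})=1$, which yields $c^{-1}=\int s_1^{\alpha}\,d\Gamma_K(s)$. The paper's proof is a terse version of exactly this argument; your explicit auxiliary identity $\int\phi\,d\ov\Gamma=\nu(\{\phi>1\})$ for $\alpha$-homogeneous $\phi$ and your remark about handling $\alpha<1$ add helpful detail but do not change the method.
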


\begin{proof}[\textit{Proof of Theorem~\ref{prop.estimator.basic}}]
Using the notation for the image measure $\nu_K:=\nu\circ K^{1/\alpha}$ and  $\ov\Gamma_K$ gives \eqref{A.10}.
The relation between $\ov\Gamma_K$ and $\Gamma_{K}$, see \eqref{spectral}, is given by 
$\Gamma_{K}=\frac{\ov\Gamma_K}{(\nu_K(\|x\|>1))^{-1}}$. 
Since
\begin{gather}\label{norming}
\frac{1}{\nu_K(\|x\|>1)}=\lim_{t\to \infty}\frac{ \pr{(K^{-1/\alpha}X)_1>t  }}{\pr{ \|K^{-1/\alpha}X\|>t}}= \mu_K(x_1>1)= \int_{\mathbb{S}_+^{d-1}}  s_1^{\alpha} d\Gamma_{K}(s),
\end{gather}
 the assertion follows immediately. 
 {Note that the choice of the first component in \eqref{norming} is arbitrary, since $\nu_K(x_j>1)=1$ for all $j=1,\dots,d$.}
\end{proof}

As a last step in this section, we present the proof of Theroem~\ref{pr:riskcontvar}.

\begin{proof}[\textit{Proof of Theorem~\ref{pr:riskcontvar}}]
Similarly as in \citet{Tasche2008, kalkbrener}, the risk contributions in our setting are based on Euler's Theorem for 1-homogeneous functions.
Referring to the constant $C^S$ from \eqref{constantsIS}, we define a function
\begin{gather}
c: \R^{q}\rightarrow \R, x=(x_1,\dots,x_q)^{\top}\mapsto \Big( \E  \int_{\mathbb{S}_+^{d-1}}  \| \operatorname{diag}(x_1,\dots, x_q) As\|^{\alpha}\ov\Gamma(ds) \Big)^{1/\al}
\end{gather}
which is obviously 1-homogeneous, hence, Euler's Theorem applies and  we can write
\begin{gather}
c(x)= \sum_{i=1}^{q} x_i \frac{\partial}{\partial x_i}c(x),\quad x\in\R^q.
\end{gather}
We compute the partial derivatives for $i=1,\dots,q$ and the sum norm  by
\begin{align*}
&\frac{\partial}{\partial x_i}c(x)=  \frac{\partial}{\partial x_i}\Big( \E \int_{\mathbb{S}_+^{d-1}} \| \operatorname{diag}(x_1,\dots, x_q) As\| ^{\alpha} \ov\Gamma(ds)\Big)^{1/\al}\\
&= \frac{1}{\al} \Big( \E \int_{\mathbb{S}_+^{d-1}} \| \operatorname{diag}(x_1,\dots, x_q) As\| ^{\alpha} \ov\Gamma(ds)\Big)^{1/\al-1}\\ 
&\quad \times \Big( \E \int_{\mathbb{S}_+^{d-1}} {\al} \big( \| \operatorname{diag}(x_1,\dots, x_q) As\|   \big)^{{\alpha}-1}  (As)_{i} \, \ov\Gamma(ds) \Big),
\end{align*}
and get in particular
\begin{gather}
\frac{\partial}{\partial x_i}c(1,\dots,1)= (C^{S})^{1/\al-1} \Big( \E \int_{\mathbb{S}_+^{d-1}}
\| A\| ^{\alpha-1}   (As)_{i} \, \ov\Gamma(ds)  \Big).
\end{gather}
All in all, we have for $\VaR_{1-\gamma}(Y_i|\|Y\|)= \gamma^{-1/\al}(C^{S})^{1/\al-1} \Big( \E \int_{\mathbb{S}_+^{d-1}}
\| A\| ^{\alpha-1}   (As)_{i}  \ov\Gamma(ds)  \Big) $
the full asymptotic risk allocation property as stated in \eqref{fullAllocation}. 
This translates to \eqref{alloc} for the business lines with risk contributions \eqref{riskcontvari}.\\
The same arguments apply to find the risk contributions of $\CoTE$. 
\end{proof}

\subsection{Statistical background}\label{subsec:estimation_procedure}

Theorem~\ref{prop.estimator.basic} suggests a semi-parametric estimation procedure for the VaR constant $C_{\Gamma}^{h}(A)$ from \eqref{VaR.const} in two steps. Before we explain the two steps in details, we want to
recall also that the constant  $C_{\Gamma}^{h}(A)$ is a key quantity to be estimated as then the VaR from \eqref{varconstants} and the CoTE from \eqref{coteconstants} can be directly computed by scaling the risk measures by the confidence level $\gamma$ and the tail index $\alpha$.

For the first step, we estimate the marginal parameters; i.e., the tail index $\alpha$ as well as the constants {$K_j,\ j=1,\dots,d$, from \eqref{Pareto}}. 
Given the marginal estimators, in a second step we estimate the dependence structure determined by both, the random matrix $A$ as well as the spectral measure $\Gamma_K$ of the scaled observations given by $K^{-1/\al}X$.

Throughout we assume independent multivariate observations $x^{(1)},\dots,x^{(n)}$  of a vector 
$X\in\MRV(-\al)$, representing here the weekly event types $ET=(ET_1,\dots,ET_7)$. 

\subsubsection*{Marginal estimation}

We use the well--known POT--method (\textit{peaks--over--threshold}) as the basis for the marginal estimation; for details see e.g.  \citet[Section~6.5.1]{EKM1997}. 
	The tail of the  generalized Pareto distribution (gpd) is denoted by $\ov G_{\xi,\beta, u}$.
	In the heavy-tailed case it has positive shape parameter $\xi$, which is connected to the tail index $\alpha$ by  $\xi=1/\alpha$, scale parameter $\beta>0$ and location parameter (threshold) $u$.
	It has the following form:
\begin{gather}\label{gpd}
\ov G_{\xi,\beta, u}(t)= \left(  1+ \frac{\xi(t-u)}{\beta}  \right)^{-1/\xi} = \left(  1+ \frac{t-u}{\alpha\beta}  \right)^{-\alpha}
=\ov G_{1/\alpha,\beta, u}(t),\quad  t > u.
\end{gather}
To find the location parameters $u_j$ for $j=1,\dots, d$; i.e., the threshold value above which a good approximation by a generalized Pareto distribution is granted.
We use an automatic approach taking a bias--variance trade-off into account as presented in \citet{mager_paper}, based on a robust method suggested in \citet{Dupuis}. 

Once having found the optimal thresholds for each event type component, only those observations are used for ML-estimation of $\xi$ and $\beta$ that are above this threshold. 
This certainly leads to different estimates for $\xi_1,\dots,\xi_d$ as well as for $\beta_1,\dots \beta_d$. 
Since for multivariate regular variation one main assumption is that all event types have the same tail indices, we have to go one step further. 
One way to achieve this requirement would be to take the different estimates $\wh\xi_1,\dots,\wh\xi_d $ and transform the marginals to, for instance, standard Fr\'echet. 
This is often done, but it has the disadvantage that it gives all marginals the same form, making small components large and large components small, overestimating the influence of small components for a tail risk measure and underestimating large components.
Based on the fact that the estimates $\wh\xi_1,\dots,\wh\xi_d$ are not so far apart (all components have finite first moment, and only one has finite variance), 
we take the mean $\wh\xi_{mean}=\frac1d \sum_{j=1}^d\wh\xi_j $ of all those estimated shape parameters $\wh\xi_j$. 
Notably, we do have the chance to correct this step of obvious wrong-doing thanks to the dependence between the estimators for the scale and shape parameters; see \citet[Section~6.5.1]{EKM1997}. 
Consequently, given the estimated mean $\wh\xi_{mean}$, we re-estimate the scale paramters $\beta_1,\dots, \beta_d $ again by a (conditional) ML-procedure. 
This results in a shape estimate $\wh\xi_{mean}$ and a scale estimate $\wh\beta_j$ all based on a thresold value $u_j$ for each marginal $j$.  

The transformation of the gpd estimates into estimates for the tail index $\al$ and $K_j$ for $j=1,\dots,d$ is a simple transformation.
The tail index estimator  $\wh\al=1/\wh\xi_{mean}$ and for the $\wh K_j$ we 
use \citet[Theorem~3.4.13(b)]{EKM1997}, which gives the following approximation:
\begin{gather}\label{gpd_app1}
\pr{X_j>t} \approx  \pr{ X_j>u_j}\overline{G}_{\xi,\beta_j,u_j}(t), \quad  t> u_j,
\end{gather}for $j=1,\dots, d$. 
Then, as $\overline{G}_{1/\alpha,\beta_j,u_j}(t) \sim  (\alpha\beta_j)^{\alpha} t^{-\alpha}$, we have
\begin{gather}\label{gpd_app2}
\pr{X_j>t} \approx  \pr{ X_j>u_j} (\alpha\beta_j)^{\alpha} t^{-\alpha}, \quad  t> u_j.
\end{gather} 
Estimating the probability on the right-hand side by its empirical counterpart, we obtain for $j=1,\dots,d$,
\begin{gather}\label{K_POT}
\wh K_j= ( \wh{\alpha} \wh\beta_j)^{\wh{\alpha}} \frac{1}{n}\sum_{i=1}^{n}\mathds{1}\{ x_{j}^{(i)} > u_j\}.
\end{gather}

Furthermore, after having estimated the tail index $\al$, we can also use \eqref{defKjs} to estimate the $K_j$ for $j=1,\dots,d$, which works as follows.
Choose $b(t)=t^{1/\al}$ as the natural scaling function in \eqref{exponentmeasure}. 
Then define the estimators
\begin{gather}\label{K-estimatorMVR}
\wh K_j:= \left(\frac{n}{k_j}\right)^{\wh{\al}} \frac{1}{n}\sum_{i=1}^n \mathds{1}\{ x_j^{(i)} \geq  \frac{n}{k_j} \}
\end{gather} 
Here the number $k_j$ of upper order statistics in the $j$th component of the vector ET used for the estimation in \eqref{K-estimatorMVR} and the threshold $u_j$ used for the POT-estimation are related by the fact that $u_j$ is taken as the smallest order statistic $k_j$, which allows for a good approximation in \eqref{gpd_app1}.
To obtain a more robust estimator for $K_j$, we take the mean value over all estimates corresponding to the order statistics in a stable range.

\subsubsection*{Estimation of the risk constants}

The risk constants involve besides the tail index $\al$ and the $K_j$ for $j=1,\dots,d$ also the spectral measure $\Gamma$.
As in Theorem~\ref{prop.estimator.basic} we scale the observations by defining $\wh{K}^{1/\wh{\alpha}} =diag( \wh K_1^{1/\wh{\alpha}},\dots, \wh K_1 ^{1/\wh{\alpha}}  )$ and taking the scaled observations 
\begin{gather}\label{scaled_observations}
y^{(i)}=\wh K^{-1/\wh{\alpha}}x^{(i)}, \quad i=1,\dots,n,
\end{gather} 
such that the sample $y^{(1)},\dots,y^{(n)})$ originate from a multivariate regularly varying distribution with spectral measure $\Gamma_K$ as in Theorem~\ref{prop.estimator.basic}.

Denote the order statistics of the norms of the vectors by 
\begin{gather}\label{radial_parts}
\|y^{(\cdot)}\|_{(1)}\le \dots\le  \|y^{(\cdot)}\|_{(n)}; 
\end{gather}
Since the spectral measure exists only as a limit measure, among all $\|y^{(1)}\|,\dots, \|y^{(n)}\|$, we choose a sufficiently high order statistics, say  $\|y^{(\cdot)}\|_{(k)}$, to ensure a good approximation to the spectral measure.
Then 
\begin{gather}\label{estimatorRho}
\wh\Gamma_{K}(\cdot)= \frac{\sum_{i=1}^{n} \mathds{1}\{(\|y^{(i)}\|, y^{(i)}/\|y^{(i)}\|) \in  [\|y^{(\cdot)}\|_{(k)},\infty] \times \cdot    \} }{ \sum_{i=1}^{n} \mathds{1}\{\|y^{(i)}\| \in  [\|y^{(\cdot)}\|_{(k)},\infty]   \} },
\end{gather}
see \citet[eq. (9.23)]{Resnick2007} and subsequent comments.  
Given marginal estimates $\wh{\alpha}_{mean}$ and $\wh K_j$ for $j=1,\dots,d,$ and given realizations of the random network $A^{(1)},\dots,A^{(n)}$, we  estimate $C_\Gamma^h(A)$ by
\begin{gather}\label{estimatorC}
\wh C_{\Gamma}^{h}(A)= \frac{1}{n}\sum_{i=1}^n \frac{\int_{\mathbb{S}_+^{d-1}}h(A^{(i)}s)^{\wh{\alpha}} d\wh\Gamma_{K}(s)} {\int_{\mathbb{S}_+^{d-1}} s_1^{\wh{\alpha}} \, d\wh\Gamma_{K}(s) }.
\end{gather}
Combining \eqref{estimatorRho} and \eqref{estimatorC} yields 
\begin{gather}\label{final_est_general}
 \wh C_{\Gamma}^{h}(A)= \frac{1}{n}\sum_{i=1}^n \frac{  \sum_{j=1}^n h(A^{(i)}y^{(j)}/\|y^{(j)}\|)^{\wh{\alpha}}\mathds{1}\{\|y^{(j)}\| \in  [\|y^{(\cdot)}\|_{(k)},\infty]  \} }{ \sum_{j=1}^n  (y^{(j)}/\|y^{(j)}\| )_1^{\wh{\alpha}}\mathds{1}\{ \|y^{(j)}\| \in  [ \|y^{(\cdot)}\|_{(k)},\infty]    \} }.
\end{gather}

For the particular case of $A$ being the identity matrix in $\R^d$, and aggregation function $h(x)=\|x\|_1=\sum_{j=1}^{d}|x_j|$ is the sum norm on $\R^d$, the estimator for the systemic constant $C^S$ as in \eqref{constantsIS} is given by 
\begin{gather}\label{est_ru}
\wh C^{S}=   \frac{  \sum_{j=1}^n \mathds{1}\{(\|y^{(j)}\| \in  [\|y^{(\cdot)}\|_{(k)},\infty]  \} }{ \sum_{j=1}^n  (y^{(j)}/\|y^{(j)}\| )_1^{\wh{\alpha}}\mathds{1}\{ \|y^{(j)}\| \in  [ \|y^{(\cdot)}\|_{(k)},\infty]    \} }
\end{gather}
This estimator has already been introduced in \citet{Mainik2010}, where for continuous spectral distribution  also properties like consistency and asymptotic normality have been studied.

If the function $h$ is chosen as the projection on the $i$-th coordinate, we have 
\begin{gather}\label{est_agent}
 \wh C^{i}(A)= \frac{1}{n}\sum_{l=1}^n \frac{  \sum_{j=1}^n (A^{(l)}y^{(j)}/\|y^{(j)}\|)_i^{\wh{\alpha}}\mathds{1}\{\|y^{(j)}\| \in  [\|y^{(\cdot)}\|_{(k)},\infty]  \} }{ \sum_{j=1}^n  (y^{(j)}/\|y^{(j)}\| )_1^{\wh{\alpha}}\mathds{1}\{ \|y^{(j)}\| \in  [ \|y^{(\cdot)}\|_{(k)},\infty]    \} }.
\end{gather}

In order to estimate the risk allocation constants $CA^i$ for the risk contributions from \eqref{riskcontvari}, we estimate the factor $$  \E  \int_{\mathbb{S}_+^{d-1}} \| As\| ^{\alpha-1}  (As)_{i}  \Gamma(ds)  $$ by
\begin{gather}\label{est_agent_riskcont}
 \frac{1}{n}\sum_{l=1}^n \frac{  \sum_{j=1}^n (A^{(l)}y^{(j)}/\|y^{(j)}\|)_i   \|A^{(l)}y^{(j)}/\|y^{(j)}\| \|^{\wh{\alpha}}\mathds{1}\{(\|y^{(j)}\| \in  [\|y^{(j)}\|_{(k)},\infty]  \} }{ \sum_{j=1}^n  (y^{(j)}/\|y^{(j)}\| )_1^{\wh{\alpha}}\mathds{1}\{ \|y^{(j)}\| \in  [ \|y^{(j)}\|_{(k)},\infty]    \} }.
\end{gather}
and combine it with \eqref{est_ru}.

\subsection*{Acknowledgement}

The authors thank Claudia Pasquini, Claudia Capobianco, and Vincenzo Bugge\`e from the Italian Database of Operational Losses (DIPO) and its Statistical Committee for their support. The views expressed in this paper are those of the authors and do not necessarily reflect the viewpoints of DIPO or the DIPO Statistical Committee. 
Sandra Paterlini acknowledges financial support from ICT COST Action IC1408.


\end{document}